\numberwithin{equation}{section}
\numberwithin{figure}{section}
\theoremstyle{plain}
\newtheorem{theorem}{Theorem}[section]
\newtheorem{lemma}[theorem]{Lemma}
\newtheorem{corollary}[theorem]{Corollary}
\newtheorem{proposition}[theorem]{Proposition}
\newtheorem{definition}[theorem]{Definition}
\theoremstyle{remark}
\newtheorem{remark}[theorem]{Remark}
\newtheorem*{lem*}{\textsc{Lemma}}
\newtheorem*{cor*}{\textsc{Corollary}}
\newtheorem*{exer*}{\textsc{Exercise}}
\newtheorem*{con*}{\textsc{Conjecture}}
\newtheorem*{thm*}{\textsc{Theorem}}
\newcommand{\beq}{\begin{equation}}
\newcommand{\eeq}{\end{equation}}
\newcommand\Psix{\textrm{P}_{\textrm{VI}}}
\newcommand\B[1]{\ensuremath{\scalebox{0.5}{$($}{#1}\scalebox{0.5}{$)$}}}
\newcommand{\orcidauthorA}{0000-0001-7504-4444}
\newcommand{\orcidauthorB}{0000-0002-0461-7580}
\title[On the monodromy manifold of $q\Psix$]{On the monodromy manifold of $q$-Painlev\'e VI and its Riemann-Hilbert problem}
\date{}
\author[Nalini Joshi]{Nalini Joshi}
\thanks{NJ's ORCID ID is \orcidauthorA. NJ's research was supported by an
  Australian Research Council Discovery Project \#DP210100129.
}
\address{School of Mathematics and Statistics F07, The University of Sydney, NSW 2006, Australia}
\author{Pieter Roffelsen}
\thanks{PR's ORCID ID is \orcidauthorB.}
\email{nalini.joshi@sydney.edu.au}
\email{pieter.roffelsen@sydney.edu.au}
\subjclass[2020]{39A13, 33E17,34M50,39A45,47B39}
\begin{document}
\begin{abstract}
We study the $q$-difference sixth Painlev\'e equation ($q\Psix$) through its associated Riemann-Hilbert problem (RHP) and show that the RHP is always solvable for irreducible monodromy data. This enables us to identify the solution space of $q\Psix$ with a monodromy manifold for generic parameter values. We deduce this manifold explicitly and show it is a smooth and affine algebraic surface when it does not contain reducible monodromy. Furthermore, we describe the RHP for reducible monodromy data and show that, when solvable, its solution is given explicitly in terms of certain orthogonal polynomials yielding special function solutions of $q\Psix$.
\end{abstract}
\maketitle

\tableofcontents

\section{Introduction}\label{sec:intro}

Despite widespread knowledge of how a Riemann-Hilbert formulation allow us to describe the solutions of the Painlev\'e equations, the corresponding description remains incomplete for discrete Painlev\'e equations. In this paper, we provide such a formulation for an important equation known as the $q$-difference sixth Painlev\'e equation and show that (under certain conditions) the corresponding Riemann-Hilbert problem is solvable, the resulting monodromy mapping is bijective, and the monodromy manifold is an algebraic surface given by an explicit equation.

Assuming $q\in\mathbb{C}$, $0<|q|<1$, and given nonzero parameters $\kappa=(\kappa_0,\kappa_t,\kappa_1,\kappa_\infty)\in\mathbb C^4$, the system known as the $q$-difference sixth Painlev\'e equation is
\begin{equation}\label{eq:qpvi}
q\Psix:\ \begin{cases}\  f\overline{f}&=\dfrac{(\overline{g}-\kappa_0\,t)(\overline{g}-\kappa_0^{-1}t)}{(\overline{g}-\kappa_\infty)(\overline{g}-q^{-1}\kappa_\infty^{-1})},\\
  \   g\overline{g}&=\dfrac{(f-\kappa_t\,t)(f-\kappa_t^{-1}t)}{q(f-\kappa_1)(f-\kappa_1^{-1})},
    \end{cases} 
\end{equation}
where $f,g:T\rightarrow \mathbb{CP}^1$ are complex functions defined on a domain $T$ invariant under multiplication by $q$ and we have used the abbreviated notation $f=f(t)$, $g=g(t)$, $\overline{f}=f(qt)$, $\overline{g}=g(qt)$, for $t\in T$. We will refer to Equation \eqref{eq:qpvi} by the abbreviation $q\Psix$.

 $q\Psix$ was first derived by Jimbo and Sakai \cite{jimbosakai} as the compatibility condition of a pair of linear $q$-difference systems. They showed that this formulation could be interpreted as a $q$-difference version of isomonodromic deformation, in close parallel to the role played by the classical sixth Painlev\'e equation as the isomonodromic condition for a rank-two Fuchsian system with four regular singular points at $0$, $1$, $\infty$, $t$, where $t$ is allowed to move in $\mathbb C\setminus \{0,1\}$ \cites{fuchs,jimbomiwauenoI}. 

The sixth Painlev\'e equation ($\Psix$) plays an important role in many settings in mathematics and physics. We mention the construction of self-dual Einstein metrics in general relativity \cite{tod}, classification of 2D-topological field theories \cite{dub}, mirror symmetry, and quantum cohomology \cite{manin} as noteworthy examples. 

Letting $q\rightarrow 1$ in $q\Psix$, with $\kappa_j=q^{k_j}$ for $j=0,t,1,\infty$, under the assumption that $f\rightarrow u$ and $g\rightarrow (u-t)/(u-1)$, the system reduces to $\Psix$:
\begin{align*}
u_{tt}=&\left(\frac{1}{u}+\frac{1}{u-1}+\frac{1}{u-t}\right)\frac{u_t^2}{2}-\left(\frac{1}{t}+\frac{1}{t-1}+\frac{1}{u-t}\right)u_t\\
&+\frac{u(u-1)(u-t)}{t^2(t-1)^2}\left(\alpha+\frac{\beta t}{u^2}+\frac{\gamma(t-1)}{(u-1)^2}+\frac{\delta t(t-1)}{(u-t)^2}\right),
\end{align*}
where
\begin{equation*}
    \alpha=\frac{(2 k_\infty+1)^2}{2},\quad \beta=-2 k_0^2,\quad \gamma =2k_1^2,\quad \delta=\frac{1-4 k_t^2}{2}.
\end{equation*}

Due to its relation to $\Psix$, the $q$-difference equation $q\Psix$ has drawn increasing interest in recent times. Mano \cite{manoqpvi} derived the generic leading order asymptotics of solutions near $t=0$ and $t=\infty$ and gave an implicit solution to the corresponding nonlinear connection problem.  Jimbo et al. \cite{jimbonagoyasakai} extended Mano's asymptotic result near $t=0$ to an explicit asymptotic expansion beyond all orders for the generic solution. They obtained this asymptotic representation through an interesting connection of $q\Psix$ with conformal field theory, analogous to the one for $\Psix$ established by Gamayun et al. \cite{lisovyy2012}.

In this paper, we study $q\Psix$ via the Jimbo-Sakai linear problem \cite{jimbosakai}. Using Birkhoff's theory \cite{birkhoffgeneralized1913}, we define an associated Riemann-Hilbert problem (RHP), which captures the general solution of $q\Psix$. The jump matrices of this RHP across a single closed contour form a corresponding monodromy manifold that is a focal point of this paper.

Recently, this monodromy manifold was the object of an extensive study by Ohyama et al. \cite{ohyamaramissualoy}, who showed that such a manifold forms an algebraic surface. Furthermore, they conjectured, see \cite{ohyamaramissualoy}[Conjecture 7.10], that the algebraic surface is smooth, under additional conditions.
In this paper, we prove a stronger version of this conjecture,
see Theorem \ref{thm:main_smooth} and Remark \ref{remark:ohyamaetal}.

Consider the general class of solutions $(f,g)$ of $q\Psix$ defined on a domain $T$ given by a discrete $q$-spiral, i.e., $T=q^\mathbb{Z}t_0$, for some $t_0\in\mathbb{C}^*$.  The deformation of the Jimbo-Sakai linear problem (see \S \ref{subsec:isomonodromic_deformation}) yields an auxiliary equation associated with $q\Psix$
\begin{equation}\label{eq:auxiliary}
    \frac{\overline{w}}{w}=\kappa_\infty\frac{q \kappa_\infty\overline{g}-1}{\overline{g}-\kappa_\infty}.
\end{equation}
We refer to $(f,g)$ as a solution of $q\Psix(\kappa,t_0)$ and call the triplet $(f,g,w)$ a solution of $q\Psix^{\text{aux}}(\kappa,t_0)$.

Starting with an initial value of $(f,g)$ in $\mathbb{C}^*\times \mathbb{C}^*$, and iterating in $t$,  $q\Psix$ can become apparently singular when $(f,g)$ takes the value of one of the following eight base-points,
\begin{equation}
    \begin{aligned}
    &b_1=(0,q^{-1}\kappa_0^{+1}t), & &b_3=(\kappa_t^{+1}t,0), & &b_5=(\kappa_1^{+1},\infty), & &b_7=(\infty,\kappa_\infty^{+ 1}),\\
    &b_2=(0,q^{-1}\kappa_0^{-1}t), & &b_4=(\kappa_t^{-1}t,0), & &b_6=(\kappa_1^{-1},\infty), & &b_8=(\infty,\kappa_\infty^{-1}q^{-1}).
\end{aligned}\label{eq:basepoints}
\end{equation}
Each of these can be resolved through a blow up, so that the iteration is once again well-defined \cite{s:01}. There are, however, formal solutions of equations \eqref{eq:qpvi}, which never take a value in $\mathbb{C}^*\times \mathbb{C}^*$. We exclude such solutions from our consideration.

\subsection{Main results}
The main results of this paper are given by Theorems \ref{thm:main_solvability}, \ref{thm:main_moduli}, \ref{thm:main_smooth} and \ref{thm:main_affine} in Section \ref{sec:results}. Throughout the paper, it is assumed that the parameters $\kappa$ and $t_0$ satisfy the {\em non-resonance} conditions,
\begin{equation}\label{eq:non_res}
\kappa_0^2,\kappa_t^2,\kappa_1^2,\kappa_\infty^2\notin q^\mathbb{Z},\qquad (\kappa_t\kappa_1)^{\pm 1},
(\kappa_t/\kappa_1)^{\pm 1}\notin t_0q^\mathbb{Z}.
\end{equation}
As in Ohyama et al. \cite{ohyamaramissualoy}, the {\em non-splitting} conditions
\begin{subequations}\label{eq:intro_irreducibleparameter}
\begin{align}
&\kappa_0^{\epsilon_0}\kappa_t^{\epsilon_t} \kappa_1^{\epsilon_1} \kappa_\infty^{\epsilon_\infty}\notin q^\mathbb{Z},\label{eq:intro_irreducibleparameter1}\\
&\kappa_0^{\epsilon_0} \kappa_\infty^{\epsilon_\infty}\notin t_0q^\mathbb{Z},\label{eq:intro_irreducibleparameter2}
\end{align}
\end{subequations}
where $\epsilon_j\in\{\pm 1\}$, $j=0,t,1,\infty$, also play an important role. The monodromy manifold contains reducible monodromy when one or more of these conditions are violated -- see Lemma \ref{lem:irreducible}.

The RHP corresponding to $q\Psix$ is given by Definition \ref{def:RHPmain}. Our first main result, Theorem \ref{thm:main_solvability}, shows that the RHP with irreducible monodromy is always solvable. This has important ramifications for the mapping that sends solutions of $q\Psix$ to points on the monodromy manifold, which we will refer to as the {\em monodromy mapping}. In particular, Corollary \ref{cor:monodromy_mapping} shows that the monodromy mapping is bijective when the non-splitting conditions are satisfied. 

The RHP may be solvable in some cases of reducible monodromy. In Section \ref{subsection:reducible}, we show that in such cases, the RHP is solved explicitly in terms of certain orthogonal polynomials yielding special function solutions of $q\Psix$.

Our second main result, Theorem \ref{thm:main_moduli}, constructs an embedding of the monodromy manifold into $(\mathbb{CP}^1)^4/\mathbb{C}^*$,
 where the quotient is taken with respect to scalar multiplication. The image of this embedding is described as the zero set of a polynomial, given explicitly in Definition \ref{def:modulispace}, minus a curve.

 This embedding allows us to study algebro-geometric properties of the monodromy manifold. Our third main result,  Theorem \ref{thm:main_smooth}, focuses on the singularities of the monodromy manifold and proves that it is smooth if and only if it excludes reducible monodromy, i.e., if and only if the non-splitting conditions hold true.

Finally, our fourth main result, Theorem \ref{thm:main_affine}, identifies the monodromy manifold with an explicit affine algebraic surface when the non-splitting conditions are satisfied.

\subsection{Notation}\label{sec:not}
Here, we briefly describe the notation used in this paper. The symbol $\sigma_3$ is the well-known Pauli matrix $\sigma_3=\operatorname{diag}(1,-1)$.
The $q$-Pochhammer symbol is the (convergent) product
\begin{equation*}
(z;q)_\infty=\prod_{k=0}^{\infty}{(1-q^kz)}\qquad (z\in\mathbb{C}).
\end{equation*}
Note that the entire function $(z;q)_\infty$ satisfies
\begin{equation*}
(qz;q)_\infty=\frac{1}{1-z}(z;q)_\infty,
\end{equation*}
with $(0;q)_\infty=1$ and, moreover, possesses simple zeros at $q^{-\mathbb{N}}$. The $\mathit{q}$-theta function 
\begin{equation}\label{eq:thetasym}
\theta_q(z)=(z;q)_\infty(q/z;q)_\infty\quad (z\in \mathbb{C}^*),\quad \mathbb{C}^*:=\mathbb{C}\setminus\{0\},
\end{equation}
is analytic on $\mathbb{C}^*$, with essential singularities at $z=0, \infty$, and has simple zeros on the $q$-spiral $q^\mathbb{Z}$. It satisfies
\begin{equation}\label{eq:qtheta_identities}
\theta_q(qz)=-\frac{1}{z}\theta_q(z)=\theta_q(1/z).
\end{equation}
For $n\in\mathbb{N}^*$, we use the common abbreviation for repeated products of these functions 
\begin{align*}
\theta_q(z_1,\ldots,z_n)&=\theta_q(z_1)\cdot \ldots\cdot \theta_q(z_n),\\
(z_1,\ldots,z_n;q)_\infty&=(z_1;q)_\infty\cdot\ldots\cdot (z_n;q)_\infty.
\end{align*}

We will refer to the complex projective space $\mathbb C\mathbb P^1$ as $\mathbb P^1$ and, for positive integer $k$, denote the $k$-fold direct product $\mathbb P^1\times \ldots \times\mathbb P^1$ by $(\mathbb P^1)^k$. (We remind the reader that $\mathbb P^1\times \mathbb P^1$ is not the same space as $\mathbb P^2$.)

\subsection{Outline of the paper}
In Section \ref{sec:results}, we give the precise statements of the main results of the paper.
Section \ref{sec:linear_problem} is devoted to the Jimbo-Sakai linear system. Here, we renormalize the linear system of \cite{jimbosakai}
and describe the outcomes of Birkhoff's classical theory \cite{birkhoffgeneralized1913} for this system.
In Section \ref{sec:solvability}, we study the solvability of RHP I, defined in Definition \ref{def:RHPmain}, and prove Theorem \ref{thm:main_solvability}. Section \ref{sec:monodromy_surface} concerns the monodromy manifold and proofs of Theorems \ref{thm:main_moduli}, \ref{thm:main_smooth} and \ref{thm:main_affine} are given there.
 We conclude the paper with a conclusion in Section \ref{sec:conc}.

\subsection{Acknowledgments}
The authors thank Peter Forrester, Marta Mazzocco, Yousuke Ohyama and Andrea Ricolfi for stimulating discussions about topics related to the work presented in this paper.


\section{Detailed Statement of Results}\label{sec:results}
In order to state our main results, we recall the Jimbo-Sakai linear problem for $q\Psix$ and define the corresponding monodromy manifold and mapping in Section \ref{sec:linear_system}. In Section \ref{sec:rhp}, we formulate the associated RHP via Birkhoff's theory. In Section \ref{subsec:solvabilitymain} we state our first main result, Theorem \ref{thm:main_solvability}. Then, in Section \ref{sec:results_monodromy_surface}, we state our main results on the monodromy manifold, that is, Theorems \ref{thm:main_moduli}, \ref{thm:main_smooth} and \ref{thm:main_affine}.

\subsection{The Jimbo-Sakai linear system}\label{sec:linear_system}
Suppose $\kappa=(\kappa_0,\kappa_t,\kappa_1,\kappa_\infty)\in\mathbb C^4$, all nonzero, are given and $t\in T$ lies on a discrete $q$-spiral $T=q^\mathbb{Z}t_0$. Consider the linear system
\begin{align}
    Y(qz)&=A(z,t)Y(z),\label{eq:linear_problem}\\
    A(z,t)&=A_0(t)+z A_1(t)+z^2 A_2,\label{eq:matrix_polynomial}
\end{align}
where $A(z,t)$ is a $2\times 2$ matrix polynomial with determinant given by
\begin{equation}\label{eq:detA}
|A(z,t)|=(z-\kappa_t^{+1}t)(z-\kappa_t^{-1}t)(z-\kappa_1^{+1})(z-\kappa_1^{-1}),
\end{equation}
and assume that
\begin{equation}
    A_0(t)=H(t)\begin{pmatrix}
    \kappa_0^{+1}t & 0\\
    0 & \kappa_0^{-1}t \end{pmatrix}H(t)^{-1},\quad  A_2=\begin{pmatrix}
    \kappa_\infty^{+1} & 0\\
    0 & \kappa_\infty^{-1} \end{pmatrix}.\label{eq:diagonal}
\end{equation}
for an $H=H(t)\in GL_2(\mathbb{C})$. This is the Jimbo-Sakai linear problem \cite{jimbosakai}, which we have scaled to remove redundant parameters (see Section \ref{subsec:normalising_linear} for details).
Throughout this paper we assume that the parameters $\kappa$ and $t_0$ satisfy the non-resonance conditions \eqref{eq:non_res}, which ensure that the linear problem is fully non-resonant (see \cite{joshiroffelseniv}[Definition 1.1]).

By Carmichael \cite{carmichael1912},  the linear system \eqref{eq:linear_problem} has solutions $Y_0(z,t)$ and $Y_\infty(z,t)$ respectively given by convergent series expansions around $z=0$ and $z=\infty$ of the following form,
\begin{subequations}\label{eq:linear_sys_solutions}
\begin{align}
Y_0(z,t)&=z^{\log_q(t)}\Psi_0(z,t)z^{k_0\sigma_3},& \Psi_0(z,t)&=H(t)+\sum_{n=1}^\infty z^n M_n(t),\\
Y_\infty(z,t)&=z^{\log_q(z)-1}\Psi_\infty(z,t) z^{k_\infty \sigma_3},& \Psi_\infty(z,t)&=I+\sum_{n=1}^\infty z^{-n} N_n(t),
\end{align}
\end{subequations}
where $q^{k_j}=\kappa_j$ for $j=0,\infty$. The matrix functions $\Psi_\infty(z,t)$ and $\Psi_0(z,t)^{-1}$ extend to single-valued analytic functions in $z$ on $\mathbb{P}^1\setminus\{0\}$ and $\mathbb{C}$ respectively. Furthermore, their determinants are explicitly given by
\begin{subequations}\label{eq:solutions_det}
\begin{align}
            &|\Psi_\infty(z,t)|=\left(\kappa_t^{+1}\frac{qt}{z},\kappa_t^{-1}\frac{qt}{z},\kappa_1^{+1}\frac{q}{z},\kappa_1^{-1}\frac{q}{z};q\right)_\infty,\label{eq:solutions_det1}\\
    &|\Psi_0(z,t)|^{-1}=|H|^{-1}\left(\kappa_t^{+1}\frac{z}{t},\kappa_t^{-1}\frac{z}{t},\kappa_1^{+1}z,\kappa_1^{-1}z;q\right)_\infty.\label{eq:solutions_det2}
\end{align}
\end{subequations}

A central object of study in this paper is the {\em connection matrix}
\begin{equation*}
    C(z,t):=\Psi_0(z,t)^{-1}\Psi_\infty(z,t).
\end{equation*}
This matrix is single-valued in $z$ on $\mathbb{C}^*$ and is related to Birkhoff's connection matrix
\begin{equation*}
    P(z,t):=Y_0(z,t)^{-1}Y_\infty(z,t),
\end{equation*}
by
\begin{equation*}
    P(z,t)=z^{\log_q(z/qt)} z^{-k_0\sigma_3} C(z,t)z^{k_\infty\sigma_3}.
\end{equation*}
For our purposes, it is more convenient to work with $C(z,t)$, rather than $P(z,t)$, due to its single-valuedness.
 We will also refer to the connection matrix $C(z,t)$ as the {\em monodromy} of the linear system \eqref{eq:linear_problem}.

 For any fixed $t$, $C(z,t)$ has the following analytic characterisation in $z$.
\begin{enumerate}
    \item It is a single-valued analytic function in $z\in\mathbb{C}^*$.
    \item It satisfies the $q$-difference equation
    \begin{equation*}
        C(qz,t)=\frac{t}{z^2}\kappa_0^{\sigma_3}C(z,t)\kappa_\infty^{-\sigma_3}.
    \end{equation*}
    \item Its determinant is given by
    \begin{equation*}
    |C(z,t)|=c\, \theta_q\left(\kappa_t^{+1}\frac{z}{t},\kappa_t^{-1}\frac{z}{t},\kappa_1^{+1}z,\kappa_1^{-1}z\right),
    \end{equation*}
    for some $c\in\mathbb{C}^*$.
\end{enumerate}
We correspondingly make the following definition.
\begin{definition}\label{def:connection_matrix_space}
We denote by $\mathfrak{C}(\kappa,t)$, for any fixed $t\in\mathbb{C}^*$, the set of all $2\times 2$ matrix functions satisfying properties (1)-(3) above.
\end{definition}

Next, we consider deformations of the linear system \eqref{eq:linear_problem}, as $t\rightarrow qt$, which leave the matrix function $P(z,t)$ invariant, i.e. such that $P(z,qt)=P(z,t)$, which is equivalent to
\begin{equation*}
    C(z,qt)=z \,C(z,t). 
\end{equation*}
We call such a deformation isomonodromic.

Jimbo and Sakai \cite{jimbonagoyasakai} showed that, upon introducing the following coordinates\footnote{See equations \eqref{eq:param1} for a full parametrisation of A with respect to $\{f,g,w\}$.} $(f,g,w)$ on $A$,
\begin{subequations}\label{eq:coordinates_linear}
\begin{align}
    A_{12}(z,t)&=\kappa_\infty^{-1} w(z-f),\\
    A_{22}(f,t)&=q(f-\kappa_1)(f-\kappa_1^{-1})g,
\end{align}
\end{subequations}
isomonodromic deformation of the linear system \eqref{eq:linear_problem} is locally equivalent to $(f,g,w)$ satisfying $q\Psix^{aux}(\kappa,t_0)$. Building on this, we prove the following lemma in Section \ref{subsec:isomonodromic_deformation}.
\begin{lemma}\label{lem:isomonodromic}
Let $(f,g,w)$ be any solution of $q\Psix^{aux}(\kappa,t_0)$ and denote
\begin{equation}
\mathfrak{M}=\left\{m\in\mathbb{Z}:(f(q^mt_0),g(q^mt_0))\neq (\infty,\kappa_\infty)\right\}.
\end{equation}
Then, the linear system $A(z,t)$ is regular in $t$ on $q^\mathfrak{M} t_0$ and the corresponding connection matrix is given by
\begin{equation}\label{eq:C0definition}
    C(z,t)=z^m D(t)C_0(z),\quad (t=q^mt_0,m\in\mathfrak{M}),
\end{equation}
for a matrix $C_0(z)\in \mathfrak{C}(\kappa,t_0)$, unique up the left-multiplication by diagonal matrices. Here $D(t)$ is a diagonal matrix which may be eliminated from equation \eqref{eq:C0definition} by rescaling $H(t)\mapsto H(t)D(t)$ in equation \eqref{eq:diagonal}.
\end{lemma}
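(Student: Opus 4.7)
The plan is to promote the local Jimbo--Sakai isomonodromic equivalence, established in Section \ref{subsec:isomonodromic_deformation}, into a global statement along the spiral $q^{\mathbb{Z}}t_0$. First I would inspect the full parametrisation of $A(z,t)$ in terms of $(f,g,w)$, with \eqref{eq:coordinates_linear} supplemented by the remaining entries from Section \ref{subsec:normalising_linear}. The entries of $A$ are rational in $(f,g,w)$, and a direct case analysis shows that the only irremovable pole corresponds to the base-point $b_7=(\infty,\kappa_\infty)$ from \eqref{eq:basepoints}: every other apparent coordinate singularity (for instance $f\in\{0,\infty\}$ or $g\in\{0,\infty\}$ away from $b_7$) is removable by passing to a compatible chart. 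Hence $A(\cdot,t)$ is a bona fide matrix polynomial exactly when $t\in q^{\mathfrak{M}}t_0$.

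For such $t$, the non-resonance conditions \eqref{eq:non_res} together with Carmichael's theorem supply the Birkhoff fundamental solutions \eqref{eq:linear_sys_solutions}, and hence a well-defined $C(z,t):=\Psi_0(z,t)^{-1}\Psi_\infty(z,t)$ which lies in $\mathfrak{C}(\kappa,t)$ by \eqref{eq:solutions_det}. Applying the Jimbo--Sakai equivalence to our global solution $(f,g,w)$, the deformation is isomonodromic on every consecutive pair $t,qt\in q^{\mathfrak{M}}t_0$, i.e.\ $P(z,qt)=P(z,t)$ modulo the diagonal gauge ambiguity in $H(qt)$ relative to $H(t)$. Unpacking this via
\[
P(z,t)=z^{\log_q(z/qt)}\,z^{-k_0\sigma_3}\,C(z,t)\,z^{k_\infty\sigma_3},
\]
and using that diagonal matrices commute with $z^{k_0\sigma_3}$ and $z^{k_\infty\sigma_3}$, yields the shift relation
\[
C(z,qt)=z\,\delta(t)\,C(z,t)
\]
for some diagonal $\delta(t)\in GL_2(\mathbb{C})$.

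Iterating this shift along $q^{\mathfrak{M}}t_0$ starting from any fixed $m_0\in\mathfrak{M}$ produces a cumulative diagonal $D(t)$ and a candidate
\[
C_0(z):=z^{-m_0}\,D(q^{m_0}t_0)^{-1}\,C(z,q^{m_0}t_0).
\]
A short verification---using the commutation of diagonals with $\kappa_0^{\sigma_3}$ and $\kappa_\infty^{\sigma_3}$ together with the quasi-periodicity \eqref{eq:qtheta_identities} of $\theta_q$---confirms $C_0\in\mathfrak{C}(\kappa,t_0)$, independently of the choice of $m_0$ modulo left-diagonal multiplication. The elimination of $D(t)$ is then automatic: rescaling $H(t)\mapsto H(t)D(t)$ in \eqref{eq:diagonal} sends $\Psi_0\mapsto \Psi_0 D(t)$ and hence $C\mapsto D(t)^{-1}C$, which, by diagonality, cancels the factor appearing in \eqref{eq:C0definition}. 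The residual left-diagonal ambiguity of $C_0$ reflects precisely the freedom to post-multiply $H(t_0)$ by a diagonal matrix left open by \eqref{eq:diagonal}.

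The main obstacle I anticipate is the case $0\notin\mathfrak{M}$, in which $A(z,t_0)$ itself is singular: one must then define $C_0$ by extrapolation from some $m_0\in\mathfrak{M}$ and check that the resulting object still lies in $\mathfrak{C}(\kappa,t_0)$ and is independent of $m_0$ modulo left-diagonal multiplication. I expect this to reduce to showing that the singularities of $A$ at the excluded $t$-values are apparent at the level of the Birkhoff solutions---so that they affect only the connection coefficients and not the ambient structural data of $\mathfrak{C}$---which would allow the shift relation to be propagated through such indices consistently.
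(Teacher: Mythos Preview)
Your overall structure matches the paper's, and your treatment of the regularity of $A(z,t)$ and the elimination of $D(t)$ is fine, if sketchier than the paper's explicit blow-up analysis at each base-point. The genuine gap is precisely the one you flag at the end: bridging the indices where $\mathfrak{M}$ has a hole. Your iteration of the shift relation $C(z,qt)=z\,\delta(t)\,C(z,t)$ only works when \emph{both} $m$ and $m+1$ lie in $\mathfrak{M}$; when $m\in\mathfrak{M}$ but $m+1\notin\mathfrak{M}$, the matrix $A(z,t_{m+1})$ is genuinely singular, the Birkhoff solutions $\Psi_0(z,t_{m+1})$, $\Psi_\infty(z,t_{m+1})$ do not exist, and so there is no $C(z,t_{m+1})$ to serve as an intermediate. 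Your suggestion that the singularity is ``apparent at the level of the Birkhoff solutions'' is not the right mechanism---the solutions simply fail to exist at that time.

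The paper's resolution is concrete and different from what you anticipate: it observes that whenever $m+1\notin\mathfrak{M}$ one necessarily has $m+2\in\mathfrak{M}$, and then constructs a \emph{double-step} time-deformation matrix
\[
F(z,t)=\frac{z^4 I+z^3 F_1(t)+z^2 F_0(t)}{(z-\kappa_t q t)(z-\kappa_t^{-1}q t)(z-\kappa_t t)(z-\kappa_t^{-1}t)},
\]
for which both $Y_0$ and $Y_\infty$ satisfy $Y(z,q^2t)=F(z,t)Y(z,t)$. This yields $P(z,q^2t)=P(z,t)$ and hence $C(z,q^2 t_m)=z^2\Delta\,C(z,t_m)$ directly, skipping the singular intermediate time altogether. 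The compatibility $A(z,q^2t)=F(qz,t)A(z,t)F(z,t)^{-1}$ is then equivalent to two iterations of $q\Psix^{\text{aux}}$, and formally $F(z,t)=B(z,qt)B(z,t)$. This is the missing ingredient in your proposal.
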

In Lemma \ref{lem:isomonodromic}, we have the freedom of rescaling the auxiliary variable $w$ by $w\mapsto \widetilde{w}=dw$, $d\in\mathbb{C}^*$, which is equivalent to gauging the linear system by a constant diagonal matrix,
\begin{equation*}
    A(z,t)\rightarrow D^{-1}A(z,t)D,\quad D=\begin{pmatrix}
    1 & 0\\
    0 & d\\
    \end{pmatrix},
\end{equation*}
and thus rescaling the matrix $C_0(z)\in \mathfrak{C}(\kappa,t_0)$ as
\begin{equation*}
    C_0(z)\rightarrow C_0(z)D.
\end{equation*}

Hence, Lemma \ref{lem:isomonodromic} provides us with a mapping
\begin{equation}\label{eq:monodromy_mapping}
    (f,g)\rightarrow [C_0(z)],
\end{equation}
which associated to any solution $(f,g)$ of $q\Psix(\kappa,t_0)$ the equivalence class of $C_0(z)$ in $\mathfrak{C}(\kappa,t_0)$ quotiented by arbitrary left and right-multiplication by invertible diagonal matrices.
This warrants the following definition.

\begin{definition}\label{def:monodromy_mapping}
We define $\mathcal{M}(\kappa,t_0)$ to be the space of connection matrices $\mathfrak{C}(\kappa,t_0)$ quotiented by arbitrary left and right-multiplication by invertible diagonal matrices.
We refer to $\mathcal{M}(\kappa,t_0)$ as the monodromy manifold of $q\Psix(\kappa,t_0)$.

Correspondingly, we call the mapping \eqref{eq:monodromy_mapping}, which associates with any solution $(f,g)$ of $q\Psix(\kappa,t_0)$, a point on the monodromy manifold, the monodromy mapping.
\end{definition}
\begin{remark}
The space $\mathcal{M}(\kappa,t_0)$ was first introduced and studied in Ohyama et al. \cite{ohyamaramissualoy}[\textsection 4.1.1], where it is denoted as $\mathcal{F}$. Ohyama et al. \cite{ohyamaramissualoy} showed how this space can naturally be endowed with the structure of a complex algebraic variety, under certain assumptions of genericity including the non-resonance \eqref{eq:non_res} and non-splitting conditions \eqref{eq:intro_irreducibleparameter}. Compatible with this structure, we endow $\mathcal{M}(\kappa,t_0)$ with the structures of a complex manifold and algebraic variety, in Theorems \ref{thm:main_smooth} and \ref{thm:main_affine} respectively. The proof that these structures are compatible with those in \cite{ohyamaramissualoy} is postponed to the end of the paper, see Remark \ref{remark:compatible}.
\end{remark}

In Section \ref{sec:isorhp}, we prove the following lemma concerning injectivity of the monodromy mapping.
\begin{lemma}\label{lem:monodromy_mapping_injective}
The monodromy mapping, defined in Definition \ref{def:monodromy_mapping}, is injective.
\end{lemma}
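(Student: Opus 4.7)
The plan is to invert the isomonodromy correspondence of Lemma \ref{lem:isomonodromic}: recover the Jimbo--Sakai matrix $A(z, t_0)$ from its connection matrix (up to the stated ambiguities) and then read off $(f, g)$ at $t_0$. Suppose $(f_j, g_j)$, $j = 1, 2$, both map to the same class $[C_0] \in \mathcal{M}(\kappa, t_0)$. Applying Lemma \ref{lem:isomonodromic} to auxiliary lifts $(f_j, g_j, w_j)$ gives linear systems $A^{(j)}(z, t)$ and connection matrices $C_0^{(j)} \in \mathfrak{C}(\kappa, t_0)$ satisfying $C_0^{(2)}(z) = D_L\, C_0^{(1)}(z)\, D_R$ for some invertible constant diagonals $D_L, D_R$. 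The left factor $D_L$ is absorbed by reparametrising the normalising matrix $H^{(2)}(t_0)$ of the fundamental solution $Y_0^{(2)}$, while the right factor $D_R$ is absorbed by the diagonal gauge $A^{(2)} \mapsto D^{-1} A^{(2)} D$, equivalent (as noted after Lemma \ref{lem:isomonodromic}) to rescaling the auxiliary variable $w_2$, combined with a compensating left-diagonal adjustment of $H^{(2)}(t_0)$; neither operation alters $(f_2, g_2)$. After these adjustments we may assume $C^{(1)}(z, t_0) = C^{(2)}(z, t_0) =: C(z)$.

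Next, introduce
\[
F(z) := \Psi_\infty^{(1)}(z, t_0)\,\Psi_\infty^{(2)}(z, t_0)^{-1} = \Psi_0^{(1)}(z, t_0)\,\Psi_0^{(2)}(z, t_0)^{-1},
\]
the second equality following from $\Psi_\infty^{(j)} = \Psi_0^{(j)} C$. The first expression shows $F$ is meromorphic on $\mathbb{P}^1 \setminus \{0\}$ with $F(\infty) = I$, and its poles are confined to the zeros of $|\Psi_\infty^{(2)}(z, t_0)|$, which by \eqref{eq:solutions_det1} lie in $\bigl(\kappa_t^{\pm 1} t_0 \cup \kappa_1^{\pm 1}\bigr)\,q^{\mathbb{N}^*}$. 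The second expression shows $F$ is analytic at $z = 0$ and meromorphic on $\mathbb{C}$, with poles confined to the zeros of $|\Psi_0^{(1)}(z, t_0)|^{-1}$, which by \eqref{eq:solutions_det2} lie in $\bigl(\kappa_t^{\pm 1} t_0 \cup \kappa_1^{\pm 1}\bigr)\,q^{-\mathbb{N}}$. The non-resonance conditions \eqref{eq:non_res} precisely guarantee that these two candidate pole sets are disjoint: $\kappa_t^2, \kappa_1^2 \notin q^{\mathbb{Z}}$ rules out the intra-$\kappa_t$ and intra-$\kappa_1$ collisions, while $(\kappa_t \kappa_1)^{\pm 1}, (\kappa_t/\kappa_1)^{\pm 1} \notin t_0 q^{\mathbb{Z}}$ rules out the cross collisions. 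Hence $F$ is entire on $\mathbb{P}^1$, and Liouville together with $F(\infty) = I$ forces $F \equiv I$.

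From $\Psi_\infty^{(1)} = \Psi_\infty^{(2)}$ and the relation $A(z, t_0) = z^2\,\Psi_\infty(qz, t_0)\,\kappa_\infty^{\sigma_3}\,\Psi_\infty(z, t_0)^{-1}$ derived from \eqref{eq:linear_problem} applied to $Y_\infty$, we conclude $A^{(1)}(z, t_0) = A^{(2)}(z, t_0)$. The coordinates $(f, g)$ defined by \eqref{eq:coordinates_linear} are invariant under diagonal gauge, so $(f_1(t_0), g_1(t_0)) = (f_2(t_0), g_2(t_0))$; iterating $q\Psix$ forward and backward from $t_0$ propagates the equality to all of $T = q^{\mathbb{Z}} t_0$. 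The main obstacle is the disjointness analysis for the two pole sets of $F$: it requires carefully translating the non-resonance conditions \eqref{eq:non_res} into the explicit $q$-shift bookkeeping of the zeros of the determinants in \eqref{eq:solutions_det}.
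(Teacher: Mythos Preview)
Your argument is correct and amounts to the same Liouville-type uniqueness that underlies the paper's proof, but you package it differently. The paper routes the argument through the Riemann--Hilbert framework: it observes that the solution $(f,g)$ can be read off from the unique solution of RHP~I via the formulas \eqref{eq:rhptofgw}, and then simply checks that these formulas are invariant under replacing $C$ by $D_1 C D_2$. The required uniqueness is Lemma~\ref{lem:uniqueness}, whose proof is precisely your Liouville step, except that the potential pole sets never appear because the admissible contour $\gamma^{\B{m}}$ already separates the zeros of $|\Psi_\infty|$ from the poles of $\Psi_0$ by construction (Definition~\ref{def:contours}). Your approach bypasses the RHP and works directly with $\Psi_0,\Psi_\infty$, which means you must verify the disjointness of the two pole sets by hand; you do this correctly, and your observation that the non-resonance conditions~\eqref{eq:non_res} are exactly what is needed is the content hidden in the paper's definition of an admissible contour. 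A further difference is that the paper obtains $(f,g)$ at every $m\in\mathfrak{M}$ simultaneously from the RHP, whereas you establish equality at a single time and then appeal to the well-posedness of the $q\Psix$ iteration to propagate it.

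One small gap: you work at $t_0$, but Lemma~\ref{lem:isomonodromic} only guarantees regularity of $A^{(j)}(z,t)$ for $t\in q^{\mathfrak{M}_j}t_0$, and nothing ensures $0\in\mathfrak{M}_1\cap\mathfrak{M}_2$. This is easily repaired by instead working at any $m$ in that intersection; since each complement $\mathbb{Z}\setminus\mathfrak{M}_j$ contains no two consecutive integers, the intersection is certainly nonempty (indeed cofinite), and the non-resonance conditions are invariant under $t_0\mapsto q^m t_0$.
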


\subsection{The main Riemann-Hilbert problem}\label{sec:rhp}
In this paper, we analyse the monodromy mapping through the, via Birkhoff's theory \cite{birkhoffgeneralized1913}, corresponding Riemann-Hilbert problem (RHP).

To introduce this RHP, we return to the single-valued matrix functions $\Psi_0(z,t)$ and $\Psi_\infty(z,t)$, defined in equations \eqref{eq:linear_sys_solutions}.
Let us denote $t_m=q^mt_0$ for $m\in\mathbb{Z}$. By Lemma \ref{lem:isomonodromic}, we may choose $H$ such that
\begin{equation}\label{eq:pre_jump_condition}
\Psi_\infty(z,t_m)=\Psi_0(z,t_m)\,z^mC_0(z),
\end{equation}
for $m\in\mathfrak{M}$.

Next, we need to choose Jordan curves $\gamma^{\B{m}}$, $m\in\mathbb{Z}$, which separate the points in the complex plane where $\Psi_\infty(z,t_m)$ and $\Psi_0(z,t_m)$ are respectively non-invertible and singular. These points are precisely the zeros of the determinants \eqref{eq:solutions_det1} and \eqref{eq:solutions_det2} respectively. We thus make the following definition.
\begin{definition}\label{def:contours}
Consider a family $(\gamma^{\B{m}})_{m\in\mathbb{Z}}$ of positively oriented Jordan curves in $\mathbb{C}^*$ and denote by $D_+^{\B{m}}$ and $D_-^{\B{m}}$ the inside and outside of $\gamma^{\B{m}}$ respectively, for $m\in\mathbb{Z}$.
Then we call this family of curves admissable if, for $m\in \mathbb{Z}$,
\begin{align*}
q^{\mathbb{Z}_{>0}}\cdot \{\kappa_t t_m,\kappa_t^{-1} t_m,\kappa_1,\kappa_1^{-1}\}&\subseteq D_-^{\B{m}},\\
q^{\mathbb{Z}_{\leq 0}}\cdot \{\kappa_t t_m,\kappa_t^{-1} t_m,\kappa_1,\kappa_1^{-1}\}&\subseteq D_+^{\B{m}}, 
\end{align*}
where we use the notation $U\cdot V=\{uv:u\in U,v\in V\}$ for compatible sets $U$ and $V$,
and
    \begin{equation*}
     D_-^{\B{m+1}}\subseteq D_-^{\B{m}},
    \end{equation*}
 see Figure \ref{fig:contours}.
\end{definition}

\begin{figure}[ht]
	\centering
	\begin{tikzpicture}[scale=0.8]
	\draw[->] (-6,0)--(6,0) node[right]{$\Re{z}$};
	\draw[->] (0,-5)--(0,5) node[above]{$\Im{z}$};
	\tikzstyle{star}  = [circle, minimum width=3.5pt, fill, inner sep=0pt];
	\tikzstyle{starsmall}  = [circle, minimum width=3.5pt, fill, inner sep=0pt];

	\draw[domain=-1.1:6,smooth,variable=\x,red] plot ({exp(-\x*ln(2))*2*5/3*cos((5*pi/4+\x*pi/16) r)},{exp(-\x*ln(2))*2*5/3*sin((5*pi/4+\x*pi/16) r)});	
    \draw[domain=-1.1:6,smooth,variable=\x,red] plot ({exp(-\x*ln(2))*2*5/3*cos((-3*pi/8+5*pi/4+\x*pi/16) r)},{exp(-\x*ln(2))*2*5/3*sin((-3*pi/8+5*pi/4+\x*pi/16) r)});

	\draw[domain=-1.3:6,smooth,variable=\x,red] plot ({exp(-\x*ln(2))*2*4/3*cos((pi/4+\x*pi/16) r)},{exp(-\x*ln(2))*2*4/3*sin((pi/4+\x*pi/16) r)});	
	\draw[domain=-1.3:6,smooth,variable=\x,red] plot ({exp(-\x*ln(2))*2*4/3*cos((-pi/4+pi/8+\x*pi/16) r)},{exp(-\x*ln(2))*2*4/3*sin((-pi/4+pi/8+\x*pi/16) r)});
	
	\draw [blue,thick,decoration={markings, mark=at position 0.6 with {\arrow{<}}},
	postaction={decorate}] plot [smooth cycle,tension=0.6] coordinates {(4.4,0) (3.5,-3) (0,-3.9) (-3.9,-3.5) (-5,0) (-4.3,3.2) (0,3.9) (3.5,3) };
	
	\draw [blue,dashed,thick,decoration={markings, mark=at position 0.53 with {\arrow{<}}},
	postaction={decorate}] plot [smooth cycle,tension=0.6] coordinates {(3.9,0) (3.2,-2.7) (0,-3.5) (-2.4,-2.4) (-3.1,0) (-2.7,1.7) (-1.3,2.9) (0.5,3.15) (3.1,2.7) };
	
	\node[starsmall]     (or) at ({0},{0} ) {};
	\node     at ($(or)+(-0.3,0.3)$) {$0$};
	
	\node[starsmall]     (qk1) at ({sqrt(sqrt(2))*2*4/3*cos((pi/4-1/4*pi/16) r)},{sqrt(sqrt(2))*2*4/3*sin((pi/4-1/4*pi/16) r)} ) {};
	\node     at ($(qk1)+(-0.3,0.3)$) {$q\kappa_1$};
	\node[star]     (k1) at ({2*2*4/3*cos((pi/4-pi/16) r)},{2*2*4/3*sin((pi/4-pi/16) r)} ) {};
	\node     at ($(k1)+(-0.3,0.3)$) {$\kappa_1$};
	
	\node[starsmall]     (qkm1) at ({sqrt(sqrt(2))*2*4/3*cos((-pi/4+pi/8-1/4*pi/16) r)},{sqrt(sqrt(2))*2*4/3*sin((-pi/4+pi/8-1/4*pi/16) r)} ) {};
	\node     at ($(qkm1)+(0.35,0.35)$) {$q\kappa_1^{-1}$};
	\node[star]     (km1) at ({2*2*4/3*cos((pi/4-pi/16) r)},{-2*2*4/3*sin((pi/4-pi/16) r)} ) {};
	\node     at ($(km1)+(0.35,0.35)$) {$\kappa_1^{-1}$};
	
	\node[star]     (qqkt) at ({sqrt(2)*5/3*cos((-3*pi/8+5*pi/4+2/4*pi/16) r)},{sqrt(2)*5/3*sin((-3*pi/8+5*pi/4+2/4*pi/16) r)} ) {};
	\node     at ($(qqkt)+(0.5,0.35)$) {$q^2\kappa_t t$};
	\node[star]     (qkt) at ({sqrt(sqrt(2))*2*5/3*cos((-3*pi/8+5*pi/4-1/4*pi/16) r)},{sqrt(sqrt(2))*2*5/3*sin((-3*pi/8+5*pi/4-1/4*pi/16) r)} ) {};
	\node     at ($(qkt)+(0.4,0.35)$) {$q\kappa_t t$};	
	\node[star]     (kt) at ({2*2*5/3*cos((-3*pi/8+5*pi/4-pi/16) r)},{2*2*5/3*sin((-3*pi/8+5*pi/4-pi/16) r)} ) {};
	\node     at ($(kt)+(0.35,0.35)$) {$\kappa_t t$};

	\node[star]     (qqktm) at ({sqrt(2)*5/3*cos((5*pi/4+2/4*pi/16) r)},{sqrt(2)*5/3*sin((5*pi/4+2/4*pi/16) r)} ) {};
	\node     at ($(qqktm)+(+0.5,-0.4)$) {$q^2\kappa_t^{-1}t$};
	\node[star]     (qktm) at ({sqrt(sqrt(2))*2*5/3*cos((5*pi/4-1/4*pi/16) r)},{sqrt(sqrt(2))*2*5/3*sin((5*pi/4-1/4*pi/16) r)} ) {};
	\node     at ($(qktm)+(0.35,-0.35)$) {$q\kappa_t^{-1}t$};
	\node[star]     (ktm) at ({2*2*5/3*cos((5*pi/4-pi/16) r)},{2*2*5/3*sin((5*pi/4-pi/16) r)} ) {};
	\node     at ($(ktm)+(0.4,-0.35)$) {$\kappa_t^{-1}t$};

	\node[blue]     at (-1.5,4.3) {$\boldsymbol{\gamma^{\B{m}}}$};
	
	\node[blue]     at (-1.4,3.15) {$\boldsymbol{\gamma^{\B{m+1}}}$};
	
	
	
	\end{tikzpicture}
	\caption{An example of two contours $\gamma^{\B{m}}$ and $\gamma^{\B{m+1}}$ satisfying the conditions in Definition \ref{def:contours}, where $t=q^mt_0$ and the red lines denote the four spirals $q^{\mathbb{R}}\cdot x$, $x\in\{\kappa_t^{\pm 1}t_0,\kappa_1^{\pm 1}\}$.}
	\label{fig:contours}
\end{figure}
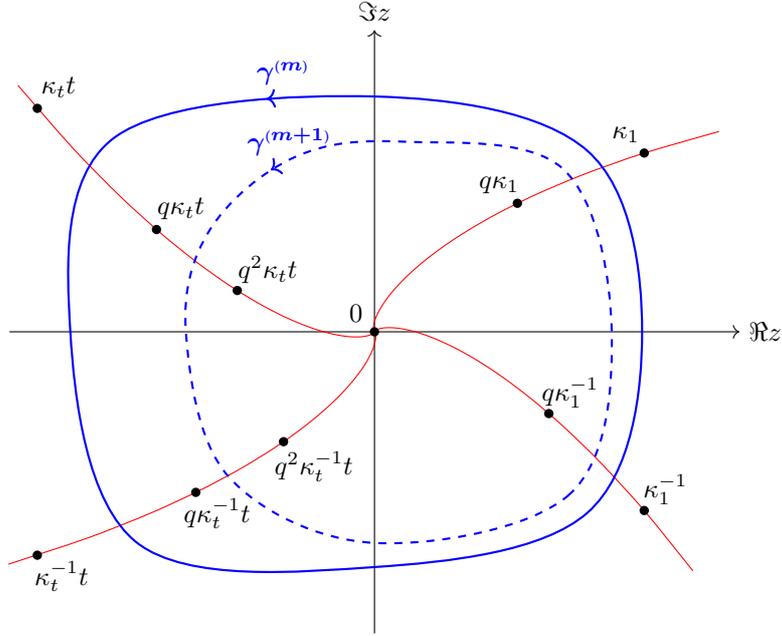

We can always construct an admissible family of curves and it follows that
\begin{equation}\label{eq:rhpsolution}
    \Psi^{\B{m}}(z)=\begin{cases}
    \Psi_\infty(z,t_m) & z\in D_+^{\B{m}},\\
    \Psi_0(z,t_m) & z\in D_-^{\B{m}},
    \end{cases}
\end{equation}
defines a solution of the following RHP, with $C(z)=C_0(z)$, for $m\in\mathfrak{M}$.
\begin{definition}[RHP I]\label{def:RHPmain}
Given a connection matrix $C\in \mathfrak{C}(\kappa,t_0)$ and a family of admissable curves $(\gamma^{\B{m}})_{m\in\mathbb{Z}}$,
 for $m\in\mathbb{Z}$, find a matrix function $\Psi^{\B{m}}(z)$ which satisfies the following conditions.
  \begin{enumerate}[label={{\rm (\roman *)}}]
  \item $\Psi^{\B{m}}(z)$ is analytic on $\mathbb{C}\setminus\gamma^{\B{m}}$.
    \item $\Psi^{\B{m}}(z')$ has continuous boundary values $\Psi_+^{\B{m}}(z)$ and $\Psi_-^{\B{m}}(z)$ as $z'$ approaches $z\in \gamma^{\B{m}}$ from $D_+^{\B{m}}$ and $D_-^{\B{m}}$ respectively, related by
		\begin{equation*}
		\Psi_+^{\B{m}}(z)=\Psi_-^{\B{m}}(z)z^mC(z),\quad z\in \gamma^{\B{m}}.
              \end{equation*}
            \item $\Psi^{\B{m}}(z)$ satisfies
              \begin{equation*}
		\Psi^{\B{m}}(z)=I+\mathcal{O}\left(z^{-1}\right)\quad z\rightarrow \infty.
              \end{equation*}
              \end{enumerate}
 \end{definition}
The matrix function $\Psi^{\B{m}}(z)$, defined in equation \eqref{eq:rhpsolution}, is uniquely characterised as the solution of RHP I. Indeed, we have the following lemma, which we prove in Section \ref{sec:isorhp}.
 \begin{lemma}\label{lem:uniqueness}
For any fixed $m\in\mathbb{Z}$, if RHP I in Definition \ref{def:RHPmain} has a solution $\Psi^{\B{m}}(z)$, then this solution is globally invertible on the complex plane and unique.
\end{lemma}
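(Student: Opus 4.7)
The plan is to establish global invertibility of any solution first, and then deduce uniqueness as a standard corollary via Liouville's theorem.

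\emph{Uniqueness from invertibility.} Suppose $\Psi_1^{\B{m}}$ and $\Psi_2^{\B{m}}$ are two solutions of RHP~I, and that both are globally invertible (to be proved below). Define $R(z):=\Psi_1^{\B{m}}(z)\,\Psi_2^{\B{m}}(z)^{-1}$. Then $R$ is holomorphic on $\mathbb{C}\setminus\gamma^{\B{m}}$; across $\gamma^{\B{m}}$ the jumps cancel,
\[
R_+(z)=\Psi_{1,-}(z)\,z^m C(z)\,(z^m C(z))^{-1}\,\Psi_{2,-}(z)^{-1}=R_-(z),
\]
so $R$ extends to an entire function. Since $R(z)\to I$ as $z\to\infty$, Liouville's theorem forces $R\equiv I$, i.e., $\Psi_1^{\B{m}}\equiv\Psi_2^{\B{m}}$.

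\emph{Invertibility via the determinant.} Set $d(z):=|\Psi^{\B{m}}(z)|$. This scalar function is holomorphic on $\mathbb{C}\setminus\gamma^{\B{m}}$, normalised by $d(\infty)=1$, and satisfies the scalar jump
\[
d_+(z)=z^{2m}\,|C(z)|\,d_-(z),\qquad |C(z)|=c\,\theta_q\!\left(\tfrac{\kappa_t z}{t_m},\tfrac{\kappa_t^{-1}z}{t_m},\kappa_1 z,\kappa_1^{-1}z\right),
\]
by Definition~\ref{def:connection_matrix_space}(3), where $t_m=q^mt_0$. Using $\theta_q(az)=(az;q)_\infty(q/(az);q)_\infty$, I factor $|C|=c\,\Xi(z)\Theta(z)$, with
\begin{align*}
\Xi(z)&=\left(\tfrac{\kappa_t z}{t_m},\tfrac{\kappa_t^{-1} z}{t_m},\kappa_1 z,\kappa_1^{-1}z;q\right)_\infty,\\
\Theta(z)&=\left(\tfrac{q\kappa_t^{-1}t_m}{z},\tfrac{q\kappa_t t_m}{z},\tfrac{q\kappa_1^{-1}}{z},\tfrac{q\kappa_1}{z};q\right)_\infty.
\end{align*}
The admissibility of $\gamma^{\B{m}}$ in Definition~\ref{def:contours} ensures that all zeros of $\Xi$ lie in $D_+^{\B{m}}$ and all zeros of $\Theta$ lie in $D_-^{\B{m}}$. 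The piecewise scalar function
\[
d_0(z)=\begin{cases}\Theta(z), & z\in D_+^{\B{m}},\\ (c\,z^{2m}\,\Xi(z))^{-1}, & z\in D_-^{\B{m}},\end{cases}
\]
therefore satisfies the same scalar jump as $d$ and is non-vanishing off $\gamma^{\B{m}}$. Consequently the quotient $d/d_0$ extends across $\gamma^{\B{m}}$ to a holomorphic function on $\mathbb{C}^*$. A careful matching of asymptotics---the normalisation $d(\infty)=1$ against the super-polynomial growth of $\Xi$ as $z\to\infty$ in $D_-^{\B{m}}$, and the analyticity of $d$ at $z=0\in D_+^{\B{m}}$ against the essential-singularity structure of $\Theta$ at $z=0$---pins this ratio down to the constant $1$. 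Hence $d(z)=d_0(z)$ and in particular is non-vanishing on $\mathbb{C}\setminus\gamma^{\B{m}}$. Combined with the fact that $z^m C(z)$ is invertible on $\gamma^{\B{m}}$ (its determinant never vanishes there by admissibility), the boundary values $\Psi_{\pm}^{\B{m}}$ are also invertible, so $\Psi^{\B{m}}(z)$ is globally invertible.

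\emph{Main obstacle.} The delicate step is the identification $d/d_0\equiv 1$. A priori this ratio is only holomorphic on $\mathbb{C}^*$ and may carry essential singularities at $0$ and $\infty$ inherited from the $q$-Pochhammer factors, so the classical Liouville argument on $\mathbb{C}$ does not apply directly. The identification has to be made by matching the explicit super-polynomial asymptotics of $\Xi$ and $\Theta$ against the prescribed normalisation at $\infty$ and the analyticity of $d$ at $0$. The admissibility of $\gamma^{\B{m}}$ enters crucially to keep the essential singularities of $\Theta$ and $\Xi$ cleanly separated between the two sides of the contour; without this separation the scalar-RHP uniqueness would genuinely fail due to the abundance of non-constant $q$-theta-quotient functions. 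Once this step is in place, the Liouville uniqueness argument for the matrix RHP is routine.
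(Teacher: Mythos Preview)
Your overall strategy matches the paper's exactly: solve the scalar RHP for $d=|\Psi^{\B{m}}|$ via an explicit product formula, deduce non-vanishing, then run the standard matrix Liouville argument. The only issue is the step you flag as the ``main obstacle,'' and it is self-inflicted by placing $0$ and $\infty$ in the wrong regions. The admissibility conditions in Definition~\ref{def:contours} put $q^{n}x_k$, $n\ge 1$, into $D_-^{\B{m}}$; since $|q|<1$ these accumulate at $0$, so necessarily $0\in D_-^{\B{m}}$ and $\infty\in D_+^{\B{m}}$ (the same conclusion is forced by comparing~\eqref{eq:rhpsolution} with the normalisation in RHP~I(iii)). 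With this correction your $d_0$ equals $\Theta$ on the component containing $\infty$ and is proportional to $1/\Xi$ on the component containing $0$. Now $\Xi(z)=(\kappa_t z/t_m,\ldots;q)_\infty$ is \emph{entire} in $z$ (every $q$-Pochhammer symbol $(x;q)_\infty$ is entire in $x$), so $g=d/d_0$ is genuinely holomorphic at $z=0$; and $\Theta(\infty)=1$ gives $g(\infty)=1$. Hence $g$ is entire on $\mathbb{C}$ with $g(\infty)=1$, and the ordinary Liouville theorem yields $g\equiv 1$ at once---no matching of super-polynomial asymptotics is needed. This is precisely how the paper argues.

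A minor bookkeeping slip: since $C\in\mathfrak{C}(\kappa,t_0)$, the determinant $|C(z)|$ should carry $t_0$ rather than $t_m$; the prefactor $z^{2m}$ in the jump is exactly what converts $t_0$ into $t_m$ via $\theta_q(qz)=-z^{-1}\theta_q(z)$, so writing both double-counts. The paper absorbs this by writing $z^{2m}|C(z)|=c_m^{-1}\theta_q(\kappa_t z/t_m,\ldots)$ for a single constant $c_m$.
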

From here on we say that $\Psi^{\B{m}}(z)$ exists if and only if RHP I has a solution for that particular value of $m$, as justified by the uniqueness in the above lemma.

If RHP I is solvable, then we can construct a corresponding isomonodromic linear system, by setting
\begin{equation}\label{eq:rhptolinear}
    A(z,q^mt_0):=\begin{cases}
    z^2 \Psi^{\B{m}}(qz)\kappa_\infty^{\sigma_3} \Psi^{\B{m}}(z)^{-1} & \text{if } z\in q^{-1}(D_+^{\B{m}}\cup \gamma^{\B{m}}),\\
    q^m t_0 \Psi^{\B{m}}(qz)\kappa_0^{\sigma_3}C(z) \Psi^{\B{m}}(z)^{-1}
    & \text{if } z\in D_+^{\B{m}}\cap q^{-1}D_-^{\B{m}},\\
    q^m t_0 \Psi^{\B{m}}(qz)\kappa_0^{\sigma_3} \Psi^{\B{m}}(z)^{-1}
    & \text{if } z\in D_-^{\B{m}}\cup \gamma^{\B{m}}.
    \end{cases}
\end{equation}
This defines a matrix polynomial of the form \eqref{eq:matrix_polynomial} and the values of $(f,g,w)$ may be read directly from the solution of the RHP as follows (details are given in Section \ref{sec:isorhp}). Let
\begin{align}
\Psi^{\B{m}}(z)&=H(t_m)+\mathcal{O}(z) & &(z\rightarrow 0), \label{eq:defiH}\\
    \Psi^{\B{m}}(z)&=I+z^{-1}U(t_m)+\mathcal{O}(z^{-2}) & &(z\rightarrow \infty),\label{eq:defiU}
\end{align}
and denote $H=(h_{ij})$ and $U=(U_{ij})$, then
\begin{subequations}\label{eq:rhptofgw}
\begin{align}
    w&=(q^{-1}-\kappa_\infty^2)u_{12},\label{eq:rhptofgw:w}\\
    f&=t_m\kappa_\infty\left(\kappa_0-\kappa_0^{-1}\right)\frac{h_{11}h_{12}}{w|H|},\label{eq:rhptofgw:f}\\
    g&=q^{-1}\kappa_\infty^{-1}(f-\kappa_t t_m)(f-\kappa_t^{-1} t_m)g_1^{-1},\label{eq:rhptofgw:g}\\
    g_1&=f^2+f\left((q^{-1}-1)u_{11}+\frac{h_{21}w}{h_{11}\kappa_\infty^2}\right)+\frac{\kappa_0 t}{\kappa_\infty}.\label{eq:rhptofgw:g1}
\end{align}
\end{subequations}

\subsection{Solvability of the main RHP}\label{subsec:solvabilitymain}
The notion of reducible monodromy, given in the following definition, plays an important role in our main results.
\begin{definition}\label{def:reducible}
We call a connection matrix $C(z)\in \mathfrak{C}(\kappa,t_0)$ irreducible when none of its entries are identically zero, otherwise we call it reducible. Similarly, we call monodromy  $[C(z)]\in\mathcal{M}(\kappa,t_0)$ irreducible when $C(z)$ is irreducible and reducible otherwise.
\end{definition}
\begin{lemma}\label{lem:irreducible}
The monodromy manifold $\mathcal{M}(\kappa,t_0)$ does not contain reducible monodromy if and only if the non-splitting conditions \eqref{eq:intro_irreducibleparameter} hold true.
\end{lemma}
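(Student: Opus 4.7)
The plan is to reduce the problem to a scalar, entry-by-entry $q$-difference analysis on the connection matrix. By property (2) of Definition \ref{def:connection_matrix_space}, each entry $c_{ij}(z)$ of a matrix $C \in \mathfrak{C}(\kappa, t_0)$ satisfies a scalar $q$-difference equation of the form $c_{ij}(qz) = \mu_{ij} z^{-2} c_{ij}(z)$, where
\begin{equation*}
\mu_{11} = t_0 \kappa_0/\kappa_\infty,\quad \mu_{12} = t_0 \kappa_0 \kappa_\infty,\quad \mu_{21} = t_0/(\kappa_0\kappa_\infty),\quad \mu_{22} = t_0 \kappa_\infty/\kappa_0.
\end{equation*}
Using the $q$-shift identity for $\theta_q$, the space of single-valued analytic solutions on $\mathbb{C}^*$ of such an equation is two-dimensional, spanned by functions of the form $\theta_q(az)\theta_q(bz)$ with $ab \equiv 1/\mu_{ij} \pmod{q^\mathbb{Z}}$. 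Any nontrivial such solution has exactly two zeros in the fundamental annulus $\mathbb{C}^*/q^\mathbb{Z}$, with product $\equiv 1/\mu_{ij} \pmod{q^\mathbb{Z}}$, and the solution is determined by this pair of zeros up to a scalar.

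Next I would observe that reducibility of $[C] \in \mathcal{M}(\kappa, t_0)$ is preserved by the left/right diagonal action and therefore amounts to the vanishing of some entry $c_{ij}$. Split into the \emph{diagonal} case ($c_{12} \equiv 0$ or $c_{21} \equiv 0$, where $|C| = c_{11}c_{22}$) and the \emph{off-diagonal} case ($c_{11} \equiv 0$ or $c_{22} \equiv 0$, where $|C| = -c_{12}c_{21}$). In either case the four zeros of $|C|$, which by property (3) form the multiset $\{\kappa_t t_0^{-1},\kappa_t^{-1} t_0^{-1},\kappa_1,\kappa_1^{-1}\}$ modulo $q^\mathbb{Z}$, must be distributed between the two surviving entries. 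The non-resonance conditions \eqref{eq:non_res} guarantee that these four values are pairwise distinct modulo $q^\mathbb{Z}$, so there are exactly three ways to pair them. Requiring the product of the pair assigned to each entry to match the corresponding $1/\mu_{ij}$ modulo $q^\mathbb{Z}$ yields in each case a finite list of resonance conditions. A direct enumeration of the three pair-partitions in each of the two cases produces twelve resonance conditions: the diagonal case yields $t_0 \equiv \kappa_0^{\pm 1}\kappa_\infty^{\mp 1} \pmod{q^\mathbb{Z}}$ together with $\kappa_0 \kappa_t^{\epsilon_t}\kappa_1^{\epsilon_1}\kappa_\infty^{-1} \equiv 1 \pmod{q^\mathbb{Z}}$ for the four sign choices, and the off-diagonal case yields $t_0 \equiv \kappa_0^{\pm 1}\kappa_\infty^{\pm 1} \pmod{q^\mathbb{Z}}$ together with $\kappa_0 \kappa_t^{\epsilon_t}\kappa_1^{\epsilon_1}\kappa_\infty \equiv 1 \pmod{q^\mathbb{Z}}$. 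These twelve conditions are exactly the negations of \eqref{eq:intro_irreducibleparameter1}–\eqref{eq:intro_irreducibleparameter2}, taking into account that the simultaneous sign flip $\epsilon \mapsto -\epsilon$ preserves each condition in \eqref{eq:intro_irreducibleparameter1}.

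For the converse direction, whenever one of the twelve resonances holds I would construct an explicit reducible $C \in \mathfrak{C}(\kappa, t_0)$: take the two surviving entries to be products $\theta_q(a z)\theta_q(b z)$ whose zero pairs realise the partition compatible with that resonance (possible by Step 1), and choose the remaining nonzero entry to be any nontrivial solution of its scalar $q$-difference equation; rescaling by a nonzero constant then aligns the overall multiplicative constant in property (3), so that $C$ genuinely lies in $\mathfrak{C}(\kappa, t_0)$. The main technical burden is the combinatorial bookkeeping that matches the twelve cases to the twelve sign-patterns in \eqref{eq:intro_irreducibleparameter}; no analytic difficulty arises beyond the classical fact that the $\theta_q$-function has simple zeros precisely on $q^\mathbb{Z}$.
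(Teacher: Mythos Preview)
Your proposal is correct and follows essentially the same route as the paper: both arguments reduce to the observation that if an entry of $C$ vanishes then $|C|$ factors as a product of two surviving entries, each a degree-two $q$-theta function whose zero pair must come from $\{x_1,x_2,x_3,x_4\}$, and matching the product of each zero pair against the multiplier $\mu_{ij}$ from the entrywise $q$-difference equation yields precisely the negations of \eqref{eq:intro_irreducibleparameter}. The paper packages the same computation via Lemma~\ref{lem:classificationtheta} (writing $C_{11}(z)=c_{11}\theta_q(z/x_i,z/x_j)z^n$ etc.), but the content is identical to your zero-product bookkeeping modulo $q^{\mathbb Z}$.
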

\begin{remark}
This lemma can be inferred from Ohyama et al. \cite{ohyamaramissualoy}[Theorem 4.3]. We give a proof in Section \ref{subsection:solvability}.
\end{remark}

We are now in a position to state our first main result, which we prove in Section \ref{subsection:solvability}.
\begin{theorem}\label{thm:main_solvability}
Consider RHP I defined in Definition \ref{def:RHPmain}. If the connection matrix $C(z)\in \mathfrak{C}(\kappa,t_0)$ is irreducible, see Definition \ref{def:reducible}, then this RHP is solvable.
More precisely, for any $m\in\mathbb{Z}$, at least one of the solutions $\Psi^{\B{m}}(z)$ and $\Psi^{\B{m+1}}(z)$ of RHP I exists.

Let $(f,g,w)$ be the unique corresponding solution of $q\Psix^{aux}(\kappa,t_0)$ via equations \eqref{eq:rhptofgw}. Then, for $m\in\mathbb{Z}$, $\Psi^{\B{m}}(z)$ fails to exist if and only if
 $(f(t_m),g(t_m))=(\infty,\kappa_\infty)$.
\end{theorem}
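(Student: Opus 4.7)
My plan has three components: reducing RHP I to a vanishing lemma via Fredholm theory, proving the vanishing lemma from irreducibility of $C(z)$, and extracting the dichotomy and failure characterization from the isomonodromic shift $C(z,t_{m+1})=zC(z,t_m)$.

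First, since the jump matrix $z^mC(z)$ is H\"older continuous on the closed contour $\gamma^{\B{m}}$ and its determinant (a $q$-theta product) is nowhere zero on $\gamma^{\B{m}}$ by the admissibility in Definition~\ref{def:contours}, the associated singular integral operator on $L^2(\gamma^{\B{m}})$ is Fredholm of index zero. Combined with the uniqueness established in Lemma~\ref{lem:uniqueness}, solvability of RHP I then reduces to the following \emph{vanishing lemma}: any matrix function $\Phi^{\B{m}}(z)$ analytic off $\gamma^{\B{m}}$, satisfying $\Phi^{\B{m}}_+(z)=\Phi^{\B{m}}_-(z)z^mC(z)$, and decaying as $O(z^{-1})$ at infinity, is identically zero.

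Second, I would prove this vanishing by contradiction. Suppose $\Phi^{\B{m}}(z)\not\equiv 0$, and set $\delta(z)=\det\Phi^{\B{m}}(z)$. Then $\delta_+(z)=\delta_-(z)z^{2m}\det C(z)$ and $\delta(z)=O(z^{-2})$ at infinity. Using the scalar $q$-Birkhoff factorization for such theta-function jumps, $\delta(z)$ admits an explicit meromorphic representation as a ratio of theta products up to a residual quadratic factor, which forces $\Phi^{\B{m}}(z)$ to drop rank at (at most) two points $z_1,z_2\in\mathbb{C}^*$. At each such point, a right null vector $v_i$ of $\Phi^{\B{m}}(z_i)$ propagates through the jump to show that the line $\mathbb{C}v_i$ is preserved by $z_i^m C(z_i)$; using the quasi-periodicity property (2) of $C$ from Section~\ref{sec:linear_system} to globalize these local invariant lines, one obtains a $C$-invariant rational line subbundle on $\mathbb{C}^*$, which in an appropriate basis produces a vanishing entry of $C(z)$ at a lattice point, and hence, since the entries of $C(z)$ are theta-products up to constants, identically. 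This contradicts the irreducibility assumption on $C(z)$.

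Third, for the dichotomy and the characterization of failure: the isomonodromic shift $C(z,t_{m+1})=zC(z,t_m)$ means the jumps at consecutive levels differ by one extra factor of $z$. Given $\Psi^{\B{m}}(z)$, the associated matrix polynomial $A(z,t_m)$ recovered via \eqref{eq:rhptolinear} induces, by a Schlesinger-type dressing, a candidate solution $\Psi^{\B{m+1}}(z)$; the dressing is nondegenerate precisely when $A(z,t_m)$ avoids the base-point configuration at $z=0$, which by the parametrization~\eqref{eq:coordinates_linear} and the formulas~\eqref{eq:rhptofgw} corresponds to $(f(t_m),g(t_m))\neq(\infty,\kappa_\infty)$, equivalently $w(t_m)\neq 0$. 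The auxiliary equation~\eqref{eq:auxiliary} prevents $w$ from vanishing at two consecutive spiral points, so at most one of $\Psi^{\B{m}}(z)$, $\Psi^{\B{m+1}}(z)$ can fail, yielding the claim. The main obstacle in executing this plan is the vanishing lemma of the second step: the extraction of a globally $C$-invariant line from the local null data at $z_1,z_2$ is where the theta-function structure of the entries of $C(z)$ must be exploited most carefully, and also where irreducibility (rather than just non-triviality) of $C(z)$ is indispensable.
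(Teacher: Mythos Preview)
Your plan has a fatal gap in step~2: the vanishing lemma you propose is \emph{false} for general $m$, even when $C(z)$ is irreducible. Indeed, the second assertion of the theorem itself says that $\Psi^{\B{m}}$ fails to exist precisely when $(f(t_m),g(t_m))=(\infty,\kappa_\infty)$, and this base-point value does occur along generic $q\Psix$ trajectories. By the Fredholm alternative you invoke, at any such $m$ the homogeneous problem has a nonzero solution $\Phi^{\B{m}}$; hence no argument using only irreducibility of $C$ can derive a contradiction from $\Phi^{\B{m}}\not\equiv0$. Concretely, your determinant analysis shows $\det\Phi^{\B{m}}\equiv0$ (not merely vanishing at two points: the scalar homogeneous RHP for $\delta$ with $\delta(z)=O(z^{-2})$ forces $\delta\equiv0$, since the inhomogeneous solution in the proof of Lemma~\ref{lem:uniqueness} is nowhere zero). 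So $\Phi^{\B{m}}$ is globally rank~$\le1$; the line it determines is \emph{not} in general $C$-invariant in a way that forces a vanishing entry of $C$, and cannot be, since such $\Phi^{\B{m}}$ genuinely exist for irreducible $C$.

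Step~3 compounds the problem. Given $\Psi^{\B{m}}$, the coefficient matrix $A(z,t_m)$ is automatically well-defined via \eqref{eq:rhptolinear}, so by Lemma~\ref{lem:isomonodromic} one already has $(f(t_m),g(t_m))\neq(\infty,\kappa_\infty)$; your nondegeneracy condition on the dressing is vacuous. Whether $\Psi^{\B{m+1}}$ exists depends on the value of $(f,g)$ at $t_{m+1}$, which you have not controlled. Also, at $b_7$ one has $w\to\infty$, not $w=0$, so the ``$w$ cannot vanish twice'' claim does not address the relevant obstruction.

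The paper's proof is structurally different. It introduces an auxiliary integer $n$ (RHP~II, Definition~\ref{def:RHPgeneral}), uses Birkhoff's result (Lemma~\ref{lem:existence}) to obtain existence at \emph{some} $n$, and then studies the Schlesinger-type shift in $n$ (Lemma~\ref{lem:ndependencerhp}) to show that for irreducible $C$ at most one of any two consecutive $n$-values can fail. The dichotomy in $m$ is then obtained by contradiction: if both $\Psi^{\B{m,0}}$ and $\Psi^{\B{m+1,0}}$ fail, the existing $\Psi^{\B{m,-1}}$ and $\Psi^{\B{m+1,-1}}$ are combined via the isomonodromic $B$-matrix, and the triangularity forced on $B_0$ together with \eqref{eq:BArelation} yields $A_{12}^{\B{m,-1}}\equiv0$, whence $C$ is reducible. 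The key idea you are missing is this two-parameter $(m,n)$ lattice: the $n$-direction supplies the existence input (via Birkhoff) and the chain argument, while the $m$-direction supplies the compatibility constraint that ultimately contradicts irreducibility.
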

\begin{corollary}\label{cor:monodromy_mapping}
If the non-splitting conditions \eqref{eq:intro_irreducibleparameter} hold true, then the monodromy mapping is bijective.
\end{corollary}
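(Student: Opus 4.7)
The plan is to note that Lemma \ref{lem:monodromy_mapping_injective} already supplies injectivity, so the task reduces to proving surjectivity of the monodromy mapping under the non-splitting conditions. The strategy is to invert the RHP construction: starting with an arbitrary class $[C_0]\in\mathcal{M}(\kappa,t_0)$, I would produce a solution of $q\Psix(\kappa,t_0)$ whose monodromy mapping lands at $[C_0]$.

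For the construction, first I would invoke Lemma \ref{lem:irreducible}: under the non-splitting conditions \eqref{eq:intro_irreducibleparameter}, every class in $\mathcal{M}(\kappa,t_0)$ is represented by an irreducible matrix $C_0(z)\in\mathfrak{C}(\kappa,t_0)$. Theorem \ref{thm:main_solvability} then guarantees that RHP I with data $C=C_0$ is solvable for at least one $m\in\mathbb{Z}$, and the corresponding triplet $(f,g,w)$ extracted via the formulas \eqref{eq:rhptofgw} extends uniquely along the $q$-spiral to a solution of $q\Psix^{\text{aux}}(\kappa,t_0)$; in particular, $(f,g)$ is a solution of $q\Psix(\kappa,t_0)$.

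The key remaining step is to verify that the monodromy mapping sends this $(f,g)$ back to $[C_0]$. For this I would use \eqref{eq:rhptolinear} to reconstruct the matrix polynomial $A(z,t_m)$, and observe by Lemma \ref{lem:uniqueness} together with the normalisations \eqref{eq:defiH} and \eqref{eq:defiU} that the restrictions of $\Psi^{\B{m}}$ to $D_+^{\B{m}}$ and $D_-^{\B{m}}$ coincide, up to a common right-multiplication by a diagonal matrix that can be absorbed into the freedom of $H$ in \eqref{eq:diagonal}, with Carmichael's canonical solutions $\Psi_\infty(z,t_m)$ and $\Psi_0(z,t_m)$ of this reconstructed linear system. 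The jump condition of RHP I, namely $\Psi^{\B{m}}_+(z)=\Psi^{\B{m}}_-(z)z^m C_0(z)$ on $\gamma^{\B{m}}$, then matches \eqref{eq:pre_jump_condition} and identifies $C_0(z)$ with the connection matrix of the isomonodromic linear system produced by Lemma \ref{lem:isomonodromic}, up to the left and right diagonal scalings quotiented out in Definition \ref{def:monodromy_mapping}. Hence the monodromy class of $(f,g)$ is precisely $[C_0]$, giving surjectivity.

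I expect no substantial obstacle beyond this last verification, which is largely bookkeeping: the RHP has been set up precisely so that its solutions encode the connection data, making the map ``take $[C_0]$, solve RHP I, extract $(f,g)$ via \eqref{eq:rhptofgw}'' a genuine inverse of the monodromy mapping on its image. All of the analytical content — solvability of RHP I for irreducible data and irreducibility of every class under non-splitting — is already supplied by Theorem \ref{thm:main_solvability} and Lemma \ref{lem:irreducible}, so the corollary follows as an assembly of these ingredients with Lemma \ref{lem:monodromy_mapping_injective}.
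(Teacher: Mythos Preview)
Your proposal is correct and follows essentially the same route as the paper: injectivity from Lemma \ref{lem:monodromy_mapping_injective}, then surjectivity by combining Lemma \ref{lem:irreducible} (every class is irreducible under the non-splitting conditions) with Theorem \ref{thm:main_solvability} (RHP I is solvable for irreducible data, yielding a solution of $q\Psix$ with the prescribed monodromy). The paper's proof is terser because the verification you flag as the ``key remaining step'' is already absorbed into the discussion in Section \ref{sec:isorhp}, where it is shown that the coefficient matrix built from \eqref{eq:rhptolinear} has connection matrix $z^m C(z)$; your bookkeeping sketch of this point is accurate.
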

\begin{proof}
Due to Lemma \ref{lem:monodromy_mapping_injective}, the monodromy mapping is injective. Take any monodromy in the monodromy manifold. Then, by Lemma \ref{lem:irreducible}, it must be irreducible. Theorem \ref{thm:main_solvability} thus shows that there exists a solution of $q\Psix$ with that monodromy. So the monodromy mapping is also surjective and the corollary follows.
\end{proof}

For reducible monodromy, solvability of RHP I is more subtle than in the irreducible case handled in Theorem \ref{thm:main_solvability}. We discuss this in Section \ref{subsection:reducible}, where we show that the RHP with reducible monodromy can be transformed into the standard Fokas-Its-Kitaev RHP \cites{fokasitskitaev1,fokasitskitaev2} for certain orthogonal polynomials. We further show that the corresponding solutions of $q\Psix$ can be expressed in terms of determinants containing Heine's basic hypergeometric functions. We thus see that special function solutions occur when the monodromy of the linear problem is reducible, a phenomenon well-known for the classical sixth Painlev\'e equation \cite{mazzoccorational}.

\subsection{Results on the monodromy manifold}\label{sec:results_monodromy_surface}
Our second main result is the identification of the monodromy manifold with an explicit surface.
To state this result, we define a set of coordinates on the monodromy manifold $\mathcal{M}(\kappa,t_0)$, using a construction introduced in our previous paper \cite{joshiroffelseniv}. 

Firstly, we require the following notation: for any $2\times2$ matrix $R$ of rank one, let $R_1$ and $R_2$ be respectively its first and second column, then we define $\pi(R)\in\mathbb{P}^1$ by
\begin{equation*}
R_1=\pi(R)R_2,
\end{equation*}
with $\pi(R)=0$ if and only if $R_1=(0,0)^T$ and $\pi(R)=\infty$ if and only if $R_2=(0,0)^T$.

 Take a connection matrix $C(z)\in \mathfrak{C}(\kappa,t_0)$ and denote
 \begin{equation}\label{eq:intro_xnotation}
    (x_1,x_2,x_3,x_4)=(\kappa_t t_0,\kappa_t^{-1} t_0,\kappa_1 ,\kappa_1^{-1}).
\end{equation}
Let $1\leq k\leq 4$, then $|C(z)|$ has a simple zero at $z=x_k$ and thus $C(x_k)$, while nonzero, is not invertible. We define the coordinates
\begin{equation*}
\rho_k=\pi(C(x_k)),\quad (1\leq k\leq 4).
\end{equation*}
Note that $(\rho_1,\rho_2,\rho_3,\rho_4)$ are invariant under left multiplication of $C(z)$ by diagonal matrices. However, multiplication by diagonal matrices from the right has the effect of scaling
\begin{equation}\label{eq:scaling}
    (\rho_1,\rho_2,\rho_3,\rho_4)\rightarrow (c\rho_1,c\rho_2,c\rho_3,c\rho_4),
\end{equation}
for some $c\in\mathbb{C}^*$.

Therefore, the coordinates $\rho$ naturally lie in $(\mathbb{P}^1)^4/\mathbb{C}^*$ and we obtain a mapping
\begin{equation}\label{eq:coordinate_mapping}
    \mathcal{P}:\mathcal{M}(\kappa,t_0)\rightarrow (\mathbb{P}^1)^4/\mathbb{C}^*,[ C(z)]\mapsto [\rho],
\end{equation}
which is easily seen to be an embedding (see Lemma \ref{lem:embedding}).

We proceed in giving an explicit description of the image of the monodromy manifold under $\mathcal{P}$. To this end, we make the following definition.

\begin{definition}\label{def:modulispace}
Define the quadratic polynomial
\begin{equation*}
T(\rho:\kappa,t_0)=T_{12}\rho_1\rho_2+T_{13}\rho_1\rho_3+T_{14}\rho_1\rho_4+T_{23}\rho_2\rho_3+T_{24}\rho_2\rho_4+T_{34}\rho_3\rho_4,
\end{equation*}
with coefficients given by
\begin{align*}
T_{12}&= \theta_q\left(\kappa_t^2,\kappa_1^2\right)\theta_q\left(\kappa_0\kappa_\infty^{-1}t_0,\kappa_0^{-1}\kappa_\infty^{-1}t_0\right)\kappa_\infty^2,\\
T_{34}&= \theta_q\left(\kappa_t^2,\kappa_1^2\right)\theta_q\left(\kappa_0\kappa_\infty t_0,\kappa_0^{-1}\kappa_\infty t_0\right),\\
T_{13}&=- \theta_q\left(\kappa_t\kappa_1^{-1}t_0,\kappa_t^{-1}\kappa_1t_0\right)\theta_q\left(\kappa_t\kappa_1\kappa_0^{-1}\kappa_\infty^{-1},\kappa_0\kappa_t\kappa_1\kappa_\infty^{-1}\right)\kappa_\infty^2,\\
T_{24}&=-\theta_q\left(\kappa_t\kappa_1^{-1}t_0,\kappa_t^{-1}\kappa_1t_0\right) \theta_q\left(\kappa_0\kappa_t\kappa_1\kappa_\infty,\kappa_t\kappa_1\kappa_\infty\kappa_0^{-1}\right),\\
T_{14}&= \theta_q\left(\kappa_t\kappa_1t_0,\kappa_t^{-1}\kappa_1^{-1}t_0\right)\theta_q\left(\kappa_1\kappa_\infty\kappa_0^{-1}\kappa_t^{-1},\kappa_0\kappa_1\kappa_\infty\kappa_t^{-1}\right)\kappa_t^2,\\
T_{23}&= \theta_q\left(\kappa_t\kappa_1t_0,\kappa_t^{-1}\kappa_1^{-1}t_0\right)\theta_q\left(\kappa_t\kappa_\infty\kappa_0^{-1}\kappa_1^{-1},\kappa_0\kappa_t\kappa_\infty\kappa_1^{-1}\right)\kappa_1^2.
\end{align*}
Note that $T$ is homogeneous and multilinear in the variables $\rho=(\rho_1,\rho_2,\rho_3,\rho_4)$. Therefore, if we denote its
homogeneous form by
\begin{equation}\label{eq:defthom}
    T_{hom}(\rho_1^x,\rho_1^y,\rho_2^x,\rho_2^y,\rho_3^x,\rho_3^y,\rho_4^x,\rho_4^y)=\rho_1^y\rho_2^y\rho_3^y\rho_4^yT\left(\frac{\rho_1^x}{\rho_1^y},\frac{\rho_2^x}{\rho_2^y},\frac{\rho_3^x}{\rho_3^y},\frac{\rho_4^x}{\rho_4^y}\right),
  \end{equation}
  then, using homogeneous coordinates $\rho_k=[\rho_k^x: \rho_k^y]\in \mathbb P^1$, $1\le k\le 4$, the equation
  \begin{equation}\label{eq:Thom}
      T_{hom}(\rho_1^x,\rho_1^y,\rho_2^x,\rho_2^y,\rho_3^x,\rho_3^y,\rho_4^x,\rho_4^y)=0,
  \end{equation}
  defines a surface in $(\mathbb{P}^1)^4/\mathbb{C}^*$. We denote this surface by
  \begin{equation*}
  \mathcal{S}(\kappa,t_0)=\{[\rho]\in (\mathbb{P}^1)^4/\mathbb{C}^*:T(\rho:\kappa,t_0)=0\} .
  \end{equation*}
\end{definition}

Our second main result is given by the following theorem, which is proven in Section \ref{subsec:moduli}.
\begin{theorem}\label{thm:main_moduli}
Denote by $\widehat{\kappa}$ the tuple of complex parameters $\kappa$ after replacing $\kappa_0\mapsto 1$.
Then the image of the monodromy manifold $\mathcal{M}(\kappa,t_0)$ under the mapping $\mathcal{P}$, defined in equation \eqref{eq:coordinate_mapping}, is given by the surface $\mathcal{S}(\kappa,t_0)$, minus the curve
\begin{equation}\label{eq:subvariety}
\mathcal{X}(\kappa,t_0):=\mathcal{S}(\kappa,t_0)\cap \mathcal{S}(\widehat{\kappa},t_0).
\end{equation}
Let us denote
\begin{equation}\label{eq:defsstar}
    \mathcal{S}^*(\kappa,t_0)=\mathcal{S}(\kappa,t_0)\setminus \mathcal{X}(\kappa,t_0),
\end{equation}
then, the mapping
\begin{equation*}
 \mathcal{M}(\kappa,t_0)\rightarrow \mathcal{S}^*(\kappa,t_0),\ \textrm{where}\ [C(z)] \mapsto \mathcal{P}([C(z)]),
\end{equation*}
is a bijection.
\end{theorem}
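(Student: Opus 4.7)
The plan is to treat $\mathcal{P}$ as arising from an explicit linear-algebraic construction on the space $\mathfrak{C}(\kappa,t_0)$. The $q$-difference equation in Definition \ref{def:connection_matrix_space} together with the determinant formula forces the two columns of $C(z)$ to live in explicit two-dimensional spaces of $q$-theta functions with complementary quasi-periodicities determined by $\kappa_0^{\pm 1}\kappa_\infty^{\mp 1}$. Fix once and for all a basis of each of these spaces, so that each column of $C$ is described by two scalar coefficients. The four conditions $\pi(C(x_k))=\rho_k$, for $k=1,2,3,4$, then translate into a homogeneous $4\times 4$ linear system coupling the coefficients of the first column to $\rho_k$ times those of the second. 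Both the defining equation $T(\rho)=0$ of $\mathcal{S}(\kappa,t_0)$ and the excluded locus $\mathcal{X}(\kappa,t_0)$ should be read off, as two different determinants, from this system.

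For the containment $\mathcal{P}(\mathcal{M}(\kappa,t_0))\subseteq \mathcal{S}^*(\kappa,t_0)$, one observes that nontrivial solvability of this linear system requires the vanishing of a single $4\times 4$ determinant. This determinant is homogeneous and multilinear in $\rho_1,\ldots,\rho_4$; expanding it cofactor-wise and simplifying the resulting $2\times 2$ theta-function minors using $\theta_q(qz)=-z^{-1}\theta_q(z)$, $\theta_q(z)=\theta_q(1/z)$ and Riemann-type three-term theta identities should yield precisely the polynomial $T(\rho:\kappa,t_0)$ of Definition \ref{def:modulispace}. For the reverse direction, given $[\rho]\in\mathcal{S}(\kappa,t_0)$ the same system admits a nontrivial solution, unique up to the right-diagonal $\mathbb{C}^*$ action that defines $\mathcal{M}(\kappa,t_0)$. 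The resulting matrix $C(z)$ has the correct quasi-periodicity, but lies in $\mathfrak{C}(\kappa,t_0)$ only when $|C(z)|$ does not vanish identically. A direct computation of $|C(z)|$ from the coefficients yields a \emph{second} quadratic in $\rho$, and I expect this ``nondegeneracy discriminant'' to equal $T(\widehat{\kappa},t_0)$: passing from $\kappa$ to $\widehat{\kappa}$ removes the twist $\kappa_0^{\sigma_3}$ that distinguishes the two column bases, so it detects exactly the degenerate locus on which the constructed $C(z)$ collapses to rank one. Hence lifts exist precisely on $\mathcal{S}(\kappa,t_0)\setminus\mathcal{X}(\kappa,t_0)=\mathcal{S}^*(\kappa,t_0)$, and uniqueness of the solution of the linear system, modulo the diagonal action, gives both injectivity of $\mathcal{P}$ and the bijection claim.

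The main obstacle is the explicit theta-function calculation identifying the coefficients of the compatibility determinant with the six specific products $T_{12},\ldots,T_{34}$ of Definition \ref{def:modulispace}, including the prefactors $\kappa_\infty^2$, $\kappa_1^2$ and $\kappa_t^2$ that appear asymmetrically in four of them. This requires careful application of Riemann's three-term theta identity together with meticulous tracking of basis normalisations. A secondary obstacle is the rigorous identification of the nondegeneracy discriminant with $T(\widehat{\kappa},t_0)$ itself, rather than with a merely proportional quadratic; I expect this to follow from the symmetry of the construction under $\kappa_0\mapsto 1$ in the chosen bases, combined with Lemma \ref{lem:irreducible} to rule out spurious lower-dimensional degenerations when the non-splitting conditions fail only along $\mathcal{X}$.
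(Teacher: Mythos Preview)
Your overall strategy is essentially the paper's: express the entries of $C(z)$ in explicit two-dimensional theta spaces, translate the conditions $\pi(C(x_k))=\rho_k$ into homogeneous linear systems, identify the compatibility determinant with $T(\rho;\kappa,t_0)$, and characterise the bad locus where the reconstructed $C$ has identically zero determinant. But there are two places where your outline diverges from what actually works.

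First, your setup is garbled. It is the four \emph{entries} $C_{ij}$ that live in two-dimensional theta spaces $V_{ij}$ (with multipliers $\sigma_i/\mu_j$ depending on both row and column indices), not the two columns. The conditions $\rho_k^y C_{i1}(x_k)=\rho_k^x C_{i2}(x_k)$ split naturally by \emph{rows}, giving \emph{two} separate $4\times 4$ homogeneous systems, one for the coefficients $(\alpha_1^{11},\alpha_2^{11},\alpha_1^{12},\alpha_2^{12})$ of the first row and one for the second. Both determinants turn out to equal $T_{hom}(\rho)$ up to nonzero scalars, and this is what puts the image inside $\mathcal{S}(\kappa,t_0)$. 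A single $4\times 4$ system coupling column coefficients as you describe does not exist here.

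Second, and more substantively, your proposed identification of the excluded locus is too optimistic. You want to compute $|C(z)|$ directly from the kernel vectors and read off a second quadratic that you then identify with $T(\rho;\widehat{\kappa},t_0)$. In principle $|C(z)|$ is forced to equal $c(\rho)\,\theta_q(z/x_1,\ldots,z/x_4)$ and the bad locus is $\{c(\rho)=0\}$, but carrying out this computation explicitly is at least as hard as the theta identities you already flag as the main obstacle, and the kernels may be higher-dimensional at special $\rho$. The paper avoids this entirely. Instead it (i) classifies all rank-one matrices satisfying the quasi-periodicity in Definition~\ref{def:connection_matrix_space}, obtaining an explicit two-parameter family whose $\rho$-coordinates are $\rho_k=c\,\theta_q(\tau x_k\kappa_\infty)/\theta_q(\tau x_k/\kappa_\infty)$; (ii) observes that this parametrisation is visibly independent of $\kappa_0$, hence lies in $S(\lambda_0,\kappa_t,\kappa_1,\kappa_\infty,t_0)$ for every $\lambda_0$, in particular in $S(\widehat{\kappa},t_0)$; and (iii) for the converse, given $\rho$ in the intersection, builds \emph{two} matrices $C$ and $\widetilde{C}$ (one for $\kappa_0$, one for $\kappa_0=1$) sharing the same $\rho$, and shows via the quotient $D=\widetilde{C}C^{-1}$, which would have to satisfy $D(qz)=D(z)\kappa_0^{-\sigma_3}$ and hence vanish identically, that $|C|\not\equiv 0$ is impossible. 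This argument is short and sidesteps the brute-force discriminant computation you propose. Your appeal to Lemma~\ref{lem:irreducible} at the end is a red herring: the non-splitting conditions play no role in this theorem.
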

The curve $\mathcal{X}(\kappa,t_0)$ in the above theorem has a geometric interpretation, which is described in the following remark.
\begin{remark}\label{remark:curveX}
The curve $\mathcal{X}=\mathcal{X}(\kappa,t_0)$ does not depend on $\kappa_0$ and can be written as the intersection
\begin{equation}\label{eq:subvariety_infty}
\mathcal{X}=
\bigcap_{\lambda_0\in\mathbb{C}^*}\mathcal{S}(\lambda_0,\kappa_t,\kappa_1,\kappa_\infty,t_0).
\end{equation}
Informally, one can think of points on the curve $\mathcal{X}$ in $\mathcal{S}(\kappa,t_0)$ as corresponding to connection matrices $C(z)\in \mathfrak{C}(\kappa,t_0)$ whose determinant is identically zero, i.e. they satisfy properties (1) and (2) of Definition \ref{def:connection_matrix_space}, but property (3) with $c=0$. Therefore, these coordinate values do not lie in the image of $\mathcal{P}$. In the proof of Theorem \ref{thm:main_moduli}, we obtain an explicit parametrisation of $\mathcal{X}$, see equation \eqref{eq:parametrisationXcurve}.
\end{remark}
We note that, any point $[\rho]\in \mathcal{S}(\kappa,t_0)$ with  more than two coordinates zero or more than two coordinates infinite, necessarily lies on the closed curve $\mathcal{X}$, defined in equation \eqref{eq:subvariety}, and is thus not a point on the surface $\mathcal{S}^*(\kappa,t_0)$.

However, when one of the non-splitting conditions \eqref{eq:intro_irreducibleparameter} is violated, one of the coefficients of the polynomial $T(\rho)$, in Definition \ref{def:modulispace}, vanishes. In that case, there exist points $[\rho]\in \mathcal{S}(\kappa,t_0)$ with precisely
 two coordinates zero or two coordinates infinite. Such points cannot lie on the closed curve $\mathcal{X}$ (as this would imply that $\kappa_0\in q^\mathbb{Z}$), and are in one to one correspondence with reducible monodromy on the monodromy manifold.

For example,
\begin{equation*}
    \left\{[\rho]\in \mathcal{S}^*:\rho_1=\rho_2=0\right\}=\begin{cases}
    \{[(0,0,\rho_3,\rho_4)]:\rho_3,\rho_4\in\mathbb{P}^1\setminus\{0\}\} & \text{ if }T_{34}=0,\\
    \emptyset & \text{ otherwise,}
    \end{cases}
\end{equation*}
and
\begin{equation*}
    \left\{[\rho]\in \mathcal{S}^*:\rho_3=\rho_4=\infty\right\}=\begin{cases}
    \{[(\rho_1,\rho_2,\infty,\infty)]:\rho_{1},\rho_2\in\mathbb{C}\} & \text{ if }T_{34}=0,\\
    \emptyset & \text{ otherwise.}
    \end{cases}
\end{equation*}

If $\kappa_0=\kappa_\infty t_0$, so that $T_{34}=0$, then these two subspaces correspond respectively to the equivalence classes of the collection of upper-triangular connection matrices
\begin{equation*}
    C(z)=\begin{pmatrix}
    \theta_q\left(\frac{z}{\kappa_t t_0},\frac{z\kappa_t}{t_0}\right) & c\,\theta_q\left(\frac{z}{\nu t_0},\frac{z\nu}{\kappa_0\kappa_\infty}\right)\\
    0 & \theta_q\left(\frac{z}{\kappa_1},z\kappa_1\right)
    \end{pmatrix}\qquad (c\in\mathbb{C},\nu \in \mathbb{C}^*),
\end{equation*}
and the equivalence classes of the collection of lower-triangular connection matrices
\begin{equation*}
    C(z)=\begin{pmatrix}
    \theta_q\left(\frac{z}{\kappa_t t_0},\frac{z\kappa_t}{t_0}\right) & 0\\
    c\,\theta_q\left(\frac{z}{\nu t_0},z\nu \kappa_0\kappa_\infty\right) & \theta_q\left(\frac{z}{\kappa_1},z\kappa_1\right)
    \end{pmatrix}\qquad (c\in\mathbb{C},\nu \in \mathbb{C}^*),
\end{equation*}
in the monodromy manifold.

Furthermore, these two subspaces intersect at the single point $[(0,0,\infty,\infty)]\in \mathcal{S}^*(\kappa,t_0)$, which corresponds to the equivalence class of the diagonal connection matrix in the monodromy manifold given by setting $c=0$ in any of the above two formulas.

By Theorem \ref{thm:main_moduli}, the monodromy manifold inherits any topological properties of the space $\mathcal{S}^*(\kappa,t_0)$ via the mapping $\mathcal{P}$. Diagonal monodromy, or anti-diagonal monodromy, form singularities on the monodromy manifold, which is the content of our third main result, proven in Section \ref{subsection:smoothness}.
\begin{theorem}\label{thm:main_smooth}
 If the non-splitting conditions \eqref{eq:intro_irreducibleparameter} hold true, then the monodromy manifold $\mathcal{M}(\kappa,t_0)$ is a smooth complex surface.

On the other hand, if one or more of the non-splitting conditions are violated, then the set
\begin{equation*}
\mathcal{M}_{sing}:=\{[C(z)]\in \mathcal{M}(\kappa,t_0):\text{$C(z)$ is diagonal or anti-diagonal}\},
\end{equation*}
is non-empty (but finite), its elements form singularities of the monodromy manifold and away from them the monodromy manifold is smooth.
\end{theorem}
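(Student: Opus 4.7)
The plan is to transfer the smoothness question, via Theorem \ref{thm:main_moduli}, to the explicit surface $\mathcal{S}^*(\kappa,t_0)\subset(\mathbb{P}^1)^4/\mathbb{C}^*$ and locate its singular points chart by chart. A preliminary check, using the non-resonance conditions \eqref{eq:non_res} and the theta-function formulas in Definition \ref{def:modulispace}, shows that the six coefficients $T_{ij}$ of $T$ are jointly nonzero exactly when the non-splitting conditions \eqref{eq:intro_irreducibleparameter} hold; the vanishing of each individual $T_{ij}$ corresponds to the failure of specific conditions among \eqref{eq:intro_irreducibleparameter}.

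The diagonal $\mathbb{C}^*$-action on $(\mathbb{P}^1)^4$ fixes exactly the sixteen \emph{cone points} $[\rho]$ with $\rho_k\in\{0,\infty\}$ for every $k$, and the quotient is a smooth threefold away from these, acquiring an ordinary double point at each cone point described in local $\mathbb{C}^*$-invariants by $\{AD=BC\}\subset\mathbb{C}^4$. For a point $[\rho]\in\mathcal{S}^*$ that is not a cone point, one may assume $\rho_4\in\mathbb{C}^*$ and work in the affine chart $\rho_4=1$. Since $T$ is multilinear of total degree two, Euler's identity $\sum_k\rho_k\partial_{\rho_k}T=2T$ reduces the singularity condition at $[\rho]$ to the linear system $M\rho=0$, where $M$ is the symmetric $4\times 4$ matrix with $M_{ii}=0$ and $M_{ij}=T_{ij}$. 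A case analysis on the number of coordinates of $\rho$ equal to $0$ shows that, when every $T_{ij}\ne 0$, any nonzero $\rho\in\ker M$ must have all four coordinates in $\mathbb{C}^*$; combined with the factorisation
\begin{equation*}
\det M=(\sqrt a+\sqrt b+\sqrt c)(-\sqrt a+\sqrt b+\sqrt c)(\sqrt a-\sqrt b+\sqrt c)(\sqrt a+\sqrt b-\sqrt c),\quad a=T_{12}T_{34},\ b=T_{13}T_{24},\ c=T_{14}T_{23},
\end{equation*}
this rules out non-cone singular points provided $\det M\ne 0$ under non-splitting.

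The cone-point analysis is more direct. A direct evaluation gives $T_{hom}(\rho)=T_{ij}$ when $|S|:=|\{k:\rho_k=\infty\}|=2$ with $S=\{i,j\}$, and $T_{hom}(\rho)=0$ otherwise; using Remark \ref{remark:curveX}, every cone point with $|S|\ne 2$ lies in $\mathcal{X}$ and is therefore excluded from $\mathcal{S}^*$, while a cone point with $|S|=2$ belongs to $\mathcal{S}^*$ precisely when the corresponding $T_{ij}$ vanishes. In the latter case, substitution into the local GIT chart produces a defining equation whose tangent cone on $\{AD=BC\}$ is a rank-three quadric form in three variables -- confirmed by a short $3\times 3$ determinant calculation in the remaining $T_{kl}$'s -- so $\mathcal{S}^*$ has an isolated ordinary double point at that cone point. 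These $|S|=2$ singular cone points are then identified with diagonal and antidiagonal monodromy classes in $\mathcal{M}_{sing}$ by exhibiting the two columns of $C(z)\in\mathfrak{C}(\kappa,t_0)$ as theta products with two prescribed zeros among $\{x_1,x_2,x_3,x_4\}$, matching the pattern of $\rho_k\in\{0,\infty\}$ encoded by $S$.

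Putting everything together, under non-splitting all $T_{ij}\ne 0$, no cone point lies in $\mathcal{S}^*$, and the non-vanishing of $\det M$ leaves no room for non-cone singularities, yielding smoothness of $\mathcal{M}\simeq\mathcal{S}^*$. When one or more of the non-splitting conditions fail, precisely those cone points whose associated $T_{ij}$ vanishes become singular and they exhaust $\mathcal{M}_{sing}$. The main obstacle is verifying $\det M\ne 0$ under non-splitting; my plan is to derive this from a Riemann-type relation for $q$-theta functions that recognises each factor $\sqrt a\pm\sqrt b\pm\sqrt c$ of $\det M$ (up to square-root ambiguities handled by the full product) as a product of theta functions whose arguments are precisely those appearing in the non-splitting conditions, so that non-splitting simultaneously forces all four factors to be nonzero.
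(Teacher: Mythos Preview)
Your overall strategy---transfer to $\mathcal{S}^*(\kappa,t_0)$ via Theorem~\ref{thm:main_moduli} and locate the singular locus chart by chart---is the same as the paper's. There is, however, one structural difference and one genuine gap.

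\textbf{Structural difference.} The paper does not work directly in the singular quotient $(\mathbb{P}^1)^4/\mathbb{C}^*$. Instead it first analyses the threefold $S^*(\kappa,t_0)\subset(\mathbb{P}^1)^4$ (Proposition~\ref{prop:smooth}) and only afterwards passes to the quotient. This avoids your GIT charts and cone-point bookkeeping entirely: in $(\mathbb{P}^1)^4$ one simply checks the gradient of $T_{hom}$ in each of the standard affine charts, finding that the only singular points of $S^*$ are the six elements of $\Theta$ (your $|S|=2$ cone points), and these occur precisely when the corresponding $T_{ij}$ vanishes. The quotient step is then a one-line observation that scalar multiplication acts freely on $S^*\setminus\Theta$.

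\textbf{The gap.} Your plan to prove $\det M\ne 0$ via a Riemann relation tied to the \emph{non-splitting} conditions is both harder than necessary and, more seriously, insufficient for the second half of the theorem. The paper computes $\det M=|H|$ directly as an explicit product of $q$-theta functions,
\begin{equation*}
|H|=\kappa_0^{-2}\kappa_t^{2}\kappa_1^{2}\kappa_\infty^{2}\,\theta_q\!\left(\kappa_0^2,\kappa_t^2,\kappa_1^2,\kappa_\infty^2\right)^2\theta_q\!\left(\kappa_t\kappa_1 t_0,\kappa_t\kappa_1^{-1} t_0,\kappa_t^{-1}\kappa_1 t_0,\kappa_t^{-1}\kappa_1^{-1} t_0\right)^2,
\end{equation*}
and this is nonzero by the \emph{non-resonance} conditions~\eqref{eq:non_res} alone. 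That distinction matters: when some non-splitting condition fails (so some $T_{ij}=0$), you still need smoothness of $\mathcal{S}^*$ away from the corresponding cone point, and for that you need $\det M\ne 0$ without assuming all $T_{ij}\ne 0$. Your argument, as written, only rules out non-cone singularities under the hypothesis that every $T_{ij}\ne 0$, leaving the ``away from $\mathcal{M}_{sing}$ the manifold is smooth'' clause unproven in the split case. Replace your Riemann-relation plan with the direct evaluation of $|H|$ and the argument goes through uniformly.
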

\begin{remark}\label{remark:ohyamaetal}
We note that the above theorem implies the assertion in Conjecture 7.10 of Ohyama, Ramis and Sauloy \cite{ohyamaramissualoy}. This conjecture is made under the conditions \eqref{eq:non_res}, \eqref{eq:intro_irreducibleparameter} and additional assumptions on the parameters, but our proof shows that the result holds without these additional assumptions. 
\end{remark}

In our fourth and final result we identify the monodromy manifold with an explicit affine algebraic surface via an embedding into $\mathbb{C}^6$.
To construct this embedding, let us denote by
    \begin{equation*}
T'(p)=T_{12}'\rho_1\rho_2+T_{13}'\rho_1\rho_3+T_{14}'\rho_1\rho_4+T_{23}'\rho_2\rho_3+T_{24}'\rho_2\rho_4+T_{34}'\rho_3\rho_4,
\end{equation*}
the quadratic polynomial $T(p)=T(p;\kappa,t_0)$ after replacing $\kappa_0\mapsto 1$.

Take $1\leq i<j\leq 4$ and consider the coordinate
\begin{equation}\label{eq:eta_defi}
    \eta_{ij}:=\frac{T_{ij}\rho_i\rho_j}{\theta_q(\kappa_0,\kappa_0^{-1})T'(\rho)}.
\end{equation}
So, for example, $\eta_{12}$ is given by
\begin{equation*}
    \eta_{12}=\frac{1}{\theta_q(\kappa_0,\kappa_0^{-1})}\frac{T_{12}\rho_1^x\rho_2^x\rho_3^y\rho_4^y}{T_{12}'\rho_1^x\rho_2^x\rho_3^y\rho_4^y+T_{13}'\rho_1^x\rho_2^y\rho_3^x\rho_4^y+\ldots+T_{34}'\rho_1^y\rho_2^y\rho_3^x\rho_4^x},
\end{equation*}
in homogeneous coordinates.

Note that $\eta_{ij}$ is invariant under scalar multiplication $\rho\mapsto c \rho$, $c\in\mathbb{C}^*$. Furthermore, the denominator of $\eta_{ij}$ does not vanish on $\mathcal{S}^*(\kappa,t_0)$, as any such point $[\rho]$ would necessarily lie on the curve $\mathcal{X}$, see equation \eqref{eq:subvariety}.

This means that the $\eta_{ij}$, $1\leq i<j\leq 4$, are six well-defined coordinates on $\mathcal{S}^*(\kappa,t_0)$, and thus on the monodromy manifold $\mathcal{M}(\kappa,t_0)$, which lie in $\mathbb{C}^6$.
Furthermore, by construction, they satisfy the following four equations,
\begin{subequations}\label{eq:eta_equations}
\begin{align}
    &\eta_{12}+\eta_{13}+\eta_{14}+\eta_{23}+\eta_{24}+\eta_{34}=0,\label{eq:eta_equationsa}\\
    &a_{12}\eta_{12}+a_{13}\eta_{13}+a_{14}\eta_{14}+a_{23}\eta_{23}+a_{24}\eta_{24}+a_{34}\eta_{34}=1,\label{eq:eta_equationsb}\\
    &\eta_{13}\eta_{24}-\eta_{12}\eta_{34}b_1=0,\label{eq:eta_equationsc}\\ 
    &\eta_{14}\eta_{23}-\eta_{12}\eta_{34}b_2=0,\label{eq:eta_equationsd}
\end{align}
\end{subequations}
where the coefficients $a_{ij}=T_{ij}'/T_{ij}$, $1\leq i<j\leq 4$, read
\begin{align*}
    &a_{12}=\prod_{\epsilon=\pm 1}\frac{\theta_q\big(\kappa_0^\epsilon\big)\theta_q\big(\kappa_\infty^{-1}t_0\big)}{\theta_q\big(\kappa_0^{\epsilon}\kappa_\infty^{-1}t_0\big)}, &
    &a_{34}=\prod_{\epsilon=\pm 1}\frac{\theta_q\big(\kappa_0^\epsilon\big)\theta_q\big(\kappa_\infty t_0\big)}{\theta_q\big(\kappa_0^{\epsilon}\kappa_\infty t_0\big)},\\
    &a_{13}=\prod_{\epsilon=\pm 1}\frac{\theta_q\big(\kappa_0^\epsilon\big)\theta_q\big(\kappa_t\kappa_1\kappa_\infty^{-1}\big)}{\theta_q\big(\kappa_0^{\epsilon}\kappa_t\kappa_1\kappa_\infty^{-1}\big)}, &
    &a_{24}=\prod_{\epsilon=\pm 1}\frac{\theta_q\big(\kappa_0^\epsilon\big)\theta_q\big(\kappa_t\kappa_1\kappa_\infty \big)}{\theta_q\big(\kappa_0^{\epsilon}\kappa_t\kappa_1\kappa_\infty \big)},\\
    &a_{14}=\prod_{\epsilon=\pm 1}\frac{\theta_q\big(\kappa_0^\epsilon\big)\theta_q\big(\kappa_t^{-1}\kappa_1\kappa_\infty\big)}{\theta_q\big(\kappa_0^{\epsilon}\kappa_t^{-1}\kappa_1\kappa_\infty\big)}, &
    &a_{23}=\prod_{\epsilon=\pm 1}\frac{\theta_q\big(\kappa_0^\epsilon\big)\theta_q\big(\kappa_t\kappa_1^{-1}\kappa_\infty \big)}{\theta_q\big(\kappa_0^{\epsilon}\kappa_t\kappa_1^{-1}\kappa_\infty \big)},\\
\end{align*}
and
\begin{equation*}
    b_1=\frac{T_{13}T_{24}}{T_{12}T_{34}},\qquad b_2=\frac{T_{14}T_{23}}{T_{12}T_{34}}.
\end{equation*}

\begin{definition}\label{def:affine_variety}
We denote by $\mathcal{F}(\kappa,t_0)$ the affine algebraic surface in
\begin{equation*}
    \{(\eta_{12},\eta_{13},\eta_{14},\eta_{23},\eta_{24},\eta_{34})\in\mathbb{C}^6\}
\end{equation*}
defined by equations \eqref{eq:eta_equations}. We correspondingly denote by 
\begin{equation*}
 \Phi:\mathcal{S}^*(\kappa,t_0)\rightarrow \mathcal{F}(\kappa,t_0), [\rho]\rightarrow \eta,
\end{equation*}
the mapping defined through the $\eta$-coordinates \eqref{eq:eta_defi} and write
\begin{equation*}
 \Phi_{\mathcal{M}}=\Phi\circ \mathcal{P}
 :\mathcal{M}(\kappa,t_0)\rightarrow \mathcal{F}(\kappa,t_0), [ C(z)]\rightarrow \eta,
\end{equation*}
where $\mathcal{P}$ is the mapping defined in equation \eqref{eq:coordinate_mapping}.
\end{definition}

Our fourth and final main result is given by the following theorem, which is proved in Section \ref{subsection:embeddings}.
\begin{theorem}\label{thm:main_affine}
Let $\kappa$ and $t_0$ be parameters satisfying the non-resonance conditions \eqref{eq:non_res} and the non-splitting conditions \eqref{eq:intro_irreducibleparameter}. Then the mapping $\Phi_{\mathcal{M}}$, given in Definition \ref{def:affine_variety}, is an isomorphism between the
monodromy manifold $\mathcal{M}(\kappa,t_0)$ and the affine algebraic surface $\mathcal{F}(\kappa,t_0)$.
\end{theorem}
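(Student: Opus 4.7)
The plan is to factor $\Phi_{\mathcal{M}}=\Phi\circ\mathcal{P}$ and, using Theorem \ref{thm:main_moduli}, replace $\mathcal{M}(\kappa,t_0)$ with the quasi-projective surface $\mathcal{S}^*(\kappa,t_0)$ via the bijection $\mathcal{P}$. The task then reduces to showing that $\Phi:\mathcal{S}^*(\kappa,t_0)\to\mathcal{F}(\kappa,t_0)$, sending $[\rho]$ to $\eta=(\eta_{ij})_{i<j}$ with $\eta_{ij}=\rho_i\rho_j/T'(\rho)$, is an isomorphism of algebraic varieties; the final claim for $\Phi_{\mathcal{M}}$ then follows by transporting this isomorphism along $\mathcal{P}$.

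For well-definedness, observe that $[\rho]\in\mathcal{S}^*$ excludes the locus $\mathcal{X}$ on which $T'(\rho)$ vanishes, so $T'(\rho)\neq 0$; both $\rho_i\rho_j$ and $T'(\rho)$ scale by $c^2$ under $\rho\mapsto c\rho$, so the $\eta_{ij}$ descend to invariants on the quotient. The image lies in $\mathcal{F}(\kappa,t_0)$: equation \eqref{eq:eta_equationsa} is $T(\rho)/T'(\rho)=0$, equation \eqref{eq:eta_equationsb} is $T'(\rho)/T'(\rho)=1$, and the Pl\"ucker identities \eqref{eq:eta_equationsc}--\eqref{eq:eta_equationsd} are just $(\rho_i\rho_j)(\rho_k\rho_l)=(\rho_i\rho_k)(\rho_j\rho_l)$.

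Next, I would build a rational inverse. Given $\eta\in\mathcal{F}$, equation \eqref{eq:eta_equationsb} prevents all $\eta_{ij}$ from vanishing, so at least one $\eta_{1j}$ is nonzero after a harmless relabelling. Working in the chart $\rho_1=1$ on $(\mathbb P^1)^4/\mathbb C^*$, I define
\[
\rho_i=\frac{\eta_{ij}}{\eta_{1j}}\qquad\text{for any $j\neq i$ with }\eta_{1j}\neq 0,
\]
and the Pl\"ucker relations \eqref{eq:eta_equationsc}--\eqref{eq:eta_equationsd} render the choice of $j$ immaterial. A short calculation then yields $T'(\rho)=\rho_j/\eta_{1j}\neq 0$ (consistent across valid $j$), so $[\rho]\notin\mathcal{X}$, while $T(\rho)=T'(\rho)\sum_{i<j}T_{ij}\eta_{ij}=0$ by \eqref{eq:eta_equationsa}; hence $[\rho]\in\mathcal{S}^*$. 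The two compositions of $\Phi$ with this inverse are immediately the identity maps, and since both directions are rational in each chart with explicitly nonvanishing denominators, they are regular morphisms of algebraic varieties, giving the desired isomorphism $\Phi$.

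The principal obstacle is to ensure that the chartwise inverse is regular across all of $\mathcal{F}(\kappa,t_0)$, including loci where some $\rho_i$ vanishes or is infinite. The non-splitting conditions \eqref{eq:intro_irreducibleparameter} do the essential work: as discussed after Theorem \ref{thm:main_moduli}, they force every point of $\mathcal{S}^*$ to have at most one coordinate equal to $0$ and at most one equal to $\infty$. Consequently the affine charts $\{\rho_k=1\}$ together with their homogeneous-coordinate counterparts at $\rho_k=0$ and $\rho_k=\infty$ form a finite open cover of $\mathcal{S}^*$ on which at least one inverse-formula denominator remains nonzero at each point; compatibility on overlaps reduces once again to the Pl\"ucker relations, completing the argument.
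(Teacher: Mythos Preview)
Your proof is correct and follows the same strategy as the paper's: factor through $\mathcal{P}$ via Theorem \ref{thm:main_moduli}, then build an explicit inverse of $\Phi$ out of ratios of the $\eta_{ij}$, using the non-splitting conditions to rule out points with two or more coordinates simultaneously $0$ or simultaneously $\infty$. The paper executes your final paragraph concretely by stratifying $\mathcal{S}^*$ and $\mathcal{F}$ according to which $\rho_k$ equal $0$ or $\infty$ (the pieces $\mathcal{S}_0$, $\mathcal{S}_i^0$, $\mathcal{S}_j^\infty$, $\mathcal{S}_{i,j}^{0,\infty}$ and their counterparts $\mathcal{F}_0$, $\mathcal{F}_i^0$, $\mathcal{F}_j^\infty$, $\mathcal{F}_{i,j}^{0,\infty}$) and writing down the inverse formula separately on each piece---this is exactly the chartwise analysis you gesture at, made explicit.
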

\begin{remark}
We note that the algebraic surface $\mathcal{F}(\kappa,t_0)$ is invariant under the translations
\begin{equation*}
   t_0\mapsto q\, t_0,\qquad \kappa_j\mapsto q\,\kappa_j\quad (j=0,t,1,\infty),
\end{equation*}
since the coefficients in equations \eqref{eq:eta_equations} are invariant under them.
\end{remark}

The surface $\mathcal{F}(\kappa,t_0)$ can be identified with the intersection of two quadrics in $\mathbb{C}^4$. This can be seen by using equations \eqref{eq:eta_equationsa} and \eqref{eq:eta_equationsb} to eliminate any two of the six variables.

For example, consider eliminating $\{\eta_{24},\eta_{34}\}$ from \eqref{eq:eta_equations} using \eqref{eq:eta_equationsa} and \eqref{eq:eta_equationsb}. The relevant determinant is given by
\begin{equation*}
    \begin{vmatrix}
    1 & 1\\
    a_{24} & a_{34}\\
    \end{vmatrix}=\kappa_t\kappa_1\kappa_\infty \theta_q(\kappa_t^{-1}\kappa_1^{-1}t_0,\kappa_t\kappa_1\kappa_\infty^2t_0)\prod_{\epsilon=\pm 1}{\frac{\theta_q(\kappa_0^\epsilon)^2}{\theta_q(\kappa_0^\epsilon \kappa_\infty t_0, \kappa_0^\epsilon \kappa_t \kappa_1 \kappa_\infty)}}.
\end{equation*}

Let us assume that $\kappa_t\kappa_1\kappa_\infty^2t_0\notin q^\mathbb{Z}$. If not, then we can instead choose another pair of coordinates to eliminate. The non-resonance conditions \eqref{eq:non_res} and non-splitting conditions \eqref{eq:intro_irreducibleparameter} now guarantee that the above determinant is non-zero. Upon eliminating $\{\eta_{24},\eta_{34}\}$,
equations  \eqref{eq:eta_equationsc} and  \eqref{eq:eta_equationsd} respectively become
\begin{equation} \label{eq:quadrics}
\begin{aligned}
    u_{0} \eta_{12}^2+u_{1} \eta_{12}\eta_{13}+u_{2} \eta_{12}\eta_{14} +u_{3} \eta_{12}\eta_{23}+u_{4}\eta_{14}\eta_{23} +u_5 \eta_{12}=0,\\
    v_{0}\hspace{0.3mm} \eta_{13}^2+v_{1} \hspace{0.3mm}\eta_{12}\eta_{13}+v_{2} \hspace{0.3mm}\eta_{13}\eta_{14} +v_{3}\hspace{0.3mm} \eta_{13}\eta_{23}+v_{4}\hspace{0.3mm}\eta_{14}\eta_{23} +v_5\hspace{0.3mm} \eta_{13}=0,
\end{aligned}
\end{equation}
with coefficients given by
\begin{align*}
    u_0&=-\kappa_t^2\kappa_1^2\kappa_\infty^2\, \theta_q\left(t_0\kappa_t\kappa_1,\frac{t_0}{\kappa_t\kappa_1\kappa_\infty^2}\right)\prod_{\epsilon=\pm 1}\frac{\theta_q\left(\kappa_0^\epsilon\right)}{\theta_q\left(\frac{t_0}{\kappa_\infty}\kappa_0^\epsilon\right)},\\
    u_1&=\kappa_t^2\kappa_1^2\,\theta_q\left(\kappa_t^2\kappa_1^2,\kappa_\infty^2\right)\prod_{\epsilon=\pm 1}\frac{\theta_q\left(\kappa_0^\epsilon\right)}{\theta_q\left(\frac{\kappa_t\kappa_1}{\kappa_\infty}\kappa_0^\epsilon\right)},\\
    u_2&=\kappa_1^2\kappa_\infty^2\,\theta_q\left(\kappa_1^2\kappa_\infty^2,\kappa_t^2\right)\prod_{\epsilon=\pm 1}\frac{\theta_q\left(\kappa_0^\epsilon\right)}{\theta_q\left(\frac{\kappa_1\kappa_\infty}{\kappa_t}\kappa_0^\epsilon\right)},\\
    u_3&=\kappa_t^2\kappa_\infty^2\,\theta_q\left(\kappa_t^2\kappa_\infty^2,\kappa_1^2\right)\prod_{\epsilon=\pm 1}\frac{\theta_q\left(\kappa_0^\epsilon\right)}{\theta_q\left(\frac{\kappa_t\kappa_\infty}{\kappa_1}\kappa_0^\epsilon\right)},\\
    u_4&=-\kappa_\infty^4 
    \frac{\theta_q\left(\kappa_t^2,\kappa_1^2\right)^2 \theta_q(t_0\kappa_t\kappa_1\kappa_\infty^2)}{\theta_q(t_0\kappa_t\kappa_1)^2\, \theta_q\left(\frac{t_0}{\kappa_t\kappa_1}\right)}
    \prod_{\epsilon=\pm 1}\frac{\theta_q\left(\kappa_0^\epsilon,\frac{t_0}{\kappa_\infty}\kappa_0^\epsilon\right)}{\theta_q\left(\frac{\kappa_1\kappa_\infty}{\kappa_t}\kappa_0^\epsilon,\frac{\kappa_t\kappa_\infty}{\kappa_1}\kappa_0^\epsilon\right)},\\
    u_5&=\kappa_t\kappa_1\kappa_\infty \prod_{\epsilon=\pm 1}\frac{\theta_q\left(\kappa_t\kappa_1\kappa_\infty\kappa_0^\epsilon\right)}{\theta_q\left(\kappa_0^\epsilon\right)},
\end{align*}
and
\begin{align*}
    v_0&=-\kappa_t^2\kappa_1^2\, \theta_q\left(t_0\kappa_t\kappa_1,\frac{t_0 \kappa_\infty^2}{\kappa_t\kappa_1}\right)\prod_{\epsilon=\pm 1}\frac{\theta_q\left(\kappa_0^\epsilon\right)}{\theta_q\left(\frac{\kappa_t\kappa_1}{\kappa_\infty}\kappa_0^\epsilon\right)},\\
    v_1&=t_0\kappa_t\kappa_1\,\theta_q\left(t_0^2,\kappa_\infty^2\right)\prod_{\epsilon=\pm 1}\frac{\theta_q\left(\kappa_0^\epsilon\right)}{\theta_q\left(\frac{t_0}{\kappa_\infty}\kappa_0^\epsilon\right)},\\
    v_2&=\kappa_1^2\kappa_\infty^2\,\theta_q\left(\frac{t_0 \kappa_t}{\kappa_1},\frac{t_0\kappa_1 \kappa_\infty^2}{\kappa_t}\right)\prod_{\epsilon=\pm 1}\frac{\theta_q\left(\kappa_0^\epsilon\right)}{\theta_q\left(\frac{\kappa_1\kappa_\infty}{\kappa_t}\kappa_0^\epsilon\right)},\\
    v_3&=\kappa_t^2\kappa_\infty^2\,\theta_q\left(\frac{t_0 \kappa_1}{\kappa_t},\frac{t_0\kappa_t \kappa_\infty^2}{\kappa_1}\right)\prod_{\epsilon=\pm 1}\frac{\theta_q\left(\kappa_0^\epsilon\right)}{\theta_q\left(\frac{\kappa_t\kappa_\infty}{\kappa_1}\kappa_0^\epsilon\right)},\\
    v_4&=\kappa_\infty^4 
    \frac{\theta_q\left(\frac{t_0\kappa_t}{\kappa_1},\frac{t_0\kappa_1}{\kappa_t}\right)^2 \theta_q(t_0\kappa_t\kappa_1\kappa_\infty^2)}{\theta_q(t_0\kappa_t\kappa_1)^2\, \theta_q\left(\frac{t_0}{\kappa_t\kappa_1}\right)}
    \prod_{\epsilon=\pm 1}\frac{\theta_q\left(\kappa_0^\epsilon,\frac{\kappa_t\kappa_1}{\kappa_\infty}\kappa_0^\epsilon\right)}{\theta_q\left(\frac{\kappa_1\kappa_\infty}{\kappa_t}\kappa_0^\epsilon,\frac{\kappa_t\kappa_\infty}{\kappa_1}\kappa_0^\epsilon\right)},\\
    v_5&=\kappa_t\kappa_1\kappa_\infty \prod_{\epsilon=\pm 1}\frac{\theta_q\left(t_0\kappa_\infty\kappa_0^\epsilon\right)}{\theta_q\left(\kappa_0^\epsilon\right)},
\end{align*}

Thus, for generic parameter values, the monodromy manifold of $q\Psix$ is isomorphic to the intersection of the two quadrics defined by equations \eqref{eq:quadrics} in $\mathbb{C}^4$. Intersections of two quadrics in $\mathbb{P}^4$ are known as Segre surfaces and it is well-known that they are isomorphic to Del Pezzo surfaces of degree four, see e.g. \cite{griffithsharris}.

It is interesting to contrast this with the monodromy manifolds of the classical Painlev\'e equations. They are isomorphic to affine cubic surfaces \cite{putsaito2009}. In particular, their corresponding projective completions are Del Pezzo surfaces of degree three \cite{griffithsharris}.

We further note that Chekhov et al. \cite{mazzoccoconjectures} conjectured explicit affine Del Pezzo surfaces of degree three as the monodromy manifolds of the $q$-Painlev\'e equations higher up in Sakai's classification scheme \cite{s:01} than $q\Psix$.

From Corollary \ref{cor:monodromy_mapping}, Theorem \ref{thm:main_smooth} and Theorem \ref{thm:main_affine}, we obtain the following corollary.

\begin{corollary}
Let $\kappa$ and $t_0$ be such that the non-resonance conditions \eqref{eq:non_res} and non-splitting conditions \eqref{eq:intro_irreducibleparameter} are fulfilled. Then,
composition of the monodromy mapping with $\Phi_{\mathcal{M}}$, defined in Definition \ref{def:affine_variety}, yields a bijective mapping from the solution space of $q\Psix(\kappa_,t_0)$ to the smooth algebraic surface $\mathcal{F}(\kappa,t_0)$,
\begin{equation}\label{eq:embeddingsolutionspace}
\{(f,g)\text{ solution of }q\Psix(\kappa_,t_0)\}\rightarrow \mathcal{F}(\kappa,t_0).
\end{equation}
In particular, we may write the general solution of $q\Psix(\kappa,t_0)$ as
\begin{subequations}\label{eq:generalsol}
\begin{align*}
    f(t)&=f(t;\kappa,t_0,\eta),\\
    g(t)&=g(t;\kappa,t_0,\eta),
\end{align*}
\end{subequations}
with $t\in q^\mathbb{Z} t_0$ and $\eta$ varying in $\mathcal{F}(\kappa,t_0)$.
\end{corollary}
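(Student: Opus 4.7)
The plan is to assemble the result directly from three preceding statements: Corollary \ref{cor:monodromy_mapping}, Theorem \ref{thm:main_affine}, and Theorem \ref{thm:main_smooth}. Under the non-resonance and non-splitting conditions, Corollary \ref{cor:monodromy_mapping} identifies the solution space of $q\Psix(\kappa,t_0)$ bijectively with the monodromy manifold $\mathcal{M}(\kappa,t_0)$ via the monodromy mapping $(f,g)\mapsto [C_0(z)]$. Theorem \ref{thm:main_affine} then provides the isomorphism $\Phi_{\mathcal{M}}:\mathcal{M}(\kappa,t_0)\to \mathcal{F}(\kappa,t_0)$. Composing these yields the desired bijection, and transports the smooth complex surface structure of $\mathcal{M}(\kappa,t_0)$ established in Theorem \ref{thm:main_smooth} onto $\mathcal{F}(\kappa,t_0)$.

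For the explicit parametric description in the second half of the statement, I would invert the composition. Given any $\eta\in \mathcal{F}(\kappa,t_0)$, there is a unique class $[C(z)]\in \mathcal{M}(\kappa,t_0)$ with $\Phi_{\mathcal{M}}([C(z)])=\eta$. The non-splitting conditions together with Lemma \ref{lem:irreducible} ensure that every representative $C(z)$ is irreducible, so Theorem \ref{thm:main_solvability} applies. For each $t=q^m t_0$, fixing a representative $C_0(z)\in \mathfrak{C}(\kappa,t_0)$ and either the solution $\Psi^{\B{m}}(z)$ or $\Psi^{\B{m+1}}(z)$ of RHP I (at least one of which exists by Theorem \ref{thm:main_solvability}), the extraction formulas \eqref{eq:rhptofgw} read off the values of $(f(t),g(t))$ from the expansions \eqref{eq:defiH}--\eqref{eq:defiU}. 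This gives the announced representation $f(t)=f(t;\kappa,t_0,\eta)$, $g(t)=g(t;\kappa,t_0,\eta)$.

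There is essentially no new obstacle in this argument, since all the substantive content lies in the theorems already proved. The only minor subtlety worth pointing out in writing is that the bijection is defined on the full solution space despite the possible non-existence of $\Psi^{\B{m}}(z)$ at isolated $m\in\mathbb{Z}$: by the last assertion of Theorem \ref{thm:main_solvability}, the failure of $\Psi^{\B{m}}(z)$ corresponds exactly to the apparent singularity $(f(t_m),g(t_m))=(\infty,\kappa_\infty)$ of $q\Psix$, which is resolved through the blowup at the base point $b_7$ in \eqref{eq:basepoints}; the adjacent index $m\pm 1$ recovers $(f,g)$ at $t_m$ via the $q$-difference equation itself. Thus the parametric formula is valid for all $t\in q^\mathbb{Z}t_0$, completing the proof.
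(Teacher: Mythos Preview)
Your proposal is correct and follows essentially the same approach as the paper, which derives the corollary directly from Corollary \ref{cor:monodromy_mapping}, Theorem \ref{thm:main_smooth}, and Theorem \ref{thm:main_affine} without further argument. Your additional remarks on the parametric formulas and the handling of the isolated $m$ where $\Psi^{\B{m}}$ fails to exist are valid elaborations but not required, since the paper treats the corollary as an immediate consequence of the cited results.
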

\begin{remark}
By identifying the domain of the mapping \eqref{eq:embeddingsolutionspace} with the initial value space of $q\Psix$ at $t=t_0$, the mapping becomes a bijective correspondence between complex (algebraic) surfaces. One can show that this correspondence is a biholomorphism using standard arguments. Namely, one observes that the matrix functions $\Psi_j(z,t_0)$, $j=0,\infty$, defined in equations \eqref{eq:linear_sys_solutions}, can be chosen locally analytically in $(f,g)$ as long as one stays away from the exceptional lines above the base points $b_7$ and $b_8$. The corresponding connection matrix is then locally analytic in $(f,g)$ and, consequently, so are the $\eta$-coordinates. To prove the latter statement around points on the exceptional lines above $b_7$ and $b_8$, one simply applies the argument with $t=q\,t_0$ rather than $t=t_0$, recalling that the time-evolution is a biholomorphism beween the initial value spaces at $t=t_0$ and $t=q\, t_0$. It follows that the mapping \eqref{eq:embeddingsolutionspace} is a bijective holomorphism and thus biholomorphism.
\end{remark}

\begin{remark}
By specialising to the parameter setting
\begin{equation}\label{eq:parametercollapse}
    \kappa_0=\kappa_t,\quad \kappa_\infty=p^{-1}\kappa_1,\quad p=q^{\frac{1}{2}},
\end{equation}
the $q\Psix(\kappa)$ equation collapses to its symmetric form
\begin{equation*}
    q\text{SP}_{\text{VI}}:\quad \widetilde{h}\undertilde{h}=\frac{(h-\kappa_t^{+1} t)(h-\kappa_t^{-1} t )}{(h-\kappa_1^{+1})(h-\kappa_1^{-1})},
\end{equation*}
where
\begin{equation*}
    h=h(t),\quad \widetilde{h}=h(p\,t),\quad \undertilde{h}=h(t/p),
\end{equation*}
and $h$ is related to $(f,g)$ as
\begin{equation*}
h(p^{2m} t_0)=f(q^mt_0),\quad h(p^{2m-1}t_0)=g(q^mt_0)\quad (m\in\mathbb{Z}).
\end{equation*}
As both the non-resonance and non-splitting conditions \eqref{eq:non_res} and \eqref{eq:intro_irreducibleparameter} are generically not violated by  \eqref{eq:parametercollapse}, all the aspects of our treatment of $q\Psix$ can be carried over to $q\text{SP}_{\text{VI}}$. We further note that $q\text{SP}_{\text{VI}}$  is also known as $q\text{P}_\text{III}$ in the literature \cite{gramram}.
\end{remark}

\begin{remark}
Regarding Painlev\'e VI, and its associated standard linear problem, the corresponding monodromy mapping was thoroughly studied by Inaba et al. \cite{inaba2006}. The associated monodromy manifold can be identified with an explicit affine cubic surface,
a fact which first appeared in Fricke and Klein \cite{frickeklein} and was rediscovered by Jimbo \cite{jimbo1982} in the context of Painlev\'e VI. Our construction of the surface $\mathcal{F}(\kappa,t_0)$, in Theorem \ref{thm:main_affine}, may be considered as a $q$-analog of this.
Iwasaki \cite{iwasaki} studied the smoothness of the Painlev\'e VI monodromy manifold and associated cubic. Theorem \ref{thm:main_smooth} can be considered a $q$-analog of \cite{iwasaki}[Theorem 1] in the non-resonant parameter regime.
\end{remark}

\section{The Linear Problem}\label{sec:linear_problem}
Consider the linear system
\begin{equation}\label{eq:linear_system}
    Y(qz)=A(z)Y(z),
\end{equation}
where $A(z)$  is a complex $2\times 2$ matrix polynomial of degree two,
\begin{equation*}
    A(z)=A_0+zA_1+z^2 A_2,
\end{equation*}
with both $A_0$ and $A_2$ invertible and semi-simple.

Jimbo and Sakai \cite{jimbosakai} showed that isomonodromic deformation of such a linear system, as the eigenvalues of $A_0$ as well as two of the zeros of the determinant of $A(z)$ evolve via multiplication by $q$, defines an evolution of the coefficient matrix $A(z)$ which is birationally equivalent to $q\Psix^{aux}$.

In Section \ref{subsec:normalising_linear}, we show that the linear system \eqref{eq:linear_system} can always be normalised to the standard form \eqref{eq:linear_problem} we use in this paper. Then, in Section \ref{subsec:isomonodromic_deformation}, we formulate the main results of Jimbo and Sakai \cite{jimbosakai} regarding isomonodromic deformation of the linear system \eqref{eq:linear_problem} and prove Lemma \ref{lem:isomonodromic}.

Finally, in Section \ref{sec:isorhp}, we show how the linear system \eqref{eq:linear_problem} can be recovered from RHP I, defined in Definition \ref{def:RHPmain}, yielding in particular Lemma \ref{lem:monodromy_mapping_injective}.

\subsection{Normalising the linear system}\label{subsec:normalising_linear}
In this section we normalise the linear system \eqref{eq:linear_system} to the standard form \eqref{eq:linear_problem}.

Recall that $A_0$ and $A_2$ are semi-simple and we denote their eigenvalues by $\{\sigma_1,\sigma_2\}$ and $\{\mu_1,\mu_2\}$ respectively. By means of gauging the linear system with a constant matrix, $Y(z)\mapsto G Y(z)$, so that $A(z)\mapsto G A(z)G^{-1}$, we may ensure that $A_2=\operatorname{diag}(\mu_1,\mu_2)$ is diagonal.

We further denote the zeros of the determinant of $A(z)$ by $x_k$, $1\leq k\leq 4$, so that
\begin{equation}\label{eq:detAA}
    |A(z)|=\mu_1\mu_2(z-x_1)(z-x_2)(z-x_3)(z-x_4).
\end{equation}
Evaluating this determinant at $z=0$ gives the identity
\begin{equation*}
    \sigma_1\sigma_2=\mu_1\mu_2 x_1 x_2 x_3 x_4.
\end{equation*}

By means of a scalar gauge as well as a scaling of the independent variable,
\begin{equation*}
    Y(z)\mapsto g(z)Y(cz),\quad g(z):=z^{\log_q(s)},\quad c,s\in\mathbb{C}^*,
\end{equation*}
so that the linear system transforms as $A(z)\mapsto s A(cz)$, we may ensure that
\begin{equation*}
    \mu_1\mu_2=1,\quad x_3x_4=1,\quad \sigma_1\sigma_2=x_1 x_2.
\end{equation*}
We introduce a time variable $t$, satisfying $t^2=\sigma_1\sigma_2$, and four nonzero parameters $\kappa=(\kappa_0,\kappa_t,\kappa_1,\kappa_\infty)$, through
\begin{align*}
&\sigma_1=\kappa_0^{+1}t, & &x_1=\kappa_t^{+1}t, & &x_3=\kappa_1^{+1}, &  &\mu_1=\kappa_\infty^{+1},\\
&\sigma_2=\kappa_0^{-1}t, & &x_2=\kappa_t^{-1}t, &
&x_4=\kappa_1^{-1}, & &\mu_2=\kappa_\infty^{-1},
\end{align*}
and note that the linear system \eqref{eq:linear_system} has now been normalised to the form \eqref{eq:linear_problem}.

\subsection{Isomonodromic deformation of the linear system} \label{subsec:isomonodromic_deformation}
In this section we state important results by Jimbo and Sakai \cite{jimbosakai} on the isomonodromic deformation of the linear system \eqref{eq:linear_problem}. Here we recall that isomonodromic deformation stands for deformation as $t\rightarrow q\,t$ such that $P(z,qt)=P(z,t)$, or equivalently, such that the connection matrix satisfies
\begin{equation}\label{eq:connection_deformation}
    C(z,qt)=z\,C(z,t)
\end{equation}

\begin{theorem}[Jimbo and Sakai \cite{jimbosakai}] \label{thm:isomonodromic}
Considering the linear system \eqref{eq:linear_problem}, equation \eqref{eq:connection_deformation} holds if and only if both $Y_0(z,t)$ and $Y_\infty(z,t)$, defined in equations \eqref{eq:linear_sys_solutions}, satisfy
\begin{equation}\label{eq:time_deformation}
    Y(z,qt)=B(z,t)Y(z,t),
\end{equation}
for an (a posteriori unique), rational in $z$, matrix function $B(z,t)$, which takes the form
\begin{equation*}
    B(z,t)=\frac{z^2 I+zB_0(t)}{(z-q\kappa_t^{+1}t)(z-q\kappa_t^{-1}t)}.
\end{equation*}
\end{theorem}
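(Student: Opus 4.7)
The plan is to reduce the condition $C(z,qt)=zC(z,t)$ to the invariance $P(z,qt)=P(z,t)$ of Birkhoff's connection matrix, and then, in the nontrivial direction, to construct $B(z,t)$ as the ratio $Y_0(z,qt)Y_0(z,t)^{-1}$ and determine its rational form by locating its poles and examining its behaviour at $0$ and $\infty$.

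First, the relation $P(z,t)=z^{\log_q(z/qt)}z^{-k_0\sigma_3}C(z,t)z^{k_\infty\sigma_3}$ recorded in the paper has a scalar prefactor in $z$, so a direct substitution shows that $C(z,qt)=zC(z,t)$ is equivalent to $P(z,qt)=P(z,t)$. The direction $\Leftarrow$ is then immediate: if both $Y_0$ and $Y_\infty$ satisfy $Y(z,qt)=B(z,t)Y(z,t)$ with the \emph{same} matrix $B$, the factors of $B$ cancel in $P(z,qt)=Y_0(z,qt)^{-1}Y_\infty(z,qt)$ and one obtains $P(z,qt)=P(z,t)$.

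For the direction $\Rightarrow$, set $B(z,t):=Y_0(z,qt)Y_0(z,t)^{-1}$. Using $Y_\infty=Y_0 P$ together with $P(z,qt)=P(z,t)$ one verifies that the same $B$ also satisfies $Y_\infty(z,qt)=B(z,t)Y_\infty(z,t)$, which in particular yields the \emph{a posteriori} uniqueness. Because the multivalued scalar prefactors in the Carmichael expansions \eqref{eq:linear_sys_solutions} commute through the matrices, the definition of $B$ collapses to the two single-valued representations
\begin{align*}
B(z,t)&=z\,\Psi_0(z,qt)\,\Psi_0(z,t)^{-1},\\
B(z,t)&=\Psi_\infty(z,qt)\,\Psi_\infty(z,t)^{-1}.
\end{align*}
Since $\Psi_0^{-1}$ is entire, the first representation exhibits $B$ as meromorphic on $\mathbb{C}$ with possible poles only at the poles of $\Psi_0(z,qt)$; the second exhibits $B$ as meromorphic on $\mathbb{P}^1\setminus\{0\}$ with possible poles only at the zeros of $|\Psi_\infty(z,t)|$.

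The main obstacle is then to pin down these poles precisely. From \eqref{eq:solutions_det2} (with $t\mapsto qt$) the first candidate pole set is $\{q^{1-k}\kappa_t^{\pm 1}t:k\ge 0\}\cup\{q^{-k}\kappa_1^{\pm 1}:k\ge 0\}$, while \eqref{eq:solutions_det1} gives the second as $\{q^{k+1}\kappa_t^{\pm 1}t:k\ge 0\}\cup\{q^{k+1}\kappa_1^{\pm 1}:k\ge 0\}$. Under the non-resonance conditions \eqref{eq:non_res}, the two $\kappa_1^{\pm 1}$-spirals are disjoint, and the two $\kappa_t^{\pm 1}t$-spirals intersect only at the two simple poles $z=q\kappa_t^{\pm 1}t$. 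Combining this with $B(z,t)=I+O(z^{-1})$ as $z\to\infty$ (from the expansion of $\Psi_\infty$) and with $B(z,t)=O(z)$ as $z\to 0$ (from the explicit factor of $z$ in the first representation together with $\Psi_0(0,t)=H(t)$), the function $B(z,t)$ is forced to be rational of the advertised shape $(z^2 I+zB_0(t))/\bigl((z-q\kappa_t t)(z-q\kappa_t^{-1}t)\bigr)$, completing the proof.
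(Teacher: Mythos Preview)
Your proof is correct. The paper does not give its own proof of this theorem (it is attributed to Jimbo and Sakai \cite{jimbosakai}), but in the proof of Theorem~\ref{thm:main_solvability} a footnote points out that the construction of $B$ there ``is essentially the derivation of the forward implication of Theorem~\ref{thm:isomonodromic}''. That construction defines $B$ piecewise across three regions bounded by $\gamma^{\B{m}}$ and $\gamma^{\B{m+1}}$, shows the jumps are trivial so $B$ extends meromorphically, and reads off the two simple poles at $q\kappa_t^{\pm 1}t$ from the zeros of $|C(z)|$ in the intermediate annulus. Your argument is the same idea packaged without the RHP contours: you use directly the two global representations $B=z\Psi_0(z,qt)\Psi_0(z,t)^{-1}=\Psi_\infty(z,qt)\Psi_\infty(z,t)^{-1}$ and intersect the candidate pole sets coming from \eqref{eq:solutions_det}. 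The two routes are equivalent; yours is a little more self-contained since it avoids reference to admissible curves.
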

We proceed in making the time-evolution defined by \eqref{eq:time_deformation} more explicit.
Note that compatibility of the linear system \eqref{eq:linear_problem} and time deformation \eqref{eq:time_deformation} amounts to the following evolution of the coefficient matrix $A$,
\begin{equation}\label{eq:linear_system_evolution}
    A(z,qt)B(z,t)=B(qz,t)A(z,t),
\end{equation}
as well as the following evolution of the diagonalising matrix $H(t)$ in \eqref{eq:diagonal},
\begin{equation}\label{eq:diagonal_evolution}
    H(qt)=B_0(t)H(t).
\end{equation}

We use the standard coordinates $f=f(t),g=g(t)$ and $w=w(t)$, defined by equations \eqref{eq:coordinates_linear}, on the linear system, whose definition we repeat here for convenience of the reader,
\begin{align}
    A_{12}(z,t)&=\kappa_\infty^{-1} w(z-f),\label{eq:param1}\\
    A_{22}(f,t)&=q(f-\kappa_1)(f-\kappa_1^{-1})g.\nonumber
\end{align}
Then the linear system is given in terms of $\{f,g,w\}$ by
\begin{equation*}
    A(z,t)=\begin{pmatrix}
    \kappa_\infty ((z-f)(z-\alpha)+g_1) & \kappa_\infty^{-1} w(z-f)\\
    \kappa_\infty w^{-1}(\gamma z+\delta) & \kappa_\infty^{-1}((z-f)(z-\beta)+g_2)
    \end{pmatrix},
\end{equation*}
where
\begin{align}
g_1&=q^{-1}\kappa_\infty^{-1}(f-\kappa_t t)(f-\kappa_t^{-1}t)g^{-1},\label{eq:g1}\\ 
g_2&=q\kappa_\infty(f-\kappa_1)(f-\kappa_1^{-1})g,\nonumber
\end{align}
and, temporarily using the notation $\mathring{\kappa}=\kappa+\kappa^{-1}$,
\begin{align*}
\alpha&=\frac{1}{(1-\kappa_\infty^2)f}
\left(\kappa_\infty^2 g_1-\kappa_\infty\mathring{\kappa}_0t+g_2+(\mathring{\kappa}_tt+\mathring{\kappa}_1)f-2 f^2\right),\\
\beta&=\frac{1}{(\kappa_\infty^2-1)f}
\left(\kappa_\infty^2 g_1-\kappa_\infty\mathring{\kappa}_0t+g_2+\kappa_\infty^2(\mathring{\kappa}_tt+\mathring{\kappa}_1)f-2 \kappa_\infty^2f^2\right),\\
\gamma&=g_1+g_2+f^2+2(\alpha+\beta)f+\alpha\beta -(t^2+\mathring{\kappa}_t\mathring{\kappa}_1t+1),\\           
\delta&=f^{-1}(t^2-(g_1+\alpha f)(g_2+\beta f)),
\end{align*}

Equation \eqref{eq:linear_system_evolution} is equivalent to the following conditions on the matrix $B_0(t)$,
\begin{equation*}
\begin{aligned}\label{eq:overdeterminedB0}
    &A(q\kappa_t^{\pm 1}t, qt)(q\kappa_t^{\pm 1}t I+B_0(t))=0,\\
    &(q\kappa_t^{\pm 1}t I+B_0(t))A(\kappa_t^{\pm 1}t, t)=0,\\
    &A_0(qt)B_0(t)=qB_0(t)A_0(t).
\end{aligned}
\end{equation*}
The first two equations follow from the fact that both the left and right-hand side of \eqref{eq:linear_system_evolution} are necessarily analytic in $z\in\mathbb{C}$ and the third follows from equating the degree one terms in $z$ of both sides of equation \eqref{eq:linear_system_evolution}.

These equations form an over-determined system for $B_0=B_0(t)$. They allow one to express $B_0$ explicitly in terms of $\{f,g,w\}$, for example
\begin{equation*}
    B_0=\begin{pmatrix}
    \frac{q}{1-q}(\overline{f}+\overline{\beta}-f-\beta) & -\frac{q(\overline{w}-w)}{q\kappa_\infty^2-1}\\
    \frac{q\kappa_\infty^2}{\kappa_\infty^2-q}\left(\frac{\overline{\gamma}}{\overline{w}}-\frac{\gamma}{w}\right) & \frac{q}{1-q}(\overline{f}+\overline{\alpha}-f-\alpha)
    \end{pmatrix},
\end{equation*}
and Jimbo and Sakai \cite{jimbosakai} showed that equations \eqref{eq:overdeterminedB0} are then equivalent to the $q\Psix^{aux}$ time evolution of $(f,g,w)$.

Furthermore, by means of a direct computation, one can check that equations \eqref{eq:diagonal} and \eqref{eq:diagonal_evolution} translate to the elements of the diagonalising matrix $H=(h_{ij})_{1\leq i,j\leq 2}$ satisfying
\begin{subequations}\label{eq:diagonalisation_evolution}
\begin{align}
\frac{\overline{h}_{11}}{h_{11}}&=-\frac{q t}{f}\frac{\kappa_\infty(\overline{g}-\kappa_0 t)}{\kappa_0(\overline{g}-\kappa_\infty)},\\
\frac{\overline{h}_{12}}{h_{12}}&=-\frac{q t}{f}\kappa_0\kappa_\infty\frac{(\overline{g}-t/\kappa_0 )}{(\overline{g}-\kappa_\infty)},\\
\frac{h_{21}}{h_{11}}&=\kappa_\infty\frac{\kappa_\infty g_1+\kappa_\infty f \alpha-t\kappa_0}{fw},\label{eq:diagonalisation_evolution3}\\
\frac{h_{22}}{h_{12}}&=\kappa_\infty\frac{\kappa_\infty g_1+\kappa_\infty f \alpha-t\kappa_0^{-1}}{fw}.\label{eq:diagonalisation_evolution4}
\end{align}
\end{subequations}

We are now in a position to prove Lemma  \ref{lem:isomonodromic}.
\begin{proof}[Proof of Lemma \ref{lem:isomonodromic}]
We start by showing that the linear system $A=A(z,t)$ is regular in $t$ away from values where $(f,g)=(\infty,\kappa_\infty)$. To this end, consider the parametrisation of $A=A(z,t)$ with respect to $(f,g,w)$.
By direct inspection, one can see that this parametrisation is regular for all values of $(f,g)\in \mathbb{C}^*\times \mathbb{C}^*$ and $w\in\mathbb{C}^*$. The same is true near each of the six basepoints $b_k$, $1\leq k\leq 6$, defined in equation \eqref{eq:basepoints}.

For example, consider the basepoint $b_3=(\kappa_t t,0)$.
We apply a change of variables,
\begin{equation*}
    f-\kappa_t t=FG,\quad g=G,
\end{equation*}
so that  $\{F\in\mathbb{C},G=0\}$ lies on the exceptional line above $b_3$, after a local blow up. The parametrisation of the matrix polynomial $A$ is regular at $G=0$, and takes the form
\begin{equation*}
    A(z,t)=\begin{pmatrix}
    \kappa_\infty ((z-\kappa_t t)(z-\alpha)+g_1) & \kappa_\infty^{-1} w(z-\kappa_t t)\\
    \kappa_\infty w^{-1}(\gamma z+\delta) & \kappa_\infty^{-1}(z-\kappa_t t)(z-\beta)
    \end{pmatrix},
\end{equation*}
with
\begin{equation*}
    g_1=q^{-1}\kappa_\infty^{-1}(\kappa_1-\kappa_1^{-1})t F.
\end{equation*}
Geometrically, the line $\{F\in\mathbb{C},G=0\}$, above $b_3$, parametrises coefficient matrices $A$ whose second column vanishes at $z=\kappa_t t$. The one remaining point on the exceptional line above $b_3$, which does not lie on this line, is an inaccessible initial value. Namely, the corresponding formal solution of $q\Psix$ never takes value in $\mathbb{C}^*\times \mathbb{C}^*$ and is thus not a genuine solution. We conclude that $A$ is regular for $(f,g)$ near $b_3$.
Similarly, it is shown that $A$ is regular near the other basepoints $b_k$, $1\leq k\leq 6$, $k\neq 3$.

The situation is slightly more involved for the remaining base-points $b_7$ and $b_8$, as the auxiliary equation \eqref{eq:auxiliary} is singular at these points. Firstly, as $(f,g)$ approaches $b_8=(\infty,\kappa_\infty^{-1}q^{-1})$, $\overline{g}$ approaches $\kappa_\infty$ and consequently $w$ vanishes, due to the auxiliary equation. Consider thus the change of variables
\begin{equation*}
    f=F^{-1},\quad g-\kappa_\infty^{-1}q^{-1}=FG,\quad w=FW.
\end{equation*}
In the local chart $\{F,G,W\}$, the coefficient matrix $A$ is regular at $F=0$.
Geometrically, the line $\{F=0,G\in\mathbb{C}\}$, above $b_8$, parametrises coefficient matrices $A$ for which the entry $A_{12}(z)$ is constant.
In particular, $A$ is regular near $b_8$.

Finally, by the same reasoning, it follows that $w\rightarrow \infty$, as $(f,g)$ approaches $b_7=(\infty,\kappa_\infty)$, and that the coefficient matrix $A$ is thus singular there.

We conclude that $A(z,t)$ is singular at $t=t_*$ if and only if $(f(t_*),g(t_*))=(\infty,\kappa_\infty)$. Correspondingly, we write
\begin{equation}
\mathfrak{M}=\left\{m\in\mathbb{Z}:(f(q^mt_0),g(q^mt_0))\neq (\infty,\kappa_\infty)\}\right\}.
\end{equation}

For every $t\in q^{\mathfrak{M}}t_0$, we choose any $H(t)$  satisfying \eqref{eq:diagonal}, but not necessarily \eqref{eq:diagonal_evolution}, and let $C(z,t)$ denote the corresponding connection matrix.
We proceed with proving equation \eqref{eq:C0definition} in the lemma. 

To prove \eqref{eq:C0definition}, it is enough to show that, for any $m\in \mathfrak{M}$, \begin{equation*}
    C(z,qt_m)=z \Delta C(z,t_m),
\end{equation*}
for some diagonal matrix $\Delta$, if $m+1\in \mathfrak{M}$, and
\begin{equation}\label{eq:cevolution2}
    C(z,q^2t_m)=z^2 \Delta C(z,t_m),
\end{equation}
for some diagonal matrix $\Delta$, if $m+1\notin \mathfrak{M}$ (so that necessarily $m+2\in \mathfrak{M}$).

The first case is a direct consequence of Theorem \ref{thm:isomonodromic}. We may further ensure that $\Delta=I$ by imposing equation \eqref{eq:diagonal_evolution} at $t=t_m$.

As to the second case, we note that, analogues to the proof of Theorem \ref{thm:isomonodromic} by Jimbo and Sakai \cite{jimbosakai}, one can show that $P(z,q^2t)=P(z,t)$ if and only if 
 $Y_0(z,t)$ and $Y_\infty(z,t)$ both satisfy
\begin{equation*}
    Y(z,q^2t)=F(z,t)Y(z,t),
\end{equation*}
for an (a posteriori unique), rational in $z$, matrix function $F(z,t)$ which takes the form
\begin{equation*}
    F(z,t)=\frac{z^4 I+z^3 F_1(t)+z^2F_0(t)}{(z-\kappa_t^{+1}qt)(z-\kappa_t^{-1}qt)(z-\kappa_t^{+1}t)(z-\kappa_t^{-1}t)}.
\end{equation*}
The corresponding time evolution of the coefficient matrix
\begin{equation*}
    A(z,q^2t)=F(qz,t)A(z,t)F(z,t)^{-1},
\end{equation*}
is equivalent to two iterations of $q\Psix^{aux}$, and
\begin{equation*}
    F(z,t)=B(z,qt)B(z,t).
\end{equation*}
By specialising to $t=t_m$, we obtain \eqref{eq:cevolution2}. We may further ensure that $\Delta=I$, by imposing
\begin{equation*}
    H(q^2 t_m)=F_0(t_m)H(t_m).
\end{equation*}
This establishes equation \eqref{eq:C0definition}.

The last statement of the lemma, follows from the fact that, rescaling $H(t)\mapsto H(t)D(t)$, yields $\Psi_0(z,t)\mapsto \Psi_0(z,t)D(t)$ and thus $C(z,t)\mapsto D(t)^{-1} C(z,t)$.
\end{proof}

\subsection{On the $q\Psix$ RHP}\label{sec:isorhp}
In Section \ref{sec:rhp}, we formulated the main Riemann-Hilbert problem for the $q\Psix$ equation, RHP I, in Definition \ref{def:RHPmain}. Let $(f,g)$ be a solution of $q\Psix(\kappa,t_0)$ and $[C(z)]$ be its corresponding monodromy in the monodromy manifold via the monodromy mapping, see Definition \ref{def:monodromy_mapping}. Then equation \eqref{eq:rhpsolution} defines a solution of RHP I. In this section, we show now we may reconstruct the solution $(f,g)$ from the solution of RHP I, giving in particular formulas \eqref{eq:rhptofgw}. This furthermore yields a proof of
Lemma \ref{lem:monodromy_mapping_injective}.

Firstly, we prove Lemma \ref{lem:uniqueness}.
\begin{proof}[Proof of Lemma \ref{lem:uniqueness}]
Note that the determinant of $z^{m}C(z)$ may be written as
\begin{equation*}
z^{2m}|C(z)|=c_m^{-1}\theta_q\left(\kappa_t^{+1}\frac{z}{t_m},\kappa_t^{-1}\frac{z}{t_m},\kappa_1^{+1}z,\kappa_1^{-1}z\right),\quad t_m=q^mt_0,
\end{equation*}
for some $c_m\in\mathbb{C}^*$.
Assume we have a solution $\Psi^{\B{m}}(z)$ of RHP I, defined in Definition \ref{def:RHPmain}. Then its determinant $\Delta^{\B{m}}(z)$ is analytic on $\mathbb{C}\setminus \gamma^{\B{m}}$, it satisfies the jump condition
\begin{equation*}
    \Delta_+^{\B{m}}(z)=\Delta_-^{\B{m}}(z)\, c_m^{-1}\theta_q\left(\kappa_t^{+1}\frac{z}{t_m},\kappa_t^{-1}\frac{z}{t_m},\kappa_1^{+1}z,\kappa_1^{-1}z\right)\quad (z\in \gamma^{\B{m}}),
\end{equation*}
and $\Delta^{\B{m}}(z)=1+\mathcal{O}(z^{-1})$ as $z\rightarrow \infty$.\\
This scalar RHP is uniquely solved by
\begin{equation}\label{eq:rhpdeterminant}
|\Delta^{\B{m}}(z)|=
\begin{cases}
\left(\kappa_t^{+1}\frac{qt}{z},\kappa_t^{-1}\frac{qt}{z},\kappa_1^{+1}\frac{q}{z},\kappa_1^{-1}\frac{q}{z};q\right)_\infty & \text{ if }z\in D_+,\\
c_m\left(\kappa_t^{+1}\frac{z}{t},\kappa_t^{-1}\frac{z}{t},\kappa_1^{+1}z,\kappa_1^{-1}z;q\right)_\infty^{-1} & \text{ if }z\in D_-.
    \end{cases}
\end{equation}
Indeed, the right-hand side satisfies this scalar RHP and, denoting the quotient of the left- and right-hand side of \eqref{eq:rhpdeterminant} by $g(z)$, it follows that $g(z)$ is an entire function on the complex plane satisfying $g(z)\rightarrow 1$ as $z\rightarrow \infty$. By Liouville's theorem, $g(z)\equiv 1$, which yields equation \eqref{eq:rhpdeterminant}.
In particular, the solution $\Psi^{\B{m}}(z)$ is globally invertible on $\mathbb{C}$.

Suppose we have another solution $\widetilde{\Psi}^{\B{m}}(z)$ of RHP I, then the quotient
\begin{equation*}
    R(z)=\widetilde{\Psi}^{\B{m}}(z)\Psi^{\B{m}}(z)^{-1},
\end{equation*}
is analytic on $\mathbb{C}\setminus \gamma^{\B{m}}$. Furthermore, $R(z)$ has a trivial jump on $\gamma^{\B{m}}$, i.e. $R_+(z)=R_-(z)$. Therefore, $R(z)$ extends to an analytic function on the entire complex plane. Finally, we know that $R(z)=I+\mathcal{O}(z^{-1})$ as $z\rightarrow \infty$, thus $R(z)\equiv I$, again by Liouville's theorem, and the lemma follows.
\end{proof}

Starting with a solution of $q\Psix$, we showed how to obtain a connection matrix in Section \ref{sec:rhp}. Therefore, we obtain a solution of RHP I -- see \eqref{eq:rhpsolution}. We now describe how conversely, any solution of RHP I leads to a solution of $q\Psix$.

Take a connection matrix $C(z)\in \mathfrak{C}(\kappa,t_0)$ and suppose RHP I has a solution for at least one $m\in\mathbb{Z}$. 
We write
\begin{equation*}
    \mathfrak{M}:=\{m\in\mathbb{Z}:\Psi^{\B{m}}(z) \text{ exists}\}.
\end{equation*}
For $m\in\mathfrak{M}$, define $A(z,q^mt_0)$ by equation \eqref{eq:rhptolinear}. Due to the jump conditions of $\Psi^{\B{m}}(z)$ in RHP I, the matrix $A(z,q^mt_0)$ has trivial jumps on $\gamma^{\B{m}}$ and $q^{-1}\gamma^{\B{m}}$ and thus extends to a single-valued function on the complex $z$-plane. Furthermore, it follows from the global analyticity and invertibility of $\Psi^{\B{m}}(z)$, see Lemma \ref{lem:uniqueness}, that $A(z,q^mt_0)$ is entire. Finally, as $\Psi^{\B{m}}(z)=I+\mathcal{O}(z^{-1})$ as $z\rightarrow \infty$, it follows that $A(z,q^mt_0)$
is a degree two matrix polynomial satisfying
\begin{align*}
    A(z,q^mt_0)&=z^2\kappa_\infty^{\sigma_3}+\mathcal{O}(z)\quad (z\rightarrow \infty),\\
    A(0,q^mt_0)&=H(q^m t_0) q^m t_0\kappa_0^{\sigma_3} H(q^m t_0)^{-1},\quad H(q^m t_0):=\Psi^{\B{m}}(0),
    \end{align*}
and, due to equations \eqref{eq:rhpdeterminant} and \eqref{eq:rhptolinear},
\begin{equation*}
|A(z,q^mt_0)|=(z-\kappa_tq^m t_0)(z-\kappa_t^{-1}q^m t_0)(z-\kappa_1)(z-\kappa_1^{-1}).
\end{equation*}
Thus, $A(z,q^mt_0)$ is a coefficient matrix of the form \eqref{eq:matrix_polynomial}, for $m\in \mathfrak{M}$. By construction, the connection matrix associated with $A(z,q^mt_0)$ is given by $z^mC(z)$, $m\in \mathfrak{M}$.

For all $m\in \mathfrak{M}$, assume that 
\begin{equation}\label{eq:assumption}
   A_{12}(z,q^mt_0)\not\equiv 0.
\end{equation}
Then the corresponding coordinates $(f,g,w)$ are well-defined on $A$, via equations \eqref{eq:coordinates_linear}, and they form a solution of $q\Psix^{\text{aux}}(\kappa,t_0)$. Furthermore, we can read the values of $(f,g,w)$ directly from the solution $\Psi^{\B{m}}(z)$ of the RHP through formulas \eqref{eq:rhptofgw}. 

These formulas are derived as follows.
By expanding equation \eqref{eq:rhptolinear} around $z=\infty$, and considering the $(1,2)$ and $(1,1)$ entry, we respectively obtain
\begin{equation}\label{eq:wuderivation}
    w=(q^{-1}-\kappa_\infty^2)u_{12},\quad \alpha=(1-q^{-1})u_{11}-f.
\end{equation}
The first equation is precisely equation \eqref{eq:rhptofgw:w} for $w$.
The formula \eqref{eq:rhptofgw:f} for $f$ follows by subtracting \eqref{eq:diagonalisation_evolution3} from \eqref{eq:diagonalisation_evolution4} and solving for $f$.
By substituting $\alpha=(1-q^{-1})u_{11}-f$ in equation \eqref{eq:diagonalisation_evolution3} we obtain equation \eqref{eq:rhptofgw:g1} for $g_1$. Finally formula \eqref{eq:rhptofgw:g} for $g$ now follows from equation \eqref{eq:g1}.

We are now in a position to prove Lemma \ref{lem:monodromy_mapping_injective}.
\begin{proof}[Proof of Lemma \ref{lem:monodromy_mapping_injective}]
We have shown that, for any solution $(f,g)$ of $q\Psix(\kappa,t_0)$, there exists a connection matrix $C(z)\in\mathfrak{C}(\kappa,t_0)$, such that the values of $(f,g)$ may be read directly from the solution $\Psi^{\B{m}}(z)$ of RHP I in Definition \ref{def:RHPmain}, via equations \eqref{eq:rhptolinear}. Here $[C(z)]=\textsc{M}\in\mathcal{M}(\kappa,t_0)$ is the monodromy attached to $(f,g)$ via the monodromy mapping.

To prove the lemma, it remains to show be shown that these formulas are invariant under choosing a different representation $[\widetilde{C}(z)]=\textsc{M}$ of the monodromy, so that $(f,g)$ indeed only depends on the class $\textsc{M}$.  We proceed in proving this statement.

As $[\widetilde{C}(z)]=[C(z)]$, there exist invertible diagonal matrices $D_{1,2}$ such that
\begin{equation*}
    \widetilde{C}(z)=D_1C(z)D_2.
\end{equation*}
Thus, the solution $\widetilde{\Psi}^{\B{m}}(z)$ of RHP I, with $C(z)\rightarrow \widetilde{C}(z)$, is related to $\Psi^{\B{m}}(z)$ by
\begin{equation*}
    \widetilde{\Psi}^{\B{m}}(z)=\begin{cases} D_2^{-1}\Psi^{\B{m}}(z)D_2 & \text{if }z\in D_+^{\B{m}},\\
    D_2^{-1}\Psi^{\B{m}}(z)D_1^{-1} & \text{if }z\in D_-^{\B{m}}.\\
    \end{cases}
    \end{equation*}
Consequently, the matrix function $\widetilde{H}$ and $\widetilde{U}$, defined by equations  \eqref{eq:defiH} and \eqref{eq:defiU} for $\widetilde{\Psi}^{\B{m}}(z)$, are related to $H$ and $U$ by
\begin{equation*}
    \widetilde{H}(t)=D_2^{-1}H(t)D_1^{-1},\quad \widetilde{U}(t)=D_2^{-1}U(t)D_2.
\end{equation*}
The formulas \eqref{eq:rhptofgw:f} and \eqref{eq:rhptofgw:g} for $f$ and $g$ are invariant under such rescaling and the lemma follows.
\end{proof}

We finish this section with some remarks on assumption \eqref{eq:assumption}. Firstly, note that this is a necessary assumption for the coordinates $(f,g,w)$ to be well-defined. 
Now, suppose that $A_{12}(z,q^mt_0)\equiv 0$, for some $m\in \mathfrak{M}$, and write $t_m=q^m t_0$. Then, we have
\begin{equation*}
    A_{11}(z,t_m)=\kappa_\infty(z-v_1)(z-v_2),\quad A_{22}(z,t_m)=\kappa_\infty^{-1}(z-v_3)(z-v_4),
\end{equation*}
where, by equation \eqref{eq:detA},
\begin{equation*}
    \{v_1,v_2,v_3,v_4\}=\{\kappa_t^{+1}t_m,\kappa_t^{-1}t_m,\kappa_1^{+1},\kappa_1^{-1}\}.
\end{equation*}
Furthermore, as the eigenvalues of $A(0,t_m)$ are $\kappa_0^{\pm 1}t_m$, necessarily
\begin{equation*}
    \{\kappa_\infty v_1v_2,\kappa_\infty^{-1}v_3 v_4\}=\{A_{11}(0,t_m),A_{22}(0,t_m)\}=\{\kappa_0 t_m,\kappa_0^{-1}t_m\}.
\end{equation*}
By comparing the different possible values of $v_{1},\ldots, v_4$ in the above two equations, it follows that the parameters must satisfy
\begin{equation*}
  \kappa_0^{\epsilon_0}\kappa_t^{\epsilon_t} \kappa_1^{\epsilon_1} \kappa_\infty^{\epsilon_\infty}=1\quad \text{or}\quad \kappa_0^{\epsilon_0} \kappa_\infty^{\epsilon_\infty}t_m=1,
\end{equation*}
for some $\epsilon_j\in\{\pm 1\}$, $j=0,t,1,\infty$. So, at least one of the non-splitting conditions \eqref{eq:intro_irreducibleparameter} is violated.

Furthermore, from the defining equations of $\Psi_{0}$ and $\Psi_{\infty}$, equations \eqref{eq:linear_sys_solutions}, it follows that $\Psi_{\infty}(z,t_m)$ is lower-triangular and either $\left(\Psi_{0}\right)_{11}(z,t_m)$ or $\left(\Psi_{0}\right)_{12}(z,t_m)$ is identically zero. In particular, either $C_{12}(z)\equiv 0$ or $C_{22}(z)\equiv 0$, which means that $C(z)$ is reducible, see Definition \ref{def:reducible}.

We discuss RHP I with reducible monodromy in further detail in Section \ref{subsection:reducible}.

\section{Solvability, Reducible Monodromy and Orthogonal Polynomials}\label{sec:solvability}
In this section we study the solvability of RHP I, defined in Definition \ref{def:RHPmain}, and consequently the invertibility of the monodromy mapping introduced in Definition \ref{def:monodromy_mapping}.
In Section \ref{subsection:solvability}, we prove Lemma \ref{lem:irreducible} and Theorem \ref{thm:main_solvability}. In Section \ref{subsection:reducible}, we discuss RHP I with reducible monodromy.

\subsection{Solvability}\label{subsection:solvability}
We start this section by proving Lemma \ref{lem:irreducible}. To this end, we briefly recall some fundamental properties of $q$-theta functions, i.e. analytic functions $\theta(z)$ on $\mathbb{C}^*$ such that $\theta(z)/\theta(qz)$ is a monomial. 
For $\alpha\in\mathbb{C}^*$ and $n\in\mathbb{N}$, we denote by $V_n(\alpha)$ the set of all analytic functions $\theta(z)$ on $\mathbb{C}^*$, satisfying
\begin{equation}\label{eq:theta}
\theta(qz)=\alpha z^{-n}\theta(z).
\end{equation}
We note that $V_n(\alpha)$ is a vector space of dimension $n$ if $n\geq 1$, see e.g. \cite{phdroffelsen}.
 
 For $r\in\mathbb{R}_+$, we call
\begin{equation*}
	D_q(r):=\{|q|r\leq|z|<r\},
\end{equation*}
a fundamental annulus. As described in the following lemma, $q$-theta functions are, up to scaling, completely determined by the location of their zeros within any fixed fundamental annulus. 
\begin{lemma}\label{lem:classificationtheta}
	Let $\alpha\in\mathbb{C}^*$, $n\in\mathbb{N}$ and  $\theta(z)$ be a nonzero element of $V_n(\alpha)$. Then, within any fixed fundamental annulus, $\theta(z)$ has precisely $n$ zeros, counting multiplicity, say $\{a_1,\ldots,a_n\}$, and there exist unique $c\in\mathbb{C}^*$ and $s\in\mathbb{Z}$ such that
	\begin{equation}\label{eq:thetaformula}
	\theta(z)=c z^s\theta_q(z/a_1,\ldots,z/a_{n}),\quad \alpha=(-1)^n q^s a_1\cdot\ldots\cdot a_n.
	\end{equation}	
	Conversely, for any choice of the parameters, equation \eqref{eq:thetaformula} defines an element of $V_n(\alpha)$.
\end{lemma}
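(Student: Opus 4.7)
The plan is to prove the zero-count first by a contour-integration argument and then reduce the general case to a canonical representative by dividing out the obvious theta product.

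For the zero-count, I would fix a fundamental annulus $D_q(r)=\{|q|r\le|z|<r\}$ with $r$ chosen so that $\theta$ has no zeros on $|z|=r$ (the general case then follows from the quasi-periodicity, which shows that all zeros lie on the $q$-spirals $a_kq^{\mathbb{Z}}$, together with a small perturbation of $r$). Applying the argument principle, the number $N$ of zeros in the annulus is
\begin{equation*}
N=\frac{1}{2\pi i}\oint_{|z|=r}\frac{\theta'(z)}{\theta(z)}dz-\frac{1}{2\pi i}\oint_{|z|=|q|r}\frac{\theta'(z)}{\theta(z)}dz.
\end{equation*}
The logarithmic derivative of the functional equation \eqref{eq:theta} gives $\theta'(qz)/\theta(qz)=q^{-1}\theta'(z)/\theta(z)-n/(qz)$. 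Substituting $z=qw$ in the second integral reduces it to $\oint_{|w|=r}\bigl(\theta'(w)/\theta(w)-n/w\bigr)dw$, and the two boundary contributions cancel, leaving $N=\frac{1}{2\pi i}\oint_{|w|=r}\frac{n}{w}dw=n$.

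Next, given the zeros $a_1,\dots,a_n$ in the annulus, I would compare $\theta(z)$ with the candidate $\phi(z):=\theta_q(z/a_1)\cdots\theta_q(z/a_n)$. Using the identity $\theta_q(qz)=-z^{-1}\theta_q(z)$ from \eqref{eq:qtheta_identities}, one checks that $\phi\in V_n((-1)^na_1\cdots a_n)$ and that $\phi$ has exactly the same zeros as $\theta$ (counted with multiplicity) on $\mathbb{C}^*$, since both are $q$-quasi-periodic with the same zero set inside one fundamental annulus. Consequently, the quotient $f(z):=\theta(z)/\phi(z)$ is analytic and nonvanishing on $\mathbb{C}^*$, and satisfies $f(qz)=\beta f(z)$ with $\beta:=\alpha/((-1)^na_1\cdots a_n)$.

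The final step is to identify $f$. Expanding $f$ as a Laurent series $f(z)=\sum_{k\in\mathbb{Z}}c_kz^k$ on $\mathbb{C}^*$, the relation $f(qz)=\beta f(z)$ forces $c_k(q^k-\beta)=0$ for every $k$; since $|q|<1$, the values $q^k$ are pairwise distinct, so at most one Laurent coefficient can be nonzero. Nonvanishing of $f$ therefore forces $\beta=q^s$ for a unique $s\in\mathbb{Z}$ and $f(z)=cz^s$ for a unique $c\in\mathbb{C}^*$, yielding both formulas in \eqref{eq:thetaformula}. Uniqueness of $c$ and $s$ follows since the multiset $\{a_1,\dots,a_n\}$ is uniquely determined by $\theta$ and thus so is $\phi$, hence so is $\theta/\phi=cz^s$. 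The converse direction is an immediate direct computation using \eqref{eq:qtheta_identities}. I expect the main (very mild) obstacle to be the bookkeeping in the argument-principle step when zeros happen to lie on the chosen boundary circle, which is handled by perturbing $r$ within its fundamental annulus.
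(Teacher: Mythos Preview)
Your proof is correct. The paper itself does not supply a proof of this lemma, instead simply citing \cite{phdroffelsen}; your argument-principle zero count followed by dividing out the canonical theta product and identifying the quotient via its Laurent expansion is the standard route to this classical fact, and all steps check out.
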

\begin{proof}
	See for instance \cite{phdroffelsen}.
\end{proof}

We proceed in proving Lemma \ref{lem:irreducible}. 
\begin{proof}[Proof of Lemma \ref{lem:irreducible}]
Take a connection matrix $C(z)\in\mathfrak{C}(\kappa,t_0)$ and suppose that $C(z)$ is reducible. Then $C(z)$ is triangular or anti-triangular.

Assume $C(z)$ is triangular, then
\begin{equation*}
C_{11}(z)C_{22}(z)=|C(z)|=c\theta_q(z\kappa_t t_0^{-1}  ,z\kappa_t^{-1} t_0^{-1},z\kappa_1,z\kappa_1^{-1}),
\end{equation*}
for some $c\in\mathbb{C}^*$, where the second equality follows from Definition \ref{def:connection_matrix_space}. Writing
\begin{equation*}
    (x_1,x_2,x_3,x_4)=(\kappa_t t_0,\kappa_t^{-1} t_0,\kappa_1 ,\kappa_1^{-1}),
\end{equation*}
it follows from Lemma \ref{lem:classificationtheta} that
\begin{equation}\label{eq:reducibleentries}
    C_{11}(z)=c_{11}\theta_q(z/x_i,z/x_j)z^n,\quad C_{22}(z)=c_{22}\theta_q(z/x_k,z/x_l)z^{-n},
\end{equation}
for some labeling $\{i,j,k,l\}=\{1,2,3,4\}$, $c_{11},c_{22}\in\mathbb{C}^*$ and $n\in\mathbb{Z}$.

Furthermore, by Definition \ref{def:connection_matrix_space},
\begin{equation*}
    \frac{C_{11}(qz)}{C_{11}(z)}=z^{-2}\frac{\kappa_0}{\kappa_\infty}t_0,\quad \frac{C_{22}(qz)}{C_{22}(z)}=z^{-2}\frac{\kappa_\infty}{\kappa_0}t_0,
\end{equation*}
which implies
\begin{equation}\label{eq:violation}
\frac{\kappa_0}{\kappa_\infty}t_0=x_i x_j q^n,\quad \frac{\kappa_\infty}{\kappa_0}t_0=x_k x_l q^{-n},
\end{equation}
violating the non-splitting conditions \eqref{eq:intro_irreducibleparameter}.

Similarly, if $C(z)$ is anti-triangular, then
\begin{equation}\label{eq:violation2}
\kappa_0 \kappa_\infty t_0=x_i x_j q^n,\quad \frac{1}{\kappa_0 \kappa_\infty}t_0=x_k x_l q^{-n},
\end{equation}
for some re-labeling $\{i,j,k,l\}=\{1,2,3,4\}$ and $n\in\mathbb{Z}$,  again violating the non-splitting conditions \eqref{eq:intro_irreducibleparameter}.

Conversely, if the non-splitting conditions \eqref{eq:intro_irreducibleparameter} do not hold true, then either equalities \eqref{eq:violation} or equalities \eqref{eq:violation2} can be realised by a re-labeling $\{i,j,k,l\}=\{1,2,3,4\}$, for some $n\in\mathbb{Z}$. In the former case, equations \eqref{eq:reducibleentries} with $C_{12}(z)\equiv C_{21}(z)\equiv 0$ define a reducible connection matrix in $\mathfrak{C}(\kappa,t_0)$.

It follows similarly that $\mathfrak{C}(\kappa,t_0)$ contains reducible monodromy in the latter case and the lemma follows.
\end{proof}

To study the solvability of RHP I, in Definition \ref{def:RHPmain}, it is helpful to consider the following slightly more general RHP.
\begin{definition}[RHP II]\label{def:RHPgeneral}
Given a connection matrix $C\in \mathfrak{C}(\kappa,t_0)$ and a family of admissable curves $(\gamma^{\B{m}})_{m\in\mathbb{Z}}$,
 for $m,n\in\mathbb{Z}$, find a matrix function $\Psi^{\B{m,n}}(z)$ which satisfies the following conditions.
  \begin{enumerate}[label={{\rm (\roman *)}}]
  \item $\Psi^{\B{m,n}}(z)$ is analytic on $\mathbb{C}\setminus\gamma^{\B{m}}$.
    \item $\Psi^{\B{m,n}}(z')$ has continuous boundary values $\Psi_-^{\B{m,n}}(z)$ and $\Psi_+^{\B{m,n}}(z)$ as $z'$ approaches $z\in \gamma^{\B{m}}$ from $D_-^{\B{m}}$ and $D_+^{\B{m}}$ respectively, related by
		\begin{equation*}
		\Psi_+^{\B{m,n}}(z)=\Psi_-^{\B{m,n}}(z)z^mC(z),\quad z\in \gamma^{\B{m}}.
              \end{equation*}
            \item $\Psi^{\B{m,n}}(z)$ satisfies
              \begin{equation*}
		\Psi^{\B{m,n}}(z)=\left(I+\mathcal{O}\left(z^{-1}\right)\right)z^{n\sigma_3}\quad z\rightarrow \infty.
              \end{equation*}
              \end{enumerate}
  \end{definition}
By comparison with RHP I in Definition \ref{def:RHPmain}, we can identify $\Psi^{\B{m,0}}(z)=\Psi^{\B{m}}(z)$. More generally, for any fixed $n\in\mathbb{Z}$, RHP II is equivalent to RHP I, with $C(z)$ replaced by $C(z)z^{-n\sigma_3}$.
In particular, we have the following analog of Lemma \ref{lem:uniqueness}.
 \begin{lemma}\label{lem:uniquenessII}
For any fixed $m,n\in\mathbb{Z}$, if RHP II in Definition \ref{def:RHPgeneral} has a solution $\Psi^{\B{m,n}}(z)$, then this solution is globally invertible on the complex plane and unique.
\end{lemma}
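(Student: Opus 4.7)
The plan is to mimic the proof of Lemma \ref{lem:uniqueness} almost verbatim, with the small modification required to handle the new asymptotic behavior $\Psi^{\B{m,n}}(z)=(I+\mathcal{O}(z^{-1}))z^{n\sigma_3}$. The key observation is that, since $\det z^{n\sigma_3}=z^n\cdot z^{-n}=1$, the scalar Riemann--Hilbert problem satisfied by the determinant $\Delta^{\B{m,n}}(z)=\det\Psi^{\B{m,n}}(z)$ is \emph{identical} to the one arising in Lemma \ref{lem:uniqueness}. So the same explicit formula provides the determinant and, in particular, forces it to be nowhere vanishing on $\mathbb{C}$.

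More precisely, the first step is to observe that $\Delta^{\B{m,n}}(z)$ is analytic on $\mathbb{C}\setminus\gamma^{\B{m}}$, satisfies the jump
\begin{equation*}
\Delta_+^{\B{m,n}}(z)=\Delta_-^{\B{m,n}}(z)\,z^{2m}|C(z)|=\Delta_-^{\B{m,n}}(z)\,c_m^{-1}\theta_q\!\left(\kappa_t^{+1}\tfrac{z}{t_m},\kappa_t^{-1}\tfrac{z}{t_m},\kappa_1^{+1}z,\kappa_1^{-1}z\right),
\end{equation*}
and behaves as $\Delta^{\B{m,n}}(z)=1+\mathcal{O}(z^{-1})$ at infinity (the $z^{n\sigma_3}$ factor contributes a trivial determinant). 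This is exactly the scalar RHP solved uniquely in \eqref{eq:rhpdeterminant} by a product of $q$-Pochhammer symbols, which has no zeros in $\mathbb{C}$ by standard properties recorded in Section \ref{sec:not}. Consequently $\Psi^{\B{m,n}}(z)$ is globally invertible on $\mathbb{C}$.

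For uniqueness, I would take any two solutions $\Psi^{\B{m,n}}$ and $\widetilde{\Psi}^{\B{m,n}}$ of RHP II, write them as $\Psi^{\B{m,n}}(z)=F(z)z^{n\sigma_3}$ and $\widetilde{\Psi}^{\B{m,n}}(z)=\widetilde{F}(z)z^{n\sigma_3}$ with $F,\widetilde{F}=I+\mathcal{O}(z^{-1})$, and form the quotient
\begin{equation*}
R(z)=\widetilde{\Psi}^{\B{m,n}}(z)\,\Psi^{\B{m,n}}(z)^{-1}=\widetilde{F}(z)\,z^{n\sigma_3}z^{-n\sigma_3}F(z)^{-1}=\widetilde{F}(z)F(z)^{-1}.
\end{equation*}
Because $\Psi^{\B{m,n}}(z)$ is globally invertible (from the first step), $R(z)$ is analytic on $\mathbb{C}\setminus\gamma^{\B{m}}$; the shared jump condition ensures $R_+(z)=R_-(z)$ on $\gamma^{\B{m}}$, so $R$ extends to an entire function; and the cancellation of the $z^{\pm n\sigma_3}$ factors gives $R(z)=I+\mathcal{O}(z^{-1})$ as $z\to\infty$. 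Liouville's theorem then forces $R\equiv I$, yielding uniqueness.

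There is essentially no obstacle here: the only subtlety, which is easily dispatched, is verifying that the diagonal factor $z^{n\sigma_3}$ does not disturb either the determinantal scalar RHP (because its determinant is $1$) or the large-$z$ behavior of the quotient of two solutions (because the factors cancel on multiplication). Consequently, the full argument is a direct transcription of the proof of Lemma \ref{lem:uniqueness}, and could in fact be phrased as a corollary of that lemma applied to the equivalent RHP I with connection matrix $C(z)z^{-n\sigma_3}$, as already noted in the paragraph following Definition \ref{def:RHPgeneral}.
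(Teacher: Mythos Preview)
Your proposal is correct and follows exactly the approach the paper indicates: its proof of Lemma \ref{lem:uniquenessII} simply reads ``The proof is analogous to that of Lemma \ref{lem:uniqueness}.'' You have spelled out precisely that analogy, including the observation (also made in the paper just before the lemma) that RHP II with parameter $n$ is equivalent to RHP I with connection matrix $C(z)z^{-n\sigma_3}$.
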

\begin{proof}
The proof is analogous to that of Lemma \ref{lem:uniqueness}.
\end{proof}
Given the uniqueness in the above lemma, we say that $\Psi^{\B{m,n}}(z)$ exists if and only if RHP II has a solution for that value of $m,n\in\mathbb{Z}$.

The main reason for considering the more general RHP above, is that we have the following result due to Birkhoff \cite{birkhoffgeneralized1913}.
\begin{lemma}\label{lem:existence}
For any fixed $m\in\mathbb{Z}$, the solution $\Psi^{\B{m,n}}(z)$ to RHP II, in Definition \ref{def:RHPgeneral}, exists for at least one $n\in\mathbb{Z}$.
\end{lemma}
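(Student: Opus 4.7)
The plan is to reduce the lemma to Birkhoff's classical 1913 factorization theorem applied to the jump $z^m C(z)$ on the Jordan curve $\gamma^{\B{m}}$. Noting that $\infty\in D_+^{\B{m}}$ and $0\in D_-^{\B{m}}$, and that by admissibility $\det(z^mC(z))=z^{2m}|C(z)|$ is non-vanishing on $\gamma^{\B{m}}$, Birkhoff's theorem furnishes integer partial indices $(k_1,k_2)\in\mathbb{Z}^2$ together with matrix-valued functions $X_+(z)$ and $X_-(z)$, holomorphic and invertible on closed neighbourhoods of $\overline{D_+^{\B{m}}}$ and $\overline{D_-^{\B{m}}}$ respectively, with $X_+(\infty)=I$, such that
\begin{equation*}
z^m C(z)=X_-(z)\,z^{\operatorname{diag}(k_1,k_2)}\,X_+(z),\qquad z\in\gamma^{\B{m}}.
\end{equation*}

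From this factorization, setting $\Psi(z):=z^{\operatorname{diag}(k_1,k_2)}X_+(z)$ on $D_+^{\B{m}}$ and $\Psi(z):=X_-(z)^{-1}$ on $D_-^{\B{m}}$, one checks directly that $\Psi$ is analytic on $\mathbb{C}\setminus\gamma^{\B{m}}$, satisfies the jump $\Psi_+=\Psi_-\,z^m C(z)$ on $\gamma^{\B{m}}$, and has the asymptotic expansion $\Psi(z)=(I+\mathcal{O}(z^{-1}))\,z^{\operatorname{diag}(k_1,k_2)}$ as $z\to\infty$. To identify $\Psi$ with $\Psi^{\B{m,n}}$ for some $n\in\mathbb{Z}$, it remains only to show $k_1+k_2=0$, so that $\operatorname{diag}(k_1,k_2)=k_1\sigma_3$. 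This I would establish by inspecting $\det\Psi$: at infinity $\det\Psi(z)\sim z^{k_1+k_2}$, while $\det\Psi$ satisfies the scalar RHP with jump $z^{2m}|C(z)|$ across $\gamma^{\B{m}}$, which is precisely the scalar problem solved explicitly in the proof of Lemma \ref{lem:uniqueness}. By formula \eqref{eq:rhpdeterminant}, the explicit solution tends to $1$ at infinity (every $q$-Pochhammer factor in the $D_+^{\B{m}}$-branch going to $(0;q)_\infty=1$), and by the Liouville argument in that same proof, any two solutions of the scalar RHP differ by a multiplicative constant. Matching asymptotics then forces $k_1+k_2=0$, and $n:=k_1$ produces the required $\Psi^{\B{m,n}}$.

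The main obstacle is locating Birkhoff's factorization in the precise form needed here---a diagonal middle factor $z^{\operatorname{diag}(k_1,k_2)}$, with $X_+$ normalised at $\infty$ and $X_-$ merely required to be holomorphic invertible on $\overline{D_-^{\B{m}}}$---since most modern treatments state Birkhoff factorization for the unit circle with opposing normalisations. The required formulation is nonetheless classical, appearing in Birkhoff's original 1913 paper and reproduced in standard references on singular integral equations such as Clancey--Gohberg; once it is in hand, the remaining steps above are a routine verification.
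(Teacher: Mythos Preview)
Your approach via Birkhoff factorization is the right one and aligns with the paper's proof, which simply cites Birkhoff \cite{birkhoffgeneralized1913}[\S 21] and \cite{joshiroffelseniv}[Lemma~4.4]. However, the step ``one checks directly that $\Psi(z)=(I+\mathcal{O}(z^{-1}))\,z^{\operatorname{diag}(k_1,k_2)}$'' is not a direct check: it is false as stated. With $D=\operatorname{diag}(k_1,k_2)$ and $\Psi=z^D X_+$ on $D_+^{\B{m}}$, one has $\Psi(z)z^{-D}=z^D X_+(z)z^{-D}$, whose $(i,j)$ entry equals $z^{k_i-k_j}(X_+)_{ij}(z)$; when $k_1>k_2$ the $(1,2)$ entry is only $\mathcal{O}(z^{k_1-k_2-1})$, not $\mathcal{O}(z^{-1})$. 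The standard repair is to left-multiply $\Psi$ by a unipotent polynomial matrix $R(z)=\bigl(\begin{smallmatrix}1 & r(z)\\ 0 & 1\end{smallmatrix}\bigr)$ with $\deg r\le k_1-k_2-1$, chosen (via a triangular linear system) so that the polynomial part of $(R\,z^D X_+ z^{-D})_{12}$ vanishes; then $R\Psi$ still satisfies conditions (i) and (ii) of RHP~II and now also (iii).

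There is a second, smaller gap in the determinant step. The Liouville argument of Lemma~\ref{lem:uniqueness} does \emph{not} yield ``any two solutions of the scalar RHP differ by a multiplicative constant'': the ratio $g=\det\Psi/\Delta$ is a priori only entire with $g(z)\sim c\,z^{k_1+k_2}$ as $z\to\infty$, hence a polynomial, not automatically a constant. What forces $g$ to be constant is that both $\det\Psi$ and the explicit $\Delta$ of \eqref{eq:rhpdeterminant} are nowhere-vanishing on $\mathbb{C}$ (the former because $X_\pm$ are invertible on $\overline{D_\pm^{\B{m}}}$ and $0\notin D_+^{\B{m}}$, the latter by admissibility of $\gamma^{\B{m}}$). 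Then $g$ is a zero-free polynomial, hence a nonzero constant, and $k_1+k_2=0$ follows.
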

\begin{proof}
See Birkhoff \cite{birkhoffgeneralized1913}[\textsection 21] or the proof of Lemma 4.4 in \cite{joshiroffelseniv}.
\end{proof}

Our next step is to study the dynamics of $\Psi^{\B{m,n}}(z)$ as $n$ varies, with the ultimate goal to obtain criteria for the existence of $\Psi^{\B{m,n}}(z)$ at $n=0$, as these will allow us to prove solvability of RHP I and thus prove Theorem \ref{thm:main_solvability}.

To this end, if $\Psi^{\B{m,n}}(z)$ exists, we denote its expansion around $z=\infty$ by
\begin{equation}\label{eq:expansioninfty}
\Psi^{\B{m,n}}(z)=\left(I+z^{-1}U^{\B{m,n}}+z^{-2}V^{\B{m,n}}+z^{-3}W^{\B{m,n}}+\mathcal{O}(z^{-4})\right)z^{n\sigma_3},
\end{equation}
as $z\rightarrow \infty$,
and associate a coefficient matrix $A^{\B{m,n}}(z)$
as in equation \eqref{eq:rhptolinear},
\begin{equation}\label{eq:rhtolineargeneral}
    A^{\B{m,n}}(z)=\begin{cases}
    z^2 \Psi^{\B{m,n}}(qz)\kappa_\infty^{\sigma_3} \Psi^{\B{m,n}}(z)^{-1} & \text{if } z\in q^{-1}(D_+^{\B{m}}\cup \gamma^{\B{m}}),\\
    q^m t_0 \Psi^{\B{m,n}}(qz)\kappa_0^{\sigma_3}C(z) \Psi^{\B{m,n}}(z)^{-1}
    & \text{if } z\in D_+^{\B{m}}\cap q^{-1}D_-^{\B{m}},\\
    q^m t_0 \Psi^{\B{m,n}}(qz)\kappa_0^{\sigma_3} \Psi^{\B{m,n}}(z)^{-1}
    & \text{if } z\in D_-^{\B{m}}\cup \gamma^{\B{m}}.
    \end{cases}
\end{equation}
Then $A^{\B{m,n}}(z)$ is a degree two matrix polynomial of the form \eqref{eq:matrix_polynomial} except for a generally different normalisation at $z=\infty$,
\begin{equation*}
    A^{\B{m,n}}(z)=z^2 (q^n\kappa_\infty)^{\sigma_3}+\mathcal{O}(z)\quad (z\rightarrow \infty).
\end{equation*}
In particular, the corresponding coordinates $f^{\B{n}}(q^m t_0)$, $g^{\B{n}}(q^m t_0)$ and $w^{\B{n}}(q^m t_0)$ define a solution of $q\Psix(\kappa^{\B{n}},t_0)$ with
\begin{equation*}
    \kappa^{\B{n}}=(\kappa_0,\kappa_t,\kappa_1,q^n\kappa_\infty),
\end{equation*}
if RHP II is solvable in $m$ for that value of $n$.

We have the following lemma regarding solvability of RHP II as $n$ varies.

\begin{lemma}\label{lem:ndependencerhp}
Fix $m,n\in\mathbb{Z}$ and suppose that the solution $\Psi^{\B{m,n}}(z)$ of RHP II in Definition \ref{def:RHPgeneral} exists. Then, recalling the definition of the matrices $U=(u_{ij})$ and $V=(v_{ij})$ in equation \eqref{eq:expansioninfty}, either
\begin{enumerate}[label={{\rm (\roman *)}}]
    \item $u_{12}^{\B{m,n}}\neq 0$, in which case $\Psi^{\B{m,n+1}}(z)$ exists.
    \item $u_{12}^{\B{m,n}}= 0$ but $v_{12}^{\B{m,n}}\neq 0$, in which case $\Psi^{\B{m,n+1}}(z)$ does not exist but $\Psi^{\B{m,n+2}}(z)$ does exist.
    \item $u_{12}^{\B{m,n}}= 0$ and $v_{12}^{\B{m,n}}= 0$, in which case $\Psi^{\B{m,n+k}}(z)$ does not exist for any $k>0$ and necessarily $C_{12}(z)\equiv 0$ or $C_{22}(z)\equiv 0$.
\end{enumerate}
Similarly, either
\begin{enumerate}[label={{\rm (\Roman *)}}]
    \item $u_{21}^{\B{m,n}}\neq 0$, in which case $\Psi^{\B{m,n-1}}(z)$ exists.
    \item $u_{21}^{\B{m,n}}= 0$ but $v_{21}^{\B{m,n}}\neq 0$, in which case $\Psi^{\B{m,n-1}}(z)$ does not exist but $\Psi^{\B{m,n-2}}(z)$ does exist.
    \item $u_{21}^{\B{m,n}}= 0$ and $v_{21}^{\B{m,n}}= 0$, in which case $\Psi^{\B{m,n-k}}(z)$ does not exist for any $k>0$ and necessarily $C_{11}(z)\equiv 0$ or $C_{21}(z)\equiv 0$.
\end{enumerate}
\end{lemma}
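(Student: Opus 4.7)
The plan is as follows. If both $\Psi^{\B{m,n}}(z)$ and $\Psi^{\B{m,n+k}}(z)$ exist, then their ratio $R(z):=\Psi^{\B{m,n+k}}(z)\Psi^{\B{m,n}}(z)^{-1}$ is globally analytic on $\mathbb{C}$ (the shared jump cancels) with $\det R(z)\equiv 1$ (by the same scalar analysis as in Lemma \ref{lem:uniquenessII}, both determinants are normalised to $1+O(1/z)$ at infinity). A comparison of the asymptotic normalisations forces $R(z)$ to be a matrix polynomial of degree $k$ with a rigid leading structure, while conversely, if such an $R(z)$ can be constructed, then $R(z)\Psi^{\B{m,n}}(z)$ solves RHP II at level $n+k$ by uniqueness. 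The problem thus reduces to matching the asymptotic expansion of $R(z)\Psi^{\B{m,n}}(z)$ against the target $(I+O(1/z))z^{(n+k)\sigma_3}$ using \eqref{eq:expansioninfty}.

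In case (i), the degree-one ansatz $R(z)=R_1 z+R_0$ yields $R_1=\operatorname{diag}(1,0)$ and $(R_0)_{22}=0$ from leading-order matching, and matching the next order together with $\det R\equiv 1$ produces the unique candidate
\begin{equation*}
R(z)=\begin{pmatrix} z+u_{22}^{\B{m,n}}-v_{12}^{\B{m,n}}/u_{12}^{\B{m,n}} & -u_{12}^{\B{m,n}}\\ 1/u_{12}^{\B{m,n}} & 0\end{pmatrix},\qquad u_{12}^{\B{m,n}}\neq 0,
\end{equation*}
whence $\Psi^{\B{m,n+1}}(z)=R(z)\Psi^{\B{m,n}}(z)$. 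When $u_{12}^{\B{m,n}}=0$, the same matching shows that no valid degree-one $R$ exists (the only forced candidate has identically zero determinant), so $\Psi^{\B{m,n+1}}$ does not exist. For case (ii), a parallel degree-two ansatz $R(z)=R_2 z^2+R_1 z+R_0$, with matching carried out to one more order and with $v_{12}^{\B{m,n}}$ taking the role previously played by $u_{12}^{\B{m,n}}$, produces an explicit $R(z)$, and hence $\Psi^{\B{m,n+2}}(z)=R(z)\Psi^{\B{m,n}}(z)$, provided $v_{12}^{\B{m,n}}\neq 0$.

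For case (iii), where $u_{12}^{\B{m,n}}=v_{12}^{\B{m,n}}=0$, I would turn to the associated matrix polynomial $A^{\B{m,n}}(z)$ given by \eqref{eq:rhtolineargeneral}. A direct expansion using \eqref{eq:expansioninfty} shows that the coefficients of $z$ and of $1$ in $A^{\B{m,n}}_{12}(z)$ are, up to factors nonvanishing under the non-resonance conditions \eqref{eq:non_res}, equal to $u_{12}^{\B{m,n}}$ and a linear combination of $u_{12}^{\B{m,n}},v_{12}^{\B{m,n}}$ respectively, so $A^{\B{m,n}}_{12}(z)\equiv 0$. The argument at the close of Section \ref{sec:isorhp} then forces $C_{12}(z)\equiv 0$ or $C_{22}(z)\equiv 0$. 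Moreover, the triangularity of $A^{\B{m,n}}(z)$ implies $\Psi_{12}^{\B{m,n}}(z)$ satisfies a scalar $q$-difference equation whose Carmichael exponent at infinity would equate $\kappa_\infty^2$ to an integer power of $q$, which is excluded by \eqref{eq:non_res}; hence $\Psi_{12}^{\B{m,n}}(z)\equiv 0$. Then any putative $R(z)$ realising $\Psi^{\B{m,n+k}}(z)=R(z)\Psi^{\B{m,n}}(z)$ for $k>0$ must have $R_{12}(z)\equiv 0$ (from asymptotic matching against $\Psi^{\B{m,n}}_{22}\sim z^{-n}$), whence $\det R=R_{11}R_{22}\equiv 1$ forces both $R_{11}$ and $R_{22}$ to be nonzero constants; but then $R$ is constant and cannot alter the normalisation level, a contradiction.

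Finally, the mirror statements (I)--(III) follow by applying the entire analysis to the conjugated RHP under $\Psi(z)\mapsto \sigma_1\Psi(z)\sigma_1$, which exchanges the roles of $u_{12},v_{12}$ with $u_{21},v_{21}$ and interchanges raising with lowering. The main technical obstacle will be the degree-two construction in case (ii): the constraint $\det R\equiv 1$ couples several orders of the asymptotic expansion, and disentangling these simultaneous conditions to exhibit an explicit $R$ requires more bookkeeping than the degree-one analysis.
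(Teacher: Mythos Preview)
Your approach is essentially the same as the paper's: both reduce existence of $\Psi^{\B{m,n+k}}$ to the existence of a matrix polynomial $R(z)$ satisfying $R(z)\Psi^{\B{m,n}}(z)=(I+\mathcal O(z^{-1}))z^{(n+k)\sigma_3}$, solve this by asymptotic matching for $k=1,2$, and handle case~(iii) via the triangularity of $A^{\B{m,n}}(z)$. Two minor remarks: the paper does not invoke $\det R\equiv 1$ but instead sets up and solves small linear systems directly, and for the non-existence in case~(iii) it reads off the contradiction from the $(2,2)$ entry alone (since $\Psi_{12}^{\B{m,n}}\equiv 0$ forces $R_{22}\Psi_{22}^{\B{m,n}}\sim z^{-n-k}$, impossible for polynomial $R_{22}$); your route via $\det R=R_{11}R_{22}\equiv 1$ is also valid, but note that this only forces $R_{11},R_{22}$ to be constants, not all of $R$ --- the contradiction still follows because a constant $R_{11}$ cannot raise the $(1,1)$ normalisation from $z^n$ to $z^{n+k}$.
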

\begin{proof}
We start with the fundamental observation that, for any $k\in\mathbb{Z}$, the solution $\Psi^{\B{m,n+k}}(z)$ exists if and only if there exists a matrix polynomial $R(z)$ which satisfies
\begin{equation}\label{eq:Rrelation}
    R(z)\Psi^{\B{m,n}}(z)=(I+\mathcal{O}(z^{-1}))z^{(n+k)\sigma_3}.
\end{equation}
Indeed, if such a matrix $R(z)$ exists, then $\Psi^{\B{m,n+k}}(z)=R(z)\Psi^{\B{m,n}}(z)$ solves RHP II. Conversely, suppose $\Psi^{\B{m,n+k}}(z)$ exists, define
\begin{equation*}
R(z)=\Psi^{\B{m,n+k}}(z)\Psi^{\B{m,n}}(z)^{-1},
\end{equation*}
then $R(z)$ has a trivial jump on $\gamma^{\B{m}}$ and consequently extends to an analytic matrix function on the whole complex plane, satisfying
\begin{equation*}
    R(z)=(I+\mathcal{O}(z^{-1}))z^{k\sigma_3}(I+\mathcal{O}(z^{-1}))\quad (z\rightarrow \infty).
\end{equation*}
It follows that $R(z)$ is a matrix polynomial and equation \eqref{eq:Rrelation} follows directly from the normalisation of $\Psi^{\B{m,n+k}}(z)$ at $z=\infty$.

By the above observation, the existence of $\Psi^{\B{m,n+k}}(z)$ can be studied through examining the solvability of equation \eqref{eq:Rrelation}, which is how we proceed in establishing the lemma.
 
Firstly, we consider $k=1$. The matrix $R(z)$ must take the form
\begin{equation*}
    R(z)=z\begin{pmatrix}1 & 0\\
    0 & 0\end{pmatrix}+
    \begin{pmatrix}r_{11} & r_{12}\\
    r_{21} & 0\end{pmatrix},
\end{equation*}
and equation \eqref{eq:Rrelation} reduces to the following linear system of equations,
\begin{equation*}
    \begin{pmatrix}
    0 & 1 & 0\\
    u_{12}^{\B{m,n}} & u_{22}^{\B{m,n}} & 0 \\
    0 & 0 & u_{12}^{\B{m,n}} \\
    \end{pmatrix}
    \begin{pmatrix}
    r_{11}\\
    r_{12}\\
    r_{21}\\
    \end{pmatrix}=
    \begin{pmatrix}
    -u_{12}^{\B{m,n}}\\
    -v_{12}^{\B{m,n}}\\
    1\\
    \end{pmatrix}
\end{equation*}
This system is solvable if and only if $u_{12}^{\B{m,n}}\neq 0$. Consequently, $\Psi^{\B{m,n+1}}(z)$ exists if and only if $u_{12}^{\B{m,n}}\neq 0$. This establishes part (i) of the lemma.

Next, assume $u_{12}^{\B{m,n}}=0$ and we proceed in studying the solvability of equation \eqref{eq:Rrelation} with $k=2$. The matrix $R(z)$ must take the form
\begin{equation*}
    R(z)=z^2\begin{pmatrix}1 & 0\\
    0 & 0\end{pmatrix}+z
    \begin{pmatrix}r_{11}^{\B{1}} & 0\\
    r_{21}^{\B{1}} & 0\end{pmatrix}+
    \begin{pmatrix}r_{11}^{\B{0}} & r_{12}^{\B{0}}\\
    r_{21}^{\B{0}} & 0\end{pmatrix},
\end{equation*}
and \eqref{eq:Rrelation} reduces to
\begin{equation*}
\begin{pmatrix}
0 & 1 & 0 & 0 & 0\\
0 & u_{22}^{\B{m,n}} & 0 & v_{12}^{\B{m,n}} & 0\\
0 & 0 & 0 & 0 & v_{12}^{\B{m,n}}\\
v_{12}^{\B{m,n}} & v_{22}^{\B{m,n}} & 0 & w_{12}^{\B{m,n}} & 0\\
0 & 0 & v_{12}^{\B{m,n}} & 0 & w_{12}^{\B{m,n}}
\end{pmatrix}
\begin{pmatrix}
    r_{11}^{\B{0}}\\
    r_{12}^{\B{0}}\\
    r_{21}^{\B{0}}\\
    r_{11}^{\B{1}}\\
    r_{21}^{\B{1}}\\
    \end{pmatrix}=
    \begin{pmatrix}
    -v_{12}^{\B{m,n}}\\
    -w_{12}^{\B{m,n}}\\
    0\\
    0\\
    1\\
    \end{pmatrix}.
\end{equation*}
It follows from direct computation that the above linear system has a solution if and only if $v_{12}^{\B{m,n}}\neq 0$. We therefore conclude that, if $u_{12}^{\B{m,n}}=0$, then $\Psi^{\B{m,n+2}}(z)$ exists if and only if $v_{12}^{\B{m,n}}\neq 0$. This establishes part (ii) of the lemma.

Finally, consider the case when both $u_{12}^{\B{m,n}}=0$ and $v_{12}^{\B{m,n}}=0$. Then it follows directly from equation \eqref{eq:rhtolineargeneral} that the entry $A_{12}^{\B{m,n}}(z)$ of the matrix polynomial $A^{\B{m,n}}(z)$
is identically zero, by considering its expansion around $z=\infty$. Furthermore, as
\begin{equation*}
\Psi^{\B{m,n}}(qz)=z^{-2}A^{\B{m,n}}(z) \Psi^{\B{m,n}}(z)\kappa_\infty^{-\sigma_3},
\end{equation*}
for $z\in q^{-1}D_+^{\B{m}}$, it follows that $\Psi_{12}^{\B{m,n}}(z)\equiv 0$ on $D_+^{\B{m}}$. 

Now, consider equation \eqref{eq:Rrelation} for any $k>0$. Its $(2,2)$-entry reads
\begin{equation}\label{eq:R22unsolvable}
 R_{22}(z)\Psi_{22}^{\B{m,n}}(z)=z^{-n-k}(1+\mathcal{O}(z^{-1})),
\end{equation}
as $z\rightarrow \infty$. However, recall that
\begin{equation*}
\Psi_{22}^{\B{m,n}}(z)=z^{-n}(1+\mathcal{O}(z^{-1})),
\end{equation*}
and thus equation \eqref{eq:R22unsolvable} has no polynomial solution $R_{22}(z)$. It follows that $\Psi^{\B{m,n+k}}(z)$ does not exist, for any $k>0$.

Finally, we prove that one of the entries of $C(z)$ must be identically zero. To this end, note that
\begin{equation}\label{eq:qdifpsi}
\Psi^{\B{m,n}}(qz)=q^m t_0A^{\B{m,n}}(z) \Psi^{\B{m,n}}(z)\kappa_0^{-\sigma_3},
\end{equation}
for $z\in D_-^{\B{m}}$. There are two options, either
\begin{equation*}
    A_{11}^{\B{m,n}}(0)=q^m t_0\kappa_0,\quad A_{22}^{\B{m,n}}(0)=q^m t_0\kappa_0^{-1},
\end{equation*}
in which case it follows from \eqref{eq:qdifpsi} that $\Psi_{12}^{\B{m,n}}(z)\equiv 0$ on $D_-^{\B{m}}$ and consequently that $C_{12}(z)\equiv 0$; or
\begin{equation*}
    A_{11}^{\B{m,n}}(0)=q^m t_0\kappa_0^{-1},\quad A_{22}^{\B{m,n}}(0)=q^m t_0\kappa_0,
\end{equation*}
in which case it follows from \eqref{eq:qdifpsi} that $\Psi_{11}^{\B{m,n}}(z)\equiv 0$ on $D_-^{\B{m}}$ and consequently that $C_{22}(z)\equiv 0$. This proves part (iii) of the lemma.

Parts (I)-(III) of the lemma are proven analogously.
\end{proof}

Note that we have the following immediate corollary from Lemmas \ref{lem:existence} and \ref{lem:ndependencerhp}.
\begin{corollary}\label{cor:existencerhpn}
Consider RHP II in Definition \ref{def:RHPgeneral} and assume $C(z)$ is irreducible. Then, for any $m,n\in\mathbb{Z}$, the solution $\Psi^{\B{m,n}}(z)$ or $\Psi^{\B{m,n+1}}(z)$ exists.
\end{corollary}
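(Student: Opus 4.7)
The plan is to combine Lemma~\ref{lem:existence}, which produces a single index $n_0$ with $\Psi^{\B{m,n_0}}(z)$ existing, together with the propagation dictated by Lemma~\ref{lem:ndependencerhp}, in order to show that the set $\mathfrak{N}_m := \{n \in \mathbb{Z} : \Psi^{\B{m,n}}(z) \text{ exists}\}$ has a complement in $\mathbb{Z}$ that contains no two consecutive integers. This is exactly the statement of the corollary.

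First I would fix $m \in \mathbb{Z}$ and use Lemma~\ref{lem:existence} to pick some $n_0 \in \mathfrak{N}_m$. Running the forward dynamics, at each $n \in \mathfrak{N}_m$ the trichotomy (i)--(iii) of Lemma~\ref{lem:ndependencerhp} applies. The assumption that $C(z)$ is irreducible rules out case (iii), since that case would force $C_{12}(z) \equiv 0$ or $C_{22}(z) \equiv 0$. Thus only (i) or (ii) can occur: in (i) we get $n+1 \in \mathfrak{N}_m$, and in (ii) we get $n+2 \in \mathfrak{N}_m$ with $n+1$ missing. Iterating yields an infinite increasing sequence $n_0 < n_1 < n_2 < \cdots$ in $\mathfrak{N}_m$ with successive gaps $n_{k+1}-n_k \in \{1,2\}$. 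Consequently, for every integer $n \ge n_0$ at least one of $n$ or $n+1$ lies in $\mathfrak{N}_m$.

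Running the analogous backward argument with parts (I)--(III) of Lemma~\ref{lem:ndependencerhp}, and again invoking irreducibility of $C(z)$ to exclude case (III), produces an infinite decreasing sequence in $\mathfrak{N}_m$ with gaps $1$ or $2$. Hence the same conclusion holds for $n < n_0$, and combining both directions we obtain the claim for all $n \in \mathbb{Z}$. The argument is a short bookkeeping on top of the two referenced lemmas; there is no substantive obstacle, and the only point to verify carefully is that irreducibility of $C(z)$ forbids the terminal cases (iii) and (III), which is immediate from their statements.
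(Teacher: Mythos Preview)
Your argument is correct and is exactly the approach the paper intends: the corollary is stated there as an immediate consequence of Lemmas~\ref{lem:existence} and~\ref{lem:ndependencerhp}, and you have simply spelled out the elementary bookkeeping that makes this immediate. The only substantive step is that irreducibility of $C(z)$ excludes the terminal cases (iii) and (III), which you correctly note; the rest is the gap-at-most-two propagation in both directions from the seed $n_0$ provided by Lemma~\ref{lem:existence}.
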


We now have all the ingredients to prove Theorem \ref{thm:main_solvability}.

\begin{proof}[Proof of Theorem \ref{thm:main_solvability}]
Take an irreducible connection matrix $C(z)\in \mathfrak{C}(\kappa,t_0)$. In RHP II, see Definition \ref{def:RHPgeneral}, we have an additional integer parameter $n$ and we denote its solution by $\Psi^{\B{m,n}}(z)$, when it exists. For $n=0$, this RHP is precisely RHP I. Proving the first part of Theorem \ref{thm:main_solvability}, is thus equivalent to showing that, for any fixed $m\in\mathbb{Z}$, the solution $\Psi^{\B{m,0}}(z)$ or $\Psi^{\B{m+1,0}}(z)$ of RHP II exists. We do this via a proof by contradiction.

Take  $m\in\mathbb{Z}$ and suppose that neither 
$\Psi^{\B{m,0}}(z)$ nor $\Psi^{\B{m+1,0}}(z)$ exists. As $C(z)$ is irreducible,  Corollary \ref{cor:existencerhpn} implies that $\Psi^{\B{m,-1}}(z)$ and $\Psi^{\B{m+1,-1}}(z)$ necessarily exist.

To deduce a contradiction, we define the following matrix function
\begin{equation}\label{eq:definitionB}
    B(z)=\begin{cases}
    \Psi^{\B{m+1,-1}}(z)\Psi^{\B{m,-1}}(z)^{-1} & \text{if }z\in D_+^{\B{m}}\cup \gamma^{\B{m}},\\
    \Psi^{\B{m+1,-1}}(z)z^{-m}C(z)^{-1}\Psi^{\B{m,-1}}(z)^{-1} & \text{if }z\in D_-^{\B{m}}\cap D_+^{\B{m+1}},\\
    z\Psi^{\B{m+1,-1}}(z)\Psi^{\B{m,-1}}(z)^{-1} & \text{if }z\in D_-^{\B{m+1}}\cup \gamma^{\B{m+1}}.
    \end{cases}
\end{equation}
The jump conditions of RHP II that $\Psi^{\B{m+1,-1}}(z)$ and $\Psi^{\B{m,-1}}(z)$ satisfy imply that $B(z)$  has only trivial jumps on $\gamma^{\B{m}}$ and $\gamma^{\B{m+1}}$. 
Consequently, $B(z)$ extends to a meromorphic function on the complex plane.

The only possible source of singularities (i.e., poles) on the right side of Equation \eqref{eq:definitionB}, is the term $C(z)^{-1}$. (Note that  $\Psi^{(m,n)}$ are analytic functions of $z$, which moreover are  invertible for all $z$, see Lemma \ref{lem:uniquenessII}.) In $D_-^{\B{m}}\cap D_+^{\B{m+1}}$, we know that the determinant of $C(z)$ only vanishes at $z=\kappa_t^{\pm 1}q^{m+1} t_0$, so that $C(z)^{-1}$ has (simple) poles there. Therefore, $B(z)$ has simple poles at $z=\kappa_t^{\pm 1}q^{m+1} t_0$. This, combined with the fact that $B(0)=0$ and $B(\infty)=I$, yields
\begin{equation*}
    B(z)=\frac{z^2 I+z B_0}{(z-\kappa_tq^{m+1} t_0)(z-\kappa_t^{-1}q^{m+1} t_0)},
\end{equation*}
for a constant matrix $B_0$\footnote{Note that this is essentially the derivation of the forward implication of Theorem \ref{thm:isomonodromic}.}.

We now turn our attention to the coefficient matrices $A^{\B{m,-1}}(z)$ and $A^{\B{m+1,-1}}(z)$ related to $\Psi^{\B{m,-1}}(z)$ and $\Psi^{\B{m+1,-1}}(z)$
via equation \eqref{eq:rhtolineargeneral}. It follows from the defining equation of $B(z)$, equation \eqref{eq:definitionB}, that these coefficient matrices are related by
\begin{equation}\label{eq:compatiblerhp}
    A^{\B{m+1,-1}}(z)B(z)=B(qz)A^{\B{m,-1}}(z).
\end{equation}
To deduce this, it suffices to note that, for $z\in q^{-1}D_+^{\B{m}}$, compatibility of the first rows of the right-hand sides of equations \eqref{eq:rhtolineargeneral} and \eqref{eq:definitionB}, yields equation \eqref{eq:compatiblerhp}. By analytic continuation, equation \eqref{eq:compatiblerhp} holds globally.

Now, recall that $\Psi^{\B{m,n}}(z)$ has an asymptotic expansion at infinity, see equation \eqref{eq:expansioninfty}, of the form
\begin{equation*}
\Psi^{\B{m,n}}(z)=\left(I+z^{-1}U^{\B{m,n}}+z^{-2}V^{\B{m,n}}+\mathcal{O}(z^{-3})\right)z^{n\sigma_3},\quad U=(u_{ij}),\quad V=(v_{ij}).
\end{equation*}
Due to part (i) of Lemma \ref{lem:ndependencerhp}, we know that $u_{12}^{\B{m,-1}}=0$ and $u_{12}^{\B{m+1,-1}}=0$. We will proceed in showing that also $v_{12}^{\B{m,-1}}=0$, which, due to part (iii) of Lemma \ref{lem:ndependencerhp}, means that $C(z)$ is not irreducible, giving us the desired contradiction.

To get there, we first note that, by considering the expansion of $B(z)$ as $z\rightarrow \infty$ in the first row of equation \eqref{eq:definitionB}, we obtain $(B_0)_{12}=0$.

Similarly, as $u_{12}^{\B{m,-1}}=0$, it follows from the first row of the right-hand side of equation \eqref{eq:rhtolineargeneral} that the $(1,2)$ entry of $A$ satisfies $A_{12}^{\B{m,-1}}(z)=\mathcal{O}(1)$ as $z\rightarrow \infty$. Namely
\begin{equation}\label{eq:A12constant}
    A_{12}^{\B{m,-1}}(z)\equiv c,
\end{equation}
where $c$ is a constant.

We now show that, equations \eqref{eq:compatiblerhp}, \eqref{eq:A12constant} and the fact that $(B_0)_{12}=0$ imply that $v_{12}^{\B{m,-1}}=0$.

Firstly, by comparing the determinants of the left and right-hand sides of equation \eqref{eq:compatiblerhp}, we obtain
\begin{equation*}
    |zI+B_0|=(z-\kappa_tq^{m+1} t_0)(z-\kappa_t^{-1}q^{m+1} t_0).
\end{equation*}
As $(B_0)_{12}=0$, this implies the following dichotomy: either
\begin{enumerate}[label=(\Roman*)]
    \item $B_0=\begin{pmatrix}
    -\kappa_tq^{m+1} t_0 & 0\\
    b_{21} & -\kappa_t^{-1}q^{m+1} t_0
    \end{pmatrix}$, \hspace{0.5cm}or
    \item $B_0=\begin{pmatrix}
    -\kappa_t^{-1}q^{m+1} t_0 & 0\\
    b_{21} & -\kappa_t q^{m+1} t_0
    \end{pmatrix}$,
\end{enumerate}
for some $b_{21}\in\mathbb{C}$.

Secondly, the left-hand side of equation \eqref{eq:compatiblerhp} is analytic at $z=\kappa_t^{\pm 1}q^{m} t_0$, but $B(qz)$, on the right-hand side, has a pole at those two points. This means that
\begin{equation}\label{eq:BArelation}
    (\kappa_t^{\pm 1}q^{m+1}t_0 I+B_0)A^{\B{m,-1}}(\kappa_t^{\pm 1}q^{m}t_0)=0.
\end{equation}
We now consider the $(1,2)$-entry of equation \eqref{eq:BArelation} for the two choices of the sign $\pm$. The positive choice leads to a tautology in Case (I), while the negative choice gives
\begin{equation*}
    (\kappa_t^{-1}-\kappa_t)q^{m+1} t_0 c=0.
\end{equation*}
On the other hand, the positive choice in Case (II) gives 
\begin{equation*}
    (\kappa_t-\kappa_t^{-1})q^{m+1} t_0 c=0,
\end{equation*}
while the negative choice is a tautology.
Due to the non-resonance conditions \eqref{eq:non_res}, $\kappa_t^2\neq 1$, and so it follows from the above results that $c=0$. Therefore, $A_{12}^{\B{m,-1}}(z)$ is identically zero, by equation \eqref{eq:A12constant}.

Since $A$ is lower triangular, it follows from equation \eqref{eq:rhtolineargeneral} that $\Psi^{\B{m,-1}}(z)$ must be lower triangular for $z\in D_+^{\B{m}}$. In particular, $u_{12}^{\B{m,-1}}=v_{12}^{\B{m,-1}}=0$, which, due to part (iii) of Lemma \ref{lem:ndependencerhp}, means that $C(z)$ is not irreducible, giving us the desired contradiction. 

We conclude that solution $\Psi^{\B{m}}(z)$ or $\Psi^{\B{m+1}}(z)$ of RHP I exists for any $m\in\mathbb{Z}$, establishing the first part of the theorem.

Let $(f,g,w)$ be the corresponding solution of $q\Psix^{aux}(\kappa,t_0)$ via \eqref{eq:rhptofgw}. The second part of the theorem asserts that for $m\in\mathbb{Z}$, $\Psi^{\B{m}}(z)$ fails to exist if and only if
 $(f(t_m),g(t_m))=(\infty,\kappa_\infty)$.
 
So, suppose $m\in\mathbb{Z}$ is such that $\Psi^{\B{m}}(z)$ fails to exist. If $(f(t_m),g(t_m))\neq (\infty,\kappa_\infty)$, then it follows from Lemma \ref{lem:isomonodromic} that the coefficient matrix $A(z,t_m)$ is well-defined. But then equation \eqref{eq:rhpsolution} would yield a solution of RHP I, that is, $\Psi^{\B{m}}(z)$ exists, which contradicts our assumption. Thus $(f(t_m),g(t_m))= (\infty,\kappa_\infty)$.

On the other hand, if $\Psi^{\B{m}}(z)$ exists, then $A(z,t_m)$ is well-defined, via equation \eqref{eq:rhptolinear}, and consequently $(f(t_m),g(t_m))\neq (\infty,\kappa_\infty)$, by Lemma \ref{lem:isomonodromic}. So, indeed, $\Psi^{\B{m}}(z)$ fails to exist if and only if
 $(f(t_m),g(t_m))=(\infty,\kappa_\infty)$. This completes the proof of the theorem.
\end{proof}

\subsection{Reducible monodromy, orthogonal polynomials and special function solutions} \label{subsection:reducible}

In the case of $\Psix$, it is well-known that reducible monodromy yields special function solutions -- see Mazzocco \cite{mazzoccorational}. Furthermore, in such case, the solution of the standard RHP for $\Psix$, when solvable, can be solved explicitly in terms of certain orthogonal polynomials \cites{dubrovinkapaev,fokasitskapaev}.

In this subsection, we show that the same phenomenon occurs for $q\Psix$. Recall that the monodromy manifold contains reducible monodromy if and only if conditions \eqref{eq:intro_irreducibleparameter1} or conditions  \eqref{eq:intro_irreducibleparameter2} are violated.
We discuss one example from each of these two sets of non-splitting conditions.

Firstly, we consider the case where
\begin{equation*}
    \kappa_0=\kappa_t\kappa_1 \kappa_\infty,
\end{equation*}
violating one of the conditions in \eqref{eq:intro_irreducibleparameter1}, and consider RHP II, defined in Definition \ref{def:RHPgeneral}, with the following upper-triangular connection matrix $C(z)\in\mathfrak{C}(\kappa,t_0)$,
\begin{equation}\label{eq:triangularconnection}
    C(z)=\begin{pmatrix}
    \theta_q\left(\frac{z}{\kappa_t t_0},\frac{z}{\kappa_1}\right) & c\,\theta_q\left(\frac{z}{\nu t_0},\frac{z\nu}{\kappa_0\kappa_\infty}\right)\\
    0 & \theta_q\left(\frac{z\kappa_t}{t_0},z\kappa_1\right)
    \end{pmatrix}.
\end{equation}
Here $c\in\mathbb{C}$ and $\nu\in\mathbb{C}^*$ are two monodromy datums that can be chosen at pleasure.

Writing $t_m=q^m t_0$, the jump matrix of $\Psi^{\B{m,n}}(z)$ in RHP II can be written as
\begin{equation*}
    z^mC(z)=(-1)^m q^{\frac{1}{2}m(m+1)}t_0^m\begin{pmatrix}
    \kappa_t^m\theta_q\left(\frac{z}{\kappa_t t_m},\frac{z}{\kappa_1}\right) & c\nu^m\theta_q\left(\frac{z}{\nu t_m},\frac{z\nu}{\kappa_0\kappa_\infty}\right)\\
    0 & \kappa_t^{-m}\theta_q\left(\frac{z\kappa_t}{t_m},z\kappa_1\right)
    \end{pmatrix}.
\end{equation*}
We bring RHP II into the standard Fokas-Its-Kitaev RHP form \cites{fokasitskitaev1,fokasitskitaev2} for orthogonal polynomials, by applying a transformation
\begin{equation*}
  Y^{\B{m,n}}(z)=\begin{cases}
  D_1^{-1}\Psi^{\B{m,n}}(z)F_\infty^{\B{m}}(z)^{-1}D_1  & \text{ if $z\in D_+^{\B{m}}$,}\\
   D_1^{-1}\Psi^{\B{m,n}}(z)F_0^{\B{m}}(z)^{-1}D_2 & \text{ if $z\in D_-^{\B{m}}$,}\\
  \end{cases}
\end{equation*}
where $D_1$ and $D_2$ diagonal matrices and $F_0^{\B{m}}(z)$ and $F_\infty^{\B{m}}(z)$ analytic and invertible matrix functions on respectively $D_-^{\B{m}}$ and $D_+^{\B{m}}$.

After such a transformation, the jump matrix of $Y^{\B{m,n}}(z)$ reads
\begin{equation*}
    J^{\B{m}}(z)=D_2^{-1}F_0^{\B{m}}(z)z^m C(z)F_\infty^{\B{m}}(z)^{-1}D_1,
\end{equation*}
and we wish to choose $D_{1,2}$ and $F_{0,\infty}$ such that this jump matrix is upper-triangular with diagonal entries constant and equal to $1$.
To this end, we choose $F_{0,\infty}$ so that they cancel the $q$-theta functions on the diagonal,
\begin{align*}
F_\infty^{\B{m}}(z)&=
\begin{pmatrix}
\left(\frac{q\kappa_tt_m}{z},\frac{q\kappa_1}{z};q\right)_\infty & 0\\
0 & \left(\frac{qt_m}{\kappa_t z},\frac{q}{\kappa_1z};q\right)_\infty\\
\end{pmatrix},\\
F_0^{\B{m}}(z)&=\begin{pmatrix}
\left(\frac{z}{\kappa_t t_m},\frac{z}{\kappa_1};q\right)_\infty & 0\\
0 & \left(\frac{z\kappa_t}{t_m},z\kappa_1;q\right)_\infty\\
\end{pmatrix},
\end{align*}
and we choose $D_1$ and $D_2$ to normalise the now constant diagonal entries so that they equal $1$,
\begin{equation*}
    D_1=\begin{pmatrix}
    \nu^{m} & 0\\
    0 & \kappa_t^{m}\\
    \end{pmatrix},\quad
    D_2=(-1)^m q^{\frac{1}{2}m(m+1)}t_0^m\begin{pmatrix}
    \nu^{m}\kappa_t^m & 0\\
    0 & 1\\
    \end{pmatrix}.
\end{equation*}
Then the jump matrix reads
\begin{equation*}
    J^{\B{m}}(z)=\begin{pmatrix}
    1 & w(z,t_m)\\
    0 & 1
    \end{pmatrix},
\end{equation*}
where
\begin{equation}\label{eq:weight_function}
     w(z,t)=\frac{c\,\theta_q\left(\frac{z}{\nu t},\frac{z \nu}{\kappa_0\kappa_\infty}\right)}{\left(\frac{z}{\kappa_t t},\frac{z}{\kappa_1};q\right)_\infty\left(\frac{qt}{\kappa_t z},\frac{q}{\kappa_1z};q\right)_\infty},
\end{equation}
and $Y^{\B{m,n}}(z)$ solves the following RHP, if it exists.
\begin{definition}[RHP III]\label{def:RHPorthogonal}
For $m,n\in\mathbb{Z}$, find a matrix function $Y^{\B{m,n}}(z)$ which satisfies the following conditions.
  \begin{enumerate}[label={{\rm (\roman *)}}]
  \item $Y^{\B{m,n}}(z)$ is analytic on $\mathbb{C}\setminus\gamma^{\B{m}}$.
    \item $Y^{\B{m,n}}(z')$ has continuous boundary values $Y_-^{\B{m,n}}(z)$ and $Y_+^{\B{m,n}}(z)$ as $z'$ approaches $z\in \gamma^{\B{m}}$ from $D_-^{\B{m}}$ and $D_+^{\B{m}}$ respectively, related by
		\begin{equation*}
		Y_+^{\B{m,n}}(z)=Y_-^{\B{m,n}}(z)\begin{pmatrix}
    1 & w(z,t_m)\\
    0 & 1
    \end{pmatrix}\qquad (z\in \gamma^{\B{m}}),
              \end{equation*}
    where $w(z,t)$ is the weight function defined in equation \eqref{eq:weight_function}.      
            \item $Y^{\B{m,n}}(z)$ satisfies
              \begin{equation*}
		Y^{\B{m,n}}(z)=\left(I+\mathcal{O}\left(z^{-1}\right )\right)z^{n\sigma_3}\quad z\rightarrow \infty.
              \end{equation*}
              \end{enumerate}
  \end{definition}
 RHP III is the standard Fokas-Its-Kitaev RHP for orthogonal polynomials on the contour $\gamma^{\B{m}}$ with respect to the weight function $w(z,t_m)$. We refer to Deift \cite{deiftorthogonal} for more background information on the theory of orthogonal polynomials and corresponding RHPs.

We proceed to draw some immediate conclusions from the equivalence between the RHPs II and III, given in Definitions \ref{def:RHPgeneral} and  \ref{def:RHPorthogonal} respectively, and the theory of orthogonal polynomials.
 If $n<0$, then RHP III is unsolvable for every $m\in\mathbb{Z}$ and thus the same holds true for RHP II.
 
 When $n=0$, RHP III is solvable for every $m\in\mathbb{Z}$ and the solution is explicitly given by
 \begin{equation*}
    Y^{\B{m,0}}(z)=\begin{pmatrix}
    1 & -\mathcal{C}^{\B{m}}\left[w(\cdot,t_m)\right](z)\\
    0 & 1
    \end{pmatrix},
 \end{equation*}
 where $\mathcal{C}^{\B{m}}$ denotes the Cauchy operator on  $\gamma^{\B{m}}$,
 \begin{equation*}
	\mathcal{C}^{\B{m}}\left[h(\cdot)\right](z)=\frac{1}{2\pi i}\oint_{\gamma^{\B{m}}}\frac{h(x)}{x-z}dx\quad (h(\cdot)\in L^2(\gamma^{\B{m}})).
	\end{equation*}
 
 When $n>0$, RHP III is solvable if and only if the Hankel determinant of moments
\begin{equation*}
 \Delta_n(t_m):=\begin{vmatrix} 
 \mu_{0} & \mu_{1} & \ldots & \mu_{n-1}\\
 \mu_{1} & \mu_{2} &\ldots  & \mu_{n}\\
 \vdots & \vdots  &\ddots & \vdots\\
 \mu_{n-1} & \mu_n & \ldots & \mu_{2n-2}
 \end{vmatrix},\quad \mu_k:=\frac{1}{2\pi i}\oint_{\gamma^{\B{m}}}z^kw(z,t_m)dz\quad (k\in\mathbb{Z}),
 \end{equation*}
is nonzero, in which case the solution of the RHP is explicitly given by
\begin{equation*}
    Y^{\B{m,n}}(z)=\frac{1}{\Delta_n(t_m)}\begin{pmatrix}
    p_n(z;t_m) & -\mathcal{C}^{\B{m}}\left[p_n(\cdot;t_m)w(\cdot,t_m)\right](z)\\
    p_{n-1}(z;t_m) & 
    -\mathcal{C}^{\B{m}}\left[p_{n-1}(\cdot;t_m)w(\cdot,t_m)\right](z)
    \end{pmatrix},
 \end{equation*}
where $p_n(z;t_m)$, for $n\geq 0$, denotes the (generically) degree $n$ polynomial
 \begin{equation*}
 p_n(z;t_m)=\begin{vmatrix} 
 \mu_0 & \mu_1 & \ldots & \mu_{n}\\
 \mu_1 & \mu_2 &\ldots  & \mu_{n+1}\\
 \vdots & \vdots  &\ddots & \vdots\\
 \mu_{n-1} & \mu_n & \ldots & \mu_{2n-1}\\
 1 & z &\ldots & z^n
 \end{vmatrix}.
 \end{equation*}
The latter polynomials satisfy the orthogonality condition
 \begin{equation*}
  \frac{1}{2\pi i}\oint_{\gamma^{\B{m}}}p_l(z,t_m)p_n(z,t_m)w(z,t_m)dz=\Delta_n(t_m)\Delta_{n+1}(t_m)\delta_{l,n}\quad (l,n\in\mathbb{N}),
 \end{equation*}
 and thus form a sequence of orthogonal polynomials with respect to the complex functional
 \begin{equation}\label{eq:complex_functional}
     \mathbb{C}[z]\rightarrow \mathbb{C},p(z)\mapsto \frac{1}{2\pi i}\oint_{\gamma^{\B{m}}}p(z)w(z,t_m)dz,
 \end{equation}
 when none of the Hankel determinants vanish.
 
 We denote
 \begin{equation*}
    \mathfrak{M}_n:=\{m\in\mathbb{Z}:\Psi^{\B{m,n}}(z) \text{ exists}\},
\end{equation*}
and assume $c\neq 0$. We may employ a similar argument as in the proof of Theorem \ref{thm:main_solvability}, to show that, for any $m\in\mathbb{Z}$, $m\in \mathfrak{M}_n$ or $m+1\in \mathfrak{M}_n$. We thus obtain a corresponding solution $(w^{\B{n}},f^{\B{n}},g^{\B{n}})$ of $q\Psix^\text{aux}(\kappa^{\B{n}},t_0)$, where
\begin{equation*}
    \kappa^{\B{n}}=(\kappa_0,\kappa_t,\kappa_1,\kappa_{\infty,n}),\quad \kappa_{\infty,n}:=q^n\kappa_\infty, \quad \kappa_0=\kappa_t\kappa_1\kappa_\infty,
\end{equation*}
 for $n\geq 0$.
 
We proceed to derive explicit formulas for $f^{\B{n}}$ and $g^{\B{n}}$. To this end, we note that the next to highest order coefficient, in the asymptotic expansion
 \begin{equation*}
     Y^{\B{m,n}}(z)=\left(I+z^{-1}Y_1^{\B{m,n}}+\mathcal{O}\left(z^{-2}\right )\right)z^{n\sigma_3}\quad z\rightarrow \infty,
 \end{equation*}
 can be written explicitly as
 \begin{equation*}
     Y_1^{\B{m,n}}=\displaystyle\begin{pmatrix}
     -\frac{\Gamma_n(t_m)}{\Delta_n(t_m)} & \frac{\Delta_{n+1}(t_m)}{\Delta_n(t_m)}\\
     \frac{\Delta_{n-1}(t_m)}{\Delta_n(t_m)} & \frac{\Gamma_n(t_m)}{\Delta_n(t_m)}
     \end{pmatrix},
 \end{equation*}
where $\Delta_n(t_m)$ denotes the $n$-th Hankel determinant of moments and
\begin{equation*}
 \Gamma_n(t_m):=\begin{vmatrix} 
	\mu_{0} & \mu_{1} & \ldots & \mu_{n-2}& \mu_{n}\\
	\mu_{1} & \mu_{2} &\ldots  & \mu_{n-1}& \mu_{n+1}\\
	\vdots & \vdots  &\ddots & \vdots\\
	\mu_{n-2} & \mu_{n-1} & \ldots & \mu_{2n-4}& \mu_{2n-2}\\
	\mu_{n-1} & \mu_n & \ldots & \mu_{2n-3}& \mu_{2n-1}
\end{vmatrix},
 \end{equation*}
 with $\Gamma_1(t_m)=\mu_1$ and $\Gamma_0(t_m)=0$.
 
 By direct substitution of the corresponding asymptotic expansion of $\Psi^{\B{m,n}}(z)$ around $z=\infty$ into equation \eqref{eq:rhptolinear}, we find
 \begin{equation*}
w^{\B{n}}(t_m)=-q^{-1}(q\kappa_{\infty,n}^2-1)\left(\frac{\nu}{\kappa_t}\right)^m \frac{\Delta_{n+1}(t_m)}{\Delta_n(t_m)},
 \end{equation*}
 and
 \begin{equation}\label{eq:fexplicit}
 f^{\B{n}}(t)=\frac{\kappa_{\infty,n}^2-1}{q\kappa_{\infty,n}^2-1}\frac{\Gamma_n(t)}{\Delta_n(t)}-
 \frac{q^2\kappa_{\infty,n}^2-1}{q\kappa_{\infty,n}^2-1}\frac{\Gamma_{n+1}(t)}{\Delta_{n+1}(t)}+L(t),
 \end{equation}
 where $t=t_m$ and the linear term $L$ reads
 \begin{equation*}
     L(t)=\kappa_t t+\kappa_1+\frac{\kappa_t(\kappa_1^2-1)+\kappa_1(\kappa_t^2-1)t}{\kappa_t\kappa_1(q\kappa_{\infty,n}^2-1)}.
 \end{equation*}
 Upon substituting the explicit formula for $w^{\B{n}}$ into the auxiliary equation \eqref{eq:auxiliary}, and solving for $g$, we obtain
 \begin{equation}\label{eq:gexplicit}
 g^{\B{n}}(t)=\kappa_{\infty,n}\frac{\nu \Delta_n(t/q)\Delta_{n+1}(t)-\kappa_t\Delta_n(t)\Delta_{n+1}(t/q)\textcolor{white}{q\kappa_{\infty,n}^2}}{\nu \Delta_n(t/q)\Delta_{n+1}(t)-\kappa_t\Delta_n(t)\Delta_{n+1}(t/q)q\kappa_{\infty,n}^2}.
 \end{equation}
 Note that, by the above formulas, $\Delta_n(t)=0$ if and only if $f^{\B{n}}(t)=\infty$ and $g^{\B{n}}(t)=\kappa_{\infty,n}$, consistent with Theorem \ref{thm:main_solvability}.

 Furthermore, the moments $\mu_k=\mu_k(t)$ can be expressed explicitly in terms of Heine's basic hypergeometric functions. Indeed, a residue computation  yields that the $k$-th moment equals
 	\begin{align*}
 	\mu_k=&S_1+S_2,\\
	S_1=&\frac{c\,\kappa_0^2\, \theta_q(q\kappa_t\nu)}{(q;q)_\infty(q/\kappa_t^2;q)_\infty}\frac{\left(q^{1+k}\frac{q\kappa_0^2}{\kappa_t^2};q\right)_\infty}{\left(q^{1+k}\kappa_0^2;q\right)_\infty}\left(\frac{qt}{\kappa_t}\right)^{k+1}\frac{\theta_q\left(\frac{\kappa_1\nu t}{\kappa_0^2}\right)}{\theta_q\left(\frac{\kappa_1 t}{\kappa_t}\right)}\\
	&\times
	\;_{2}\phi_1\left[\begin{matrix} 
		\kappa_1^2, q^{1+k}\kappa_0^2 \\ 
		q^{2+k}\frac{\kappa_0^2}{\kappa_t^2} \end{matrix} 
	; q,\frac{qt}{\kappa_t\kappa_1} \right],\\
	S_2=&\frac{c\,\kappa_0^2\, \theta_q\left(\frac{\kappa_t\nu}{\kappa_0^2}\right)}{\nu \kappa_t(q;q)_\infty(q/\kappa_1^2;q)_\infty}\frac{\left(q^{1+k}\frac{q\kappa_0^2}{\kappa_1^2};q\right)_\infty}{\left(q^{1+k}\kappa_0^2;q\right)_\infty}\left(\frac{q}{\kappa_1}\right)^{k+1}\frac{\theta_q\left(\kappa_1\nu t\right)}{\theta_q\left(\frac{\kappa_1 t}{\kappa_t}\right)}\\
	&\times\;_{2}\phi_1 \left[\begin{matrix} 
		\kappa_t^2, q^{1+k}\kappa_0^2 \\ 
		q^{2+k}\frac{\kappa_0^2}{\kappa_1^2} \end{matrix} 
	; q,\frac{q}{\kappa_t\kappa_1t} \right].
\end{align*}	

Sakai \cite{sakaicasorati} first derived special function solutions of $q\Psix$, written in terms of Casorati determinants of Heine's basic hypergeometric functions, which correspond to setting $\nu=\kappa_t^{-1}$ or $\nu=\kappa_0^2/\kappa_t$ in the above, so that $S_1=0$ or $S_2=0$ respectively.

Ormerod et al. \cite{forresterqorthogonal} related a family of semi-classical orthogonal polynomials to $q\Psix$, via the Jimbo-Sakai linear system, and derived formulas similar to \eqref{eq:fexplicit} and \eqref{eq:gexplicit} above.
To relate the orthogonal polynomials in this section to those in \cite{forresterqorthogonal}, we write the complex functional \eqref{eq:complex_functional} in terms of $q$-Jackson integrals. Assuming that $|\kappa_0|<|q|^{-\frac{1}{2}}$, a residue computation gives
 \begin{align*}
\frac{1}{2\pi i}\oint_{\gamma^{\B{m}}} p(z)w(z,t_m)dz=&\alpha_1(t_m;c,\nu) \int_0^{q \kappa_t^{-1}t_m}p(z)W(z,t_m)d_qz\\
&+\alpha_2(t_m;c,\nu) \int_0^{q \kappa_1^{-1}}p(z)W(z,t_m)d_qz,
 \end{align*}
for any entire function $p(z)$, where the right-hand side integrals are standard Jackson integrals, $W(z,t)$ is the weight function
 \begin{equation*}
     W(z,t):=z^\sigma \frac{\left(\frac{\kappa_t z}{ t},\kappa_1 z;q\right)_\infty}{\left(\frac{z}{\kappa_t  t},\frac{z}{\kappa_1};q\right)_\infty},\quad \sigma:=2\log_q\kappa_0,
 \end{equation*}
and the dependence of the integral operator on the monodromy data $\{c,\nu\}$ is hidden in the coefficients in front of the Jackson integrals,
\begin{align*}
\alpha_1(t;c,\nu)&=\frac{c\,(t/\kappa_t)^{-\sigma}}{(1-q)(q;q)_\infty^2}\frac{\theta_q\left(q\kappa_t\nu,\frac{\kappa_1 \nu  t}{\kappa_0^2}\right)}{\theta_q\left(\frac{\kappa_1 t}{\kappa_t}\right)},\\
\alpha_2(t;c,\nu)&=\frac{c\,\kappa_1^{\sigma}}{(1-q)(q;q)_\infty^2}\frac{\theta_q\left(q\kappa_1 \nu t,\frac{\kappa_t \nu }{\kappa_0^2}\right)}{\theta_q\left(\frac{\kappa_t}{\kappa_1 t}\right)}.
\end{align*}
Note that both coefficients satisfy $\alpha(qt)=\frac{1}{\kappa_t \nu}\alpha(t)$ and the orthogonal polynomials in Ormerod et al. \cite{forresterqorthogonal} then coincide with the polynomials $p_n$ above, up to scalar multiplication, in the case when $\nu$ is chosen such that $\alpha_1(t)=-\alpha_2(t)$. In other words, $\nu=\nu(t_0)$ is chosen such that
\begin{equation*}
    \left(\frac{\kappa_1 t_0}{\kappa_t}\right)^{1+\sigma}=\frac{\theta_q\left(q\kappa_t\nu,\frac{ \kappa_1 \nu t_0}{\kappa_0^2}\right)}{\theta_q\left(q\nu\kappa_1  t_0,\frac{\kappa_t \nu}{\kappa_0^2}\right)}.
\end{equation*}

Next, we briefly consider an example coming from one of the conditions in \eqref{eq:intro_irreducibleparameter2} being violated. Namely, we set
\begin{equation*}
    \kappa_0=\kappa_\infty t_0,
\end{equation*}
and consider RHP I, defined in Definition \ref{def:RHPmain}, with a corresponding upper-triangular connection matrix of the form,
\begin{equation*}
    C(z)=\begin{pmatrix}
    \theta_q\left(\frac{z}{\kappa_t t_0},\frac{z\kappa_t}{t_0}\right) & c\,\theta_q(\frac{z}{\kappa_0}\nu,\frac{z}{\kappa_0}\nu^{-1})\\
    0 & \theta_q\left(\frac{z}{\kappa_1},z\kappa_1\right)
    \end{pmatrix},
\end{equation*}
where the monodromy datums $c\in\mathbb{C}$ and $\nu\in\mathbb{C}^*$ can again be chosen at pleasure.

We note that the jump matrix $z^mC(z)$ can be rewritten as
\begin{equation*}
    z^mC(z)=
    \begin{pmatrix}
    q^{m(m+1)}t_0^{2m}\theta_q\left(\frac{z}{\kappa_t t_m},\frac{z\kappa_t}{t_m}\right) & c\,\theta_q(\frac{z}{\kappa_0}\nu,\frac{z}{\kappa_0}\nu^{-1})\\
    0 & \theta_q\left(\frac{z}{\kappa_1},z\kappa_1\right)
    \end{pmatrix}z^{-m\sigma_3},
\end{equation*}
where we denoted $t_m:=q^m t_0$.

We apply the transformation
\begin{equation*}
  Y^{\B{m}}(z)=\begin{cases}
  D_1^{-1}\Psi^{\B{m}}(z)F_\infty^{\B{m}}(z)^{-1}D_1 z^{m\sigma_3}  & \text{ if $z\in D_+^{\B{m}}$,}\\
   D_1^{-1}\Psi^{\B{m}}(z)F_0^{\B{m}}(z)^{-1} & \text{ if $z\in D_-^{\B{m}}$,}\\
  \end{cases}
\end{equation*}
where
\begin{align*}
F_\infty^{\B{m}}(z)&=
\begin{pmatrix}
\left(\frac{q\kappa_tt_m}{z},\frac{qt_m}{\kappa_tz};q\right)_\infty & 0\\
0 & \left(\frac{q\kappa_1}{z},\frac{q}{\kappa_1z};q\right)_\infty,\\
\end{pmatrix}\\
F_0^{\B{m}}(z)&=\begin{pmatrix}
\left(\frac{z}{\kappa_t t_m},\frac{\kappa_t z}{ t_m};q\right)_\infty & 0\\
0 & \left(\frac{z}{\kappa_1},z\kappa_1;q\right)_\infty\\
\end{pmatrix},
\end{align*}
and
\begin{equation*}
    D_1=\begin{pmatrix}
    t_0^{-2m}q^{-m(m+1)} & 0\\
    0 & 1\\
    \end{pmatrix}.
\end{equation*}
Then the jump matrix for $Y^{\B{m}}(z)$ reads
\begin{equation*}
    J^{\B{m}}(z)=\begin{pmatrix}
    1 & \widehat{w}(z,t_m)\\
    0 & 1
    \end{pmatrix},
\end{equation*}
where
\begin{equation}\label{eq:weight_function2}
     \widehat{w}(z,t)=\frac{c\,\theta_q(\frac{z}{\kappa_0}\nu,\frac{z}{\kappa_0}\nu^{-1})}{\left(\frac{z}{\kappa_t t},\frac{\kappa_t z}{ t};q\right)_\infty\left(\frac{q\kappa_1}{z},\frac{q}{\kappa_1z};q\right)_\infty},
\end{equation}
and $Y^{\B{m}}(z)$ solves the following RHP, if it exists.
\begin{definition}[RHP IV]\label{def:RHPorthogonal2}
For $m\in\mathbb{Z}$, find a matrix function $Y^{\B{m}}(z)$ which satisfies the following conditions.
  \begin{enumerate}[label={{\rm (\roman *)}}]
  \item $Y^{\B{m}}(z)$ is analytic on $\mathbb{C}\setminus\gamma^{\B{m}}$.
    \item $Y^{\B{m}}(z')$ has continuous boundary values $Y_-^{\B{m}}(z)$ and $Y_+^{\B{m}}(z)$ as $z'$ approaches $z\in \gamma^{\B{m}}$ from $D_-^{\B{m}}$ and $D_+^{\B{m}}$ respectively, related by
		\begin{equation*}
		Y_+^{\B{m}}(z)=Y_-^{\B{m}}(z)\begin{pmatrix}
    1 & \widehat{w}(z,t_m)\\
    0 & 1
    \end{pmatrix}\qquad (z\in \gamma^{\B{m}}),
              \end{equation*}
    where $\widehat{w}(z,t)$ is the weight function defined in equation \eqref{eq:weight_function2}.      
            \item $Y^{\B{m}}(z)$ satisfies
              \begin{equation*}
		Y^{\B{m}}(z)=\left(I+\mathcal{O}\left(z^{-1}\right )\right)z^{m\sigma_3}\quad z\rightarrow \infty.
              \end{equation*}
              \end{enumerate}
  \end{definition}
 This RHP takes the form of the Fokas-Its-Kitaev RHP for orthogonal polynomials, but with the contour $\gamma^{\B{m}}$ and weight function $\widehat{w}(z,t_m)$ scaling with the `degree' $m$ of the corresponding orthogonal polynomials.
 In particular, RHP IV is unsolvable for $m<0$ and thus so is RHP I in Definition \ref{def:RHPmain}.
 
 For $m=0$, RHP IV is solvable and its solution is given by
  \begin{equation*}
    \Psi^{\B{0}}(z)=\begin{pmatrix}
    1 & -\mathcal{C}^{\B{0}}\left[w(\cdot,t_0)\right](z)\\
    0 & 1
    \end{pmatrix}.
 \end{equation*}
 From equation \eqref{eq:rhptolinear} it follows that the corresponding linear system $A(z,t)$ at $t=t_0$ takes the upper-triangular form
 \begin{equation}\label{eq:Aexplicit}
    A(z,t_0)=\begin{pmatrix}
    \kappa_\infty(z-\kappa_t t_0)(z-\kappa_t^{-1} t_0) & A_{12}(z,t_0)\\
    0 & \kappa_\infty^{-1}(z-\kappa_1)(z-\kappa_1^{-1})
    \end{pmatrix}.
\end{equation}

For $m\geq 0$, RHP IV is solvable if and only if the $m$th Hankel determinant of moments for the weight function $\widehat{w}(z,t_m)$, with respect to the contour $\gamma^{\B{m}}$, is nonzero.
We denote
\begin{equation*}
    \mathfrak{M}:=\{m\in\mathbb{Z}:\Psi^{\B{m}}(z) \text{ exists}\},
\end{equation*}
then $\mathfrak{M}\subseteq \mathbb{N}$ and, if $c\neq 0$, then, by the same argument as in the proof of Theorem \ref{thm:main_solvability}, we may show that, for any $m\geq 0$, $m\in \mathfrak{M}$ or $m+1\in \mathfrak{M}$.
Thus the domain of the corresponding solution $(f,g)$ is given by the semi $q$-spiral $q^{\mathbb{N}}t_0$.

Note that, by equation \eqref{eq:Aexplicit}, the value of $g$ at $t=t_0$ is given by
\begin{equation*}
    g(t_0)=q^{-1} \kappa_\infty^{-1}=q^{-1}\kappa_0^{-1}t_0,
\end{equation*}
and thus equation \eqref{eq:qpvi} has a singularity at $t=q^{-1} t_0$ which cannot be resolved. In particular, there exists no isomonodromic continuation of the solution past $t=t_0$, see also \cite{dreyfus2020degeneration}[Prop. 4.1].
 
This phenomenon has also been observed for solutions of other discete Painlev\'e equations associated with orthogonal polynomials, see e.g. Assche \cite{asschebook}.

We emphasise that, also in this case, one can derive explicit expressions for $f(q^mt_0)$ and $g(q^mt_0)$, $m\geq 0$, in terms of determinants of moments, but with the sizes of the determinants growing with $m$.

Finally, note that, if we set $c=0$, so that $C(z)$ is diagonal, we have $\widehat{w}(z,t)=0$ and $A_{12}(z,t_0)\equiv 0$. In this singular case, $\mathfrak{M}=\{0\}$ and there is no solution $(f,g)$ of $q\Psix(\kappa,t_0)$ corresponding to this monodromy.

We finish this section by noting that, in general, the domain where RHP I, defined in Definition \ref{def:RHPmain}, is solvable,
\begin{equation*}
    \mathfrak{M}:=\{m\in\mathbb{Z}:\Psi^{\B{m}}(z) \text{ exists}\},
\end{equation*}
can take one of five particular forms when $C(z)$ is reducible, characterised by
\begin{enumerate}
    \item $\forall m\in\mathbb{Z}$, $m\in \mathfrak{M}$ or $m+1\in \mathfrak{M}$;
    \item $\exists{m_0\in\mathfrak{M}}$ such that $\mathfrak{M}\subseteq \mathbb{Z}_{\geq m_0}$ and $\forall m\geq m_0$:  $m\in \mathfrak{M}$ or $m+1\in \mathfrak{M}$;
    \item $\exists{m_0\in\mathfrak{M}}$ such that $\mathfrak{M}\subseteq \mathbb{Z}_{\leq m_0}$ and $\forall m\leq m_0$:  $m\in \mathfrak{M}$ or $m-1\in \mathfrak{M}$;
    \item $\exists{m_0\in\mathfrak{M}}$ such that $\mathfrak{M}=\{m_0\}$;
    \item $\mathfrak{M}=\emptyset$.
\end{enumerate}
In the first example of this section, we saw cases (1) and (5). In the second example, we saw cases (2) and (4) with $m_0=0$.

\section{The Monodromy Manifold} \label{sec:monodromy_surface}
This section is devoted to the monodromy manifold defined in Definition \ref{def:monodromy_mapping}. In Sections \ref{subsec:moduli}, \ref{subsection:smoothness} and \ref{subsection:embeddings} we prove Theorems \ref{thm:main_moduli}, \ref{thm:main_smooth} and \ref{thm:main_affine} respectively.

\subsection{On the embedding of the monodromy manifold} \label{subsec:moduli}
In Section \ref{sec:results_monodromy_surface}, see equation \eqref{eq:coordinate_mapping}, we defined a mapping $\mathcal{P}$ of the monodromy manifold to $(\mathbb{P}^1)^4/\mathbb{C}^*$. In this section we show that this mapping is an embedding and determine its image, proving Theorem \ref{thm:main_moduli}.

Firstly, we have the following lemma.
\begin{lemma}\label{lem:embedding}
The mappings $\mathcal{P}$, defined in equation \eqref{eq:coordinate_mapping}, is injective.
\end{lemma}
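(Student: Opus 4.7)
My plan is to prove the contrapositive: if two connection matrices $C,\widetilde C\in \mathfrak{C}(\kappa,t_0)$ satisfy $\mathcal{P}([C])=\mathcal{P}([\widetilde C])$, then they differ by left and right multiplication by invertible diagonal matrices. So I would assume $\rho(\widetilde C)=c\,\rho(C)$ for some $c\in\mathbb{C}^*$. Since right multiplication of $C$ by $\mathrm{diag}(d_1,d_2)$ rescales every coordinate $\rho_k$ by the common factor $d_1/d_2$, I can first replace $\widetilde C$ by $\widetilde C\,\mathrm{diag}(d_1,d_2)$ for a suitable choice and reduce to the case $\rho(C)=\rho(\widetilde C)$ on the nose (this is fine even when some $\rho_k\in\{0,\infty\}$, as those are fixed points of the $\mathbb{C}^*$-scaling on $\mathbb{P}^1$). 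It then suffices to show that in this reduced situation $\widetilde C(z)=D\,C(z)$ for a single constant invertible diagonal matrix $D$.

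The central object will be the matrix-valued function
\begin{equation*}
M(z):=\widetilde C(z)\,C(z)^{-1}=\frac{\widetilde C(z)\,\mathrm{adj}(C(z))}{|C(z)|}.
\end{equation*}
By property (3) of Definition \ref{def:connection_matrix_space}, $|C(z)|$ has only simple zeros, located on the four $q$-spirals through the points $x_k$ of \eqref{eq:intro_xnotation}; simplicity forces $C(x_k)\neq 0$ so that $C(x_k)$ has rank one, and similarly for $\widetilde C(x_k)$. The identity $\rho_k=\tilde\rho_k$ says precisely that $C(x_k)$ and $\widetilde C(x_k)$ share the same one-dimensional kernel in $\mathbb{C}^2$, and since the image of $\mathrm{adj}(C(x_k))$ equals $\ker C(x_k)$, it follows that $\widetilde C(x_k)\,\mathrm{adj}(C(x_k))=0$. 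Hence the simple zero of the denominator at each $x_k$ is cancelled and $M$ extends analytically across $x_k$; the cases $\rho_k\in\{0,\infty\}$, where either a column of $C$ or of $\widetilde C$ vanishes, are treated uniformly by the same reasoning after a brief pointwise check.

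Next I would combine the functional equation from property (2) of Definition \ref{def:connection_matrix_space} for both $C$ and $\widetilde C$ to deduce
\begin{equation*}
M(qz)=\kappa_0^{\sigma_3}\,M(z)\,\kappa_0^{-\sigma_3},
\end{equation*}
which transports analyticity at $x_k$ to every point of $q^{\mathbb{Z}}x_k$, so that $M$ is analytic on all of $\mathbb{C}^*$. Entry-wise this relation reads $M_{ii}(qz)=M_{ii}(z)$ for $i=1,2$ and $M_{12}(qz)=\kappa_0^{2}M_{12}(z)$, $M_{21}(qz)=\kappa_0^{-2}M_{21}(z)$. Expanding each $M_{ij}$ into its (globally convergent) Laurent series $\sum_{n\in\mathbb{Z}}a_n z^n$ on $\mathbb{C}^*$ and inserting it into the corresponding functional equation yields $a_n(q^n-\alpha_{ij})=0$ for every $n$. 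The non-resonance condition $\kappa_0^{2}\notin q^{\mathbb{Z}}$ from \eqref{eq:non_res} then eliminates $M_{12}$ and $M_{21}$ outright, while $M_{11}$ and $M_{22}$ collapse to constants. Thus $M=D$ is a constant invertible diagonal matrix, giving $\widetilde C=D\,C$ and hence $[C]=[\widetilde C]$ in $\mathcal{M}(\kappa,t_0)$.

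The main obstacle, in my view, is the analytic extension step: the proof that $\widetilde C(x_k)\,\mathrm{adj}(C(x_k))=0$ is the sole place where the hypothesis $\rho(C)=\rho(\widetilde C)$ is actually used, and this vanishing must be verified uniformly across all four indices $k$ and all possible positions of $\rho_k$ in $\mathbb{P}^1$. Once this cancellation is secured, the remainder reduces to a routine Liouville-type Laurent-series argument exploiting the non-resonance conditions on $\kappa_0$.
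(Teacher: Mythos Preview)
Your proposal is correct and follows essentially the same approach as the paper's proof: form the quotient $\widetilde{C}(z)\,C(z)^{-1}$ (the paper inserts $\mathrm{diag}(1,c)$ rather than first normalising by a right diagonal factor, but this is equivalent), argue it is analytic on $\mathbb{C}^*$, and then use the functional equation together with $\kappa_0^2\notin q^{\mathbb{Z}}$ to force it to be a constant diagonal matrix. Your write-up is in fact more detailed than the paper's, which simply asserts analyticity of the quotient without spelling out the adjugate--kernel argument you give.
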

\begin{proof}
Take any two connection matrices $C(z),\widetilde{C}(z)\in\mathfrak{C}(\kappa,t_0)$ and suppose that their respective coordinate values $\rho$ and $\widetilde{\rho}$ are identical up to scaling, i.e. $\widetilde{\rho}=c\rho$, for some $c\in\mathbb{C}^*$. Then the matrix function
\begin{equation*}
    D(z)=\widetilde{C}(z)\begin{pmatrix} 1 & 0\\
    0 & c\\
    \end{pmatrix} C(z)^{-1},
\end{equation*}
is analytic on $\mathbb{C}^*$. But $D(z)$ satisfies
\begin{equation*}
    D(qz)=\kappa_0^{\sigma_3}D(z)\kappa_0^{-\sigma_3},
\end{equation*}
and, as $\kappa_0^2\notin q^\mathbb{Z}$, it follows from the general theory of $q$-theta functions, see e.g. Lemma \ref{lem:classificationtheta}, that $D(z)\equiv D$ must be a constant diagonal matrix. Therefore, $[\widetilde{C}(z)]$ and $[C(z)]$ represent the same point on the monodromy manifold. The thesis follows.
\end{proof}

To determine the image of the monodromy manifold under $\mathcal{P}$, it is convenient to consider a related embedding into $(\mathbb{P}^1)^4$ of a finer quotient of the space $\mathfrak{C}(\kappa,t_0)$, given in the following definition.
\begin{definition}\label{def:monodromymanifoldM}
We define $M(\kappa,t_0)$ to be the space of connection matrices $\mathfrak{C}(\kappa,t_0)$ quotiented by arbitrary left-multiplication by invertible diagonal matrices. We denote the equivalence class of $C(z)\in \mathfrak{C}(\kappa,t_0)$ in $M(\kappa,t_0)$ by $\llbracket C(z)\rrbracket$
and denote by
\begin{equation*}
\iota_M:   M(\kappa,t_0)\rightarrow   \mathcal{M}(\kappa,t_0),\llbracket C(z)\rrbracket \rightarrow [C(z)],
\end{equation*}
the quotient mapping of $M(\kappa,t_0)$ onto the monodromy manifold.
\end{definition}

Note that the coordinates $\rho=(\rho_1,\rho_2,\rho_3,\rho_4)$ introduced in Section \ref{sec:results_monodromy_surface}, i.e.
\begin{equation*}
\rho_k=\pi(C(x_k)),\quad (1\leq k\leq 4),\quad (x_1,x_2,x_3,x_4)=(\kappa_t t_0,\kappa_t^{-1} t_0,\kappa_1 ,\kappa_1^{-1}),
\end{equation*}
are invariant under left-multiplication by diagonal matrices and are thus well defined on equivalence classes in $M(\kappa,t_0)$. We thus obtain a mapping
\begin{equation}\label{eq:coordinate_mappingfine}
    P:M(\kappa,t_0)\rightarrow (\mathbb{P}^1)^4,\llbracket C(z)\rrbracket \mapsto \rho.
\end{equation}
This mapping is an embedding, by the same argument as given in the proof of Lemma \ref{lem:embedding}, with $c$ set equal to $1$.

Let $\iota_\mathbb{P}$ denote the quotient mapping
\begin{equation}\label{eq:defi_iota}
    \iota_\mathbb{P}:(\mathbb{P}^1)^4\rightarrow (\mathbb{P}^1)^4/\mathbb{C}^*.
\end{equation}
The proof of Theorem \ref{thm:main_moduli} revolves around the diagram
\begin{equation}\label{eq:commutative_diagram}
\begin{tikzcd}
M(\kappa,t_0) \arrow[r, "P"] \arrow[d, "\iota_M"]
& (\mathbb{P}^1)^4 \arrow[d, "\iota_\mathbb{P}"] \\
\mathcal{M}(\kappa,t_0) \arrow[r, "\mathcal{P}"]
& (\mathbb{P}^1)^4/\mathbb{C}^*,
\end{tikzcd}
\end{equation}
which is commutative, because right multiplication by a diagonal matrix translates to scalar multiplication of $\rho$ as shown in equation \eqref{eq:scaling}. We first determine the image of $M(\kappa,t_0)$ under $P$, following the technique developed in our previous paper \cite{joshiroffelseniv}, and then obtain Theorem \ref{thm:main_moduli} by projecting this image into $(\mathbb{P}^1)^4/\mathbb{C}^*$ via $\iota_\mathbb{P}$.

To describe the image of $M(\kappa,t_0)$ under $P$, we make the following definition.
\begin{definition}\label{def:modulispacefine}
Recall the definition of the quadratic polynomial $T(\rho:\kappa,t_0)$ as well as its homogeneous form $T_{hom}$ in Definition \ref{def:modulispace}. Using homogeneous coordinates $\rho_k=[\rho_k^x: \rho_k^y]\in \mathbb P^1$, $1\le k\le 4$, the equation
  \begin{equation*}
      T_{hom}(\rho_1^x,\rho_1^y,\rho_2^x,\rho_2^y,\rho_3^x,\rho_3^y,\rho_4^x,\rho_4^y)=0
  \end{equation*}
  defines a threefold in $(\mathbb{P}^1)^4$, which we denote by
  \begin{equation*}
  S(\kappa,t_0)=\{\rho\in (\mathbb{P}^1)^4:T(\rho:\kappa,t_0)=0\} .
  \end{equation*}
\end{definition}

Regarding the image of $M(\kappa,t_0)$ under $P$, we have the following result.
\begin{proposition}\label{prop:moduli}
Denote by $\widehat{\kappa}$ the tuple of complex parameters $\kappa$ after replacing $\kappa_0\mapsto 1$.
 The image of  $M(\kappa,t_0)$ under the mapping $P$, defined in equation \eqref{eq:coordinate_mappingfine}, is given by the threefold
$S(\kappa,t_0)$ minus the codimension one subspace
\begin{equation}
X(\kappa,t_0):=S(\kappa,t_0)\cap S(\widehat{\kappa},t_0)
=\bigcap_{\lambda_0\in\mathbb{C}^*}S(\lambda_0,\kappa_t,\kappa_1,\kappa_\infty,t_0).\label{eq:Xdefi}
\end{equation}
We denote by $S^*(\kappa,t_0)$ the space obtained by cutting this subspace from $S(\kappa,t_0)$,
then the mapping
\begin{equation*}
 M(\kappa,t_0)\rightarrow S^*(\kappa,t_0),\ \textrm{where}\ \llbracket C(z)\rrbracket \mapsto P(\llbracket C(z)\rrbracket),
\end{equation*}
is a bijection.
\end{proposition}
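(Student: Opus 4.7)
The plan is to use the commutative diagram \eqref{eq:commutative_diagram} to reduce the study of $\mathcal{P}$ to the finer mapping $P$, and then determine the image of $P$ through an explicit analysis of the $q$-theta function spaces in which the entries of $C(z)$ live. This follows the general scheme developed in our earlier paper \cite{joshiroffelseniv}. By properties (1)-(2) of Definition \ref{def:connection_matrix_space}, each entry $C_{ij}(z)$ belongs to a two-dimensional vector space $V_2(\alpha_{ij})$ of $q$-theta functions, with multipliers $\alpha_{i1}=t_0\kappa_0^{(-1)^{i-1}}/\kappa_\infty$ and $\alpha_{i2}=t_0\kappa_0^{(-1)^{i-1}}\kappa_\infty$. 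The non-resonance conditions \eqref{eq:non_res} ensure that the four points $x_1,\ldots,x_4$ lie in distinct $q$-orbits, so each $V_2(\alpha_{ij})$ admits a canonical Lagrange-type basis $\{f_1^{(ij)},f_2^{(ij)}\}$ normalised at $x_1,x_2$, after which $C_{ij}(x_3)$ and $C_{ij}(x_4)$ become explicit linear combinations of $C_{ij}(x_1)$ and $C_{ij}(x_2)$ with coefficients built from ratios of $q$-theta values. Injectivity of $P$ follows from the argument of Lemma \ref{lem:embedding} specialised to $c=1$.

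To derive the constraint $T(\rho;\kappa,t_0)=0$, I will impose the rank-one conditions $C_1(x_k)=\rho_k C_2(x_k)$ at each $x_k$. Substituting the Lagrange expansions for $k=3,4$ into these relations (and using those at $k=1,2$ as normalisations) yields a $2\times 2$ homogeneous linear system in the unknowns $(C_{12}(x_1),C_{12}(x_2))$ whose coefficients depend linearly on the $\rho_k$. The vanishing of its determinant is a homogeneous multilinear quadratic in $\rho$, and simplification through $q$-theta identities should recover the polynomial $T(\rho;\kappa,t_0)$ of Definition \ref{def:modulispace}, giving $P(M(\kappa,t_0))\subseteq S(\kappa,t_0)$. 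Conversely, any $\rho\in S(\kappa,t_0)$ furnishes a nontrivial solution of the linear system, from which one reconstructs $C_{ij}(x_1),C_{ij}(x_2)$ for both rows (the analogous system for row 2 has the same determinant and is thus simultaneously solvable), and then a matrix function $C(z)$ satisfying properties (1)-(2) of Definition \ref{def:connection_matrix_space} via Lagrange interpolation. By construction, $|C(z)|$ is a $q$-theta function in $V_4(t_0^2)$ vanishing at $x_1,\ldots,x_4$, whence $|C(z)|=c\,\theta_q(z/x_1,\ldots,z/x_4)$ for some $c\in\mathbb{C}$, and the construction produces a genuine element of $M(\kappa,t_0)$ precisely when $c\neq 0$.

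The final step is to identify the exceptional locus $c=0$ with $X(\kappa,t_0)$. When $|C|\equiv 0$, the everywhere-rank-one structure yields a meromorphic factorisation $C(z)=v(z)w(z)^{T}$ with $v,w:\mathbb{C}^{*}\to\mathbb{C}^{2}$, where $\rho_k=v_1(x_k)/v_2(x_k)$ depends only on $v$. Rescaling $v$ by a suitable scalar $q$-theta function (and $w$ inversely) allows us to replace $\kappa_0$ by any $\lambda_0\in\mathbb{C}^{*}$ in property (2) while preserving the value of $\rho$, so $\rho\in S(\lambda_0,\kappa_t,\kappa_1,\kappa_\infty,t_0)$ for every $\lambda_0$; this gives the inclusion $\{c=0\}\subseteq\bigcap_{\lambda_0}S(\lambda_0,\kappa_t,\kappa_1,\kappa_\infty,t_0)$. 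For the reverse inclusion, the $\kappa_0$-dependence of $T(\rho;\kappa_0)$ is effectively captured by the two specialisations $\kappa_0$ and $\kappa_0=1$, so that the simultaneous vanishing of $T(\rho;\kappa,t_0)$ and $T(\rho;\widehat{\kappa},t_0)$ forces vanishing for all $\lambda_0$. This yields $X(\kappa,t_0)=S(\kappa,t_0)\cap S(\widehat{\kappa},t_0)=\bigcap_{\lambda_0\in\mathbb{C}^{*}}S(\lambda_0,\kappa_t,\kappa_1,\kappa_\infty,t_0)$, completing the argument.

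The main obstacle will be the explicit verification that the determinant of the $2\times 2$ linear system, after applying $q$-theta simplifications of Riemann-relation type, reproduces precisely the six coefficients $T_{ij}$ of Definition \ref{def:modulispace}. A secondary technical point is the careful handling of boundary cases in which one or more $\rho_k$ equal $0$ or $\infty$: the whole computation should be carried out in the homogeneous coordinates $\rho_k=[\rho_k^x:\rho_k^y]$ so that these limits are covered uniformly.
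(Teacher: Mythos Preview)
Your overall strategy matches the paper's: express the entries of $C(z)$ in bases of the two-dimensional theta-function spaces, translate the rank-drop conditions at $x_1,\ldots,x_4$ into a homogeneous linear system whose determinant is $T_{hom}$, then reconstruct $C(z)$ from a given $\rho\in S(\kappa,t_0)$ and characterise the failure locus $\{c=0\}$. The paper uses bases yielding two $4\times4$ systems (one per row of $C$) rather than your $2\times2$ Lagrange system, but this is only a choice of basis.

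There is, however, a genuine gap in your treatment of the exceptional locus. You argue $\{c=0\}\subseteq\bigcap_{\lambda_0}S(\lambda_0,\ldots)$ and that $S(\kappa,t_0)\cap S(\widehat{\kappa},t_0)=\bigcap_{\lambda_0}S(\lambda_0,\ldots)$, but you never establish $X(\kappa,t_0)\subseteq\{c=0\}$, which is what is needed to conclude that the image of $P$ is exactly $S\setminus X$. The paper closes this by a quotient argument: given $\rho\in X$, construct a matrix $C(z)$ for parameter $\kappa_0$ and a matrix $\widetilde{C}(z)$ for parameter $1$, each realising $\rho$; if $|C|\not\equiv 0$ then $D(z):=\widetilde{C}(z)C(z)^{-1}$ is analytic on $\mathbb{C}^*$ and satisfies $D(qz)=D(z)\kappa_0^{-\sigma_3}$, forcing $D\equiv 0$ (since $\kappa_0^2\notin q^{\mathbb{Z}}$) and contradicting the non-triviality of $\widetilde{C}$. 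Some such step is essential and is absent from your plan.

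Your factorisation argument for the forward inclusion also needs repair. With $C=vw^T$ the $j$-th column is $w_j v$, so $\rho_k=w_1(x_k)/w_2(x_k)$ depends on $w$, not $v$. Moreover a \emph{scalar} rescaling $v\mapsto\phi v$, $w\mapsto\phi^{-1}w$ cannot change $\kappa_0$ to an arbitrary $\lambda_0$, because $\kappa_0^{\sigma_3}$ acts non-trivially on the two components of $v$. What works is a \emph{diagonal} rescaling of $v$ (distinct theta factors on each component) while keeping $w$ fixed; equivalently, as the paper does via an explicit parametrisation, one observes that $w_1/w_2$ satisfies a $q$-difference equation involving only $\kappa_\infty$, so the resulting $\rho$-values are manifestly $\kappa_0$-independent.
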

\begin{proof}
Let us take a connection matrix $C(z)\in\mathfrak{C}(\kappa,t_0)$.
It will be convenient to work with the following uniform notation,
\begin{subequations}\label{eq:notationsigmamu}
\begin{align}
&(\sigma_1,\sigma_2)=(\kappa_0 t_0,\kappa_0^{-1} t_0),\quad (\mu_1,\mu_2)=(\kappa_\infty,\kappa_\infty^{-1}), \\
&(x_1,x_2,x_3,x_4)=(\kappa_t t_0,\kappa_t^{-1} t_0,\kappa_1 ,\kappa_1^{-1}).
\end{align}
\end{subequations}
For any $1\leq i,j\leq 2$, the matrix-entry $C_{ij}(z)$ is an element of the two-dimensional vector space
\begin{equation*}
    V_{ij}:=\left\{\text{analytic functions } \theta:\mathbb{C}^*\rightarrow \mathbb{C}\text{ satisfying }\theta(qz)=\frac{\sigma_i}{\mu_j}z^{-2}\theta(z)\right\},
\end{equation*}
see equation \eqref{eq:theta},
and we know that
\begin{equation}\label{eq:det0}
    C_{11}(z)C_{22}(z)-C_{12}(z)C_{22}(z)=c\theta_q(z/x_1,z/x_2,z/x_3,z/x_4),
\end{equation}
for some $c\in\mathbb{C}^*$.

For each $1\leq k\leq 4$, the equation $\pi(C(x_k))=\rho_k$ translates to
\begin{equation}\label{eq:rho_relations}
\rho_k^y C_{11}(x_k)-\rho_k^x C_{12}(x_k)=0,\quad \rho_k^y C_{21}(x_k)-\rho_k^x C_{22}(x_k)=0,      
\end{equation}
where we used homogeneous coordinates $\rho_k=[\rho_k^x:\rho_k^y]$.

We proceed in studying equations \eqref{eq:rho_relations} by choosing explicit bases of the vector spaces $V_{ij}$, $1\leq i,j\leq 2$. To this end, we introduce the following eight $q$-theta functions,
\begin{align*}
    u_1^{11}(z)&=\theta_q\left(z/x_1,z x_1\frac{\mu_1}{\sigma_1}\right), & u_2^{11}(z)&=\theta_q\left(z/x_2,z x_2\frac{\mu_1}{\sigma_1}\right),\\
    u_1^{12}(z)&=\theta_q\left(z/x_3,z x_3\frac{\mu_2}{\sigma_1}\right), & u_2^{12}(z)&=\theta_q\left(z/x_4,z x_4\frac{\mu_2}{\sigma_1}\right),\\
    u_1^{21}(z)&=\theta_q\left(z/x_1,z x_1\frac{\mu_1}{\sigma_2}\right), & u_2^{21}(z)&=\theta_q\left(z/x_2,z x_2\frac{\mu_1}{\sigma_2}\right),\\
    u_1^{22}(z)&=\theta_q\left(z/x_3,z x_3\frac{\mu_2}{\sigma_2}\right), & u_2^{22}(z)&=\theta_q\left(z/x_4,z x_4\frac{\mu_2}{\sigma_2}\right).
\end{align*}
For any $1\leq i,j\leq 2$, the collection $\{u_1^{ij}(z),u_2^{ij}(z)\}$ forms a basis of $V_{ij}$. We may thus write
\begin{equation}\label{eq:connectionentries}
    C_{ij}(z)=\alpha_{1}^{ij}u_1^{ij}(z)+\alpha_{2}^{ij}u_2^{ij}(z),
\end{equation}
for some coefficients $\alpha_{1}^{ij},\alpha_{2}^{ij}\in\mathbb{C}$.

Equations \eqref{eq:rho_relations} now translate to eight equations among the coefficients in \eqref{eq:connectionentries}, which we group into the following two homogeneous systems,
\begin{equation}
    \begin{pmatrix}
    0  &   \rho_1^y u_2^{11}(x_1) & -\rho_1^x u_1^{12}(x_1) & -\rho_1^x u_2^{12}(x_1)\\
    \rho_2^y u_{1}^{11}(x_2)   &   0 & -\rho_2^x u_1^{12}(x_2) & -\rho_2^x u_2^{12}(x_2)\\
    \rho_3^y u_{1}^{11}(x_3)   &   \rho_3^y u_2^{11}(x_3) & 0 & -\rho_3^x u_2^{12}(x_3)\\
    \rho_4^y u_{1}^{11}(x_4)   &   \rho_4^y u_2^{11}(x_4) & -\rho_4^x u_1^{12}(x_4) & 0\\
    \end{pmatrix}
    \begin{pmatrix}
    \alpha_1^{11}\\
    \alpha_2^{11}\\
    \alpha_1^{12}\\
    \alpha_2^{12}\\
    \end{pmatrix}=
    \begin{pmatrix}
    0\\
    0\\
    0\\
    0\\
    \end{pmatrix}, \label{eq:rho_linear_1}
\end{equation}
and
\begin{equation}
        \begin{pmatrix}
    0   &   \rho_1^y u_2^{21}(x_1) & -\rho_1^x u_1^{22}(x_1) & -\rho_1^x u_2^{22}(x_1)\\
    \rho_2^y u_{1}^{21}(x_2)   &   0 & -\rho_2^x u_1^{22}(x_2) & -\rho_2^x u_2^{22}(x_2)\\
    \rho_3^y u_{1}^{21}(x_3)   &   \rho_3^y u_2^{21}(x_3) & 0 & -\rho_3^x u_2^{22}(x_3)\\
    \rho_4^y u_{1}^{21}(x_4)   &   \rho_4^y u_2^{21}(x_4) & -\rho_4^x u_1^{22}(x_4) & 0\\
    \end{pmatrix}
    \begin{pmatrix}
    \alpha_1^{21}\\
    \alpha_2^{21}\\
    \alpha_1^{22}\\
    \alpha_2^{22}\\
    \end{pmatrix}=
    \begin{pmatrix}
    0\\
    0\\
    0\\
    0\\
    \end{pmatrix}.\label{eq:rho_linear_2}
\end{equation}

As the determinant of $C(z)$ cannot be identically zero, we know that both vectors on the left-hand side of equations \eqref{eq:rho_linear_1} and \eqref{eq:rho_linear_2} are nonzero. This in turn implies that the determinants of the $4\times 4$ matrices on the left-hand side are zero. By means of a lengthy calculation, one can check that both determinants, coincide, up to some nonzero scalar multipliers, with the equation
\begin{equation*}
      T_{hom}(\rho_1^x,\rho_1^y,\rho_2^x,\rho_2^y,\rho_3^x,\rho_3^y,\rho_4^x,\rho_4^y)=0,
\end{equation*}
where $T_{hom}$ is defined in Definition \ref{def:modulispace}. We refer the interested reader to our previous work \cite{joshiroffelseniv}[Appendix B] where an analogous computation is given.

It follows that $P$ embeds $M(\kappa,t_0)$ into the threefold $S(\kappa,t_0)$.

We proceed to determine those coordinate-values in $S(\kappa,t_0)$ which cannot be realised by any connection matrix $C(z)\in\mathfrak{C}(\kappa,t_0)$.

Take any $\rho\in S(\kappa,t_0)$, then we know that both homogeneous equations \eqref{eq:rho_linear_1} and \eqref{eq:rho_linear_2} have non-trivial solutions. Let us take a solution of each respectively,
\begin{equation}\label{eq:connection_construction}
    \left(\alpha_1^{11},
    \alpha_2^{11},
    \alpha_1^{12},
    \alpha_2^{12}\right)^T,\quad \left(a_1^{21},
    \alpha_2^{21},
    \alpha_1^{22},
    \alpha_2^{22}\right)^T,
\end{equation}
and let $C(z)$ denote the corresponding matrix function via equations \eqref{eq:connectionentries}.

Then we know that $C(z)$ is analytic on $\mathbb{C}^*$, it satisfies
\begin{equation}\label{eq:qdifferenceC}
    C(qz)=z^{-2}t_0\kappa_0^{\sigma_3}C(z)\kappa_\infty^{-\sigma_3},
\end{equation}
and $|C(x_k)|=0$ for $1\leq k \leq 4$. Furthermore, by construction, 
\begin{subequations}\label{eq:entriesnonzero}
\begin{align}
   &C_{11}(z)\not\equiv 0\quad   \text{ or }\quad C_{12}(z)\not\equiv 0, \text{ and }\\
   &C_{21}(z)\not\equiv 0\quad  \text{ or }\quad C_{22}(z)\not\equiv 0.
\end{align}
\end{subequations}

There are two options, either equation \eqref{eq:det0} holds for some $c\in\mathbb{C}^*$, which means that $C(z)\in \mathfrak{C}(\kappa,t_0)$ and thus $\rho$ lies inside the range of $P$; or the determinant of $C(z)$ is identically zero,
\begin{equation}\label{eq:det_identically_zero}
    C_{11}(z)C_{22}(z)=C_{12}(z)C_{21}(z).
\end{equation}
In the latter case, $\rho$ does not lie inside the range of $P$. To show this, suppose on the contrary that there is a $\widetilde{C}(z)\in\mathfrak{C}(\kappa,t_0)$ with $\pi(\widetilde{C}(x_k))=\rho_k$ for $1\leq k\leq 4$. Then, by the same argument as in the proof of Lemma \ref{lem:embedding}, we have
\begin{equation}\label{eq:connection_equality}
    C(z)=D\widetilde{C}(z),
\end{equation}
for some diagonal matrix $D$. However, as the determinant of $C(z)$ is identically zero, we must have $|D|=0$. Consequently, equation \eqref{eq:connection_equality}  contradicts equations \eqref{eq:entriesnonzero}. It follows that, in the case when the determinant of $C(z)$ is identically zero, $\rho$ indeed does not lie in the range of $P$.

Therefore, to prove the proposition, it remains to be shown that the determinant of the matrix $C(z)$, constructed above, is identically equal to zero if and only if the coordinate-values $\rho$ lie in $X=X(\kappa,t_0)$, and that this space $X$ is a codimension one subspace of $S(\kappa,t_0)$.

To this end, let us note that equations \eqref{eq:entriesnonzero} and \eqref{eq:det_identically_zero} imply that either
\begin{enumerate}[label=(\roman*)]
\item $C_{11}(z)\equiv 0$ and $C_{21}(z)\equiv 0$,
\item $C_{12}(z)\equiv 0$ and $C_{22}(z)\equiv 0$, or
\item $C_{11}(z)C_{22}(z)=C_{12}(z)C_{21}(z)\not\equiv 0$.
\end{enumerate}
Case (i) corresponds, via equations \eqref{eq:rho_linear_1} and \eqref{eq:rho_linear_2}, to the four lines
\begin{equation}\label{eq:lineszero}
    \{\rho\in(\mathbb{P}^1)^4:\rho_i=\rho_j=\rho_k=0\}\quad (1\leq i<j<k\leq 4).
\end{equation}
Indeed, $C_{11}(z)\equiv 0$ implies that the coefficients $\alpha_{1}^{11},\alpha_2^{11}$ in equation \eqref{eq:rho_linear_1} are zero. A non-trivial solution of \eqref{eq:rho_linear_1} with these constraints exists if and only if the coordinate-values $\rho$ lies inside one of the above four lines.

Similarly, case (ii) corresponds to the four lines
\begin{equation}\label{eq:linesinfty}
    \{\rho\in(\mathbb{P}^1)^4:\rho_i=\rho_j=\rho_k=\infty\}\quad (1\leq i<j<k\leq 4).
\end{equation}

Note that the eight lines, defined in \eqref{eq:lineszero} and \eqref{eq:linesinfty}, indeed lie inside $X$.

Finally, in case (iii), $C(z)$ must take the form
\begin{equation*}
    C(z)=\begin{pmatrix}
    c_{11}\theta_q(z/u_1)\theta_q(z/v_1) & c_{12}\theta_q(z/u_1)\theta_q(z/u_2)\\
    c_{21}\theta_q(z/v_1)\theta_q(z/v_2) & c_{22}\theta_q(z/u_2)\theta_q(z/v_2)
    \end{pmatrix},
\end{equation*}
with
\begin{equation*}
u_1=\kappa_0 t_0 \tau,\quad u_2=\kappa_\infty \tau^{-1},\quad
v_1=\kappa_\infty^{-1}\tau^{-1},\quad
v_2=\kappa_0^{-1} t_0\tau,
\end{equation*}
for some $\tau\in\mathbb{C}^*$ and nonzero constant multipliers satisfying $c_{11}c_{22}=c_{12}c_{21}$.
 The corresponding $\rho$-coordinates of this matrix are given by
\begin{equation}\label{eq:rho_coord_singular}
    \rho_k=c\,\phi(\tau x_k)\quad (1\leq k\leq 4),\quad \phi(x):=\frac{\theta_q(x\kappa_\infty)}{\theta_q(x/\kappa_\infty)},\quad c=\frac{c_{11}}{c_{12}}\in\mathbb{C}^*.
\end{equation}
Consequently, for any choice of $c,\tau\in\mathbb{C}^*$, equation \eqref{eq:rho_coord_singular} defines a point on the threefold $S(\kappa,t_0)$. We now make the important observation that formulae \eqref{eq:rho_coord_singular} are $\kappa_0$-independent. That is, equation \eqref{eq:rho_coord_singular} defines a point on $S(\lambda_0,\kappa_t,\kappa_1,\kappa_\infty,t_0)$, for any value of $\lambda_0$. Thus these points lie in the subspace $X$.

To prove the proposition, it suffices to show that, conversely, any point in $X$ lies either on one of the eight lines \eqref{eq:lineszero} and \eqref{eq:linesinfty}, or is given by \eqref{eq:rho_coord_singular} for a choice of $c,\tau\in\mathbb{C}^*$. To this end, let us take a point $\rho\in X$ which is not on one of the eight lines.
 Construct a corresponding matrix $C(z)$ via equations \eqref{eq:rho_linear_1} and \eqref{eq:rho_linear_2}, see equation \eqref{eq:connection_construction}. So $C(z)$ is analytic on $\mathbb{C}^*$, it satisfies \eqref{eq:qdifferenceC} and equations \eqref{eq:entriesnonzero} hold true.

As $\rho\in X\subseteq S(1,\kappa_t,\kappa_1,\kappa_\infty,t_0)$, we can similarly construct a matrix $\widetilde{C}(z)$ via equations \eqref{eq:rho_linear_1} and \eqref{eq:rho_linear_2},
which satisfies
\begin{equation*}
    \widetilde{C}(qz)=z^{-2}t_0\widetilde{C}(z)\kappa_\infty^{-\sigma_3}.
\end{equation*}
This matrix function is also analytic on $\mathbb{C}^*$, and satisfies \eqref{eq:entriesnonzero}.

Now suppose, for the sake of obtaining a contradiction, that $\rho$ is not given by \eqref{eq:rho_coord_singular} for some $c,\tau\in\mathbb{C}^*$, so that $|C(z)|\not\equiv 0$. Consider the quotient $D(z)=\widetilde{C}(z)C(z)^{-1}$. As, by construction, $C(z)$ and $\widetilde{C}(z)$ have the same $\rho$-coordinate values, it follows that $D(z)$ is an analytic function on $\mathbb{C}^*$. However, $D(z)$ satisfies the $q$-difference equation
\begin{equation*}
    D(qz)=D(z)\kappa_0^{-\sigma_3},
\end{equation*}
and therefore, by Lemma \ref{lem:classificationtheta}, $D(z)\equiv 0$ and consequently $\widetilde{C}(z)\equiv 0$, which contradicts the fact that $\widetilde{C}(z)$ satisfies \eqref{eq:entriesnonzero}. 

We conclude that the subspace $X$ is explicitly parametrised by
\begin{equation}\label{eq:parametrisationX}
    X=\operatorname{cl}\left(\{(c\,\phi(\tau x_1),c\,\phi(\tau x_2),c\,\phi(\tau x_3),c\,\phi(\tau x_4)):c,\tau\in\mathbb{C}^*\}\right),
\end{equation}
where $\phi$ is the function defined in \eqref{eq:rho_coord_singular} and
 the closure is taken in $(\mathbb{P}^1)^4$. Thus $X$ is a codimension one closed subspace of $S(\kappa,t_0)$. Furthermore, we have shown that $X$ consists precisely of the points in the threefold $S(\kappa,t_0)$ that cannot be realised as coordinate-values $\rho$ of any connection matrix $C(z)\in\mathfrak{C}(\kappa,t_0)$. Thus the image of $M(\kappa,t_0)$ under the embedding $P$ is given by $S(\kappa,t_0)\setminus X$ and the proposition follows.
\end{proof}

\begin{proof}[Proof of Theorem \ref{thm:main_moduli}] Recall from Definition \ref{def:monodromymanifoldM} that elements of $M(\kappa,t_0)$ are connection matrices $C$ equivalent under left multiplication by a diagonal matrix, while the entries of $\mathcal{M}(\kappa,t_0)$ are those equivalent under right and left multiplication by diagonal matrices. Note that the desired bijection is already proved for $M(\kappa,t_0)$ in  Proposition \ref{prop:moduli}. So the proof of the present theorem will follow under an appropriate quotient mapping $M(\kappa,t_0)$ to $\mathcal{M}(\kappa,t_0)$ and the corresponding quotient from $S^*(\kappa,t_0)$ to $\mathcal{S}^*(\kappa,t_0)$. Recall that  Definition \ref{def:monodromymanifoldM} denotes the former quotient by $\iota_M$. The latter quotient is denoted by $\iota_{\mathbb{P}}$, defined in equation \eqref{eq:defi_iota}.

Now, consider the commutative diagram \eqref{eq:commutative_diagram}.
By Proposition \ref{prop:moduli}, the image of $P$ is given by $S^*(\kappa,t_0)$. Therefore, the image of the composition $\iota_{\mathbb{P}}\circ P$ is given by $\mathcal{S}^*(\kappa,t_0)$. As $\iota_M$ is surjective, it follows from the commutativity of diagram \eqref{eq:commutative_diagram} that the image of $\mathcal{P}$ is given by $\mathcal{S}^*(\kappa,t_0)$.

In Lemma \ref{lem:embedding}, it was shown that $\mathcal{P}$ is injective and it thus follows that $\mathcal{P}$ is a bijection, which proves the theorem.
\end{proof}

\begin{proof}[Proof of Remark \ref{remark:curveX}]
We note that, by equation \eqref{eq:parametrisationX}, we have the following explicit parametrisation of the curve $\mathcal{X}=\mathcal{X}(\kappa,t_0)$,
\begin{equation}\label{eq:parametrisationXcurve}
\begin{aligned}
    \mathcal{X}&=\operatorname{cl}\left(\{(\phi(\tau x_1),\phi(\tau x_2),\phi(\tau x_3),\phi(\tau x_4)):\tau\in\mathbb{C}^*\}\right),\\
    \phi(x)&=\frac{\theta_q(x\kappa_\infty)}{\theta_q(x/\kappa_\infty)},
\end{aligned}
\end{equation}
where the closure is taken in $(\mathbb{P}^1)^4/\mathbb{C}^*$ and $x_k,1\leq k\leq 4$, are as defined in equation \eqref{eq:intro_xnotation}. Note that this parametrisation is $\kappa_0$-independent, which implies
\begin{equation*}
\mathcal{X}\subseteq
\bigcap_{\lambda_0\in\mathbb{C}^*}\mathcal{S}(\lambda_0,\kappa_t,\kappa_1,\kappa_\infty,t_0)
.\end{equation*}
By the definition of $\mathcal{X}$, equation \eqref{eq:subvariety}, the right-hand side is also a subset of $\mathcal{X}$ and they are therefore equal, yielding the desired result, equation \eqref{eq:subvariety_infty}.
\end{proof}

\subsection{Smoothness of the monodromy manifold}\label{subsection:smoothness}
In this subsection, we study the smoothness of the monodromy manifold $\mathcal{M}(\kappa,t_0)$ and prove Theorem \ref{thm:main_smooth}.

The monodromy manifold does not naturally come with a topology. However, 
due to Theorem \ref{thm:main_moduli} and Proposition \ref{prop:moduli}, we have the following refined version of a commutative diagram \eqref{eq:commutative_diagram},
\begin{equation}\label{eq:commutative_diagram_2}
\begin{tikzcd}
M(\kappa,t_0) \arrow[r, "P"] \arrow[d, "\iota_M"]
& S^*(\kappa,t_0) \arrow[d, "\iota_{S^*}"] \\
\mathcal{M}(\kappa,t_0) \arrow[r, "\mathcal{P}"]
& \mathcal{S}^*(\kappa,t_0),
\end{tikzcd}
\end{equation}
where both $P$ and $\mathcal{P}$ are bijective, and $\iota_{S^*}$ denotes the quotient mapping $\iota_{\mathbb{P}}$ restricted to $S^*(\kappa,t_0)$.
The monodromy manifold inherits a topology from $\mathcal{S}^*(\kappa,t_0)$ via $\mathcal{P}$.
Similarly, $M(\kappa_0,t_0)$ inherits a topology from the threefold $S^*(\kappa,t_0)$.

To prove Theorem \ref{thm:main_smooth}, we first study the smoothness of the space $S^*(\kappa,t_0)$. We then deduce corresponding results for the surface $\mathcal{S}^*(\kappa,t_0)$, by taking the quotient with respect to scalar multiplication. Finally, we translate the results for $\mathcal{S}^*(\kappa,t_0)$ to the monodromy manifold.

The following proposition describes the singular set of the space $S^*(\kappa,t_0)$ and shows that it is empty if and only if the non-splitting conditions hold.
\begin{proposition}\label{prop:smooth}
The space $S^*(\kappa,t_0)$ is a complex $3$-manifold  singularities at points in the finite set
\begin{equation}\label{eq:defiSsing}
S^*_{sing}:=S^*(\kappa,t_0)\cap \Theta,
\end{equation}
where
\begin{align}\label{eq:theta_set}
\Theta:=\{&(0,0,\infty,\infty),(0,\infty,0,\infty),(0,\infty,\infty,0),\\
&(\infty,0,0,\infty),(\infty,0,\infty,0),(\infty,\infty,0,0)\}.
\end{align}
Furthermore, all these singularities are ordinary double-point singularities.

In particular, the following statements are equivalent.
\begin{enumerate}[label={{\rm (\roman *)}}]
    \item The space $S^*(\kappa,t_0)$ is smooth.
    \item The set $S^*_{sing}$ is empty.
    \item The non-splitting conditions \eqref{eq:intro_irreducibleparameter} hold true.
\end{enumerate}
\end{proposition}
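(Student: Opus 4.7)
The plan is to analyze the singular locus of the threefold $S=S(\kappa,t_0)$ directly inside $(\mathbb{P}^1)^4$, chart by chart in the standard affine atlas, systematically exploiting the multilinear structure of the defining polynomial $T_{hom}$. The analysis naturally splits into two regimes: the generic open chart $(\mathbb{C}^*)^4$, where all four coordinates are finite and nonzero, and the remaining boundary charts, where at least one coordinate equals $0$ or $\infty$.

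In the open chart, the defining equation reads $T(\rho)=\sum_{i<j}T_{ij}\rho_i\rho_j=0$. Multilinearity forces each $\partial T/\partial\rho_i$ to be independent of $\rho_i$, so the gradient-vanishing conditions reduce to a homogeneous linear system $M\rho=0$ with the symmetric matrix
\begin{equation*}
M=\begin{pmatrix}0 & T_{12} & T_{13} & T_{14}\\ T_{12} & 0 & T_{23} & T_{24}\\ T_{13} & T_{23} & 0 & T_{34}\\ T_{14} & T_{24} & T_{34} & 0\end{pmatrix},
\end{equation*}
and Euler's identity $2T=\sum_i\rho_i\,\partial T/\partial\rho_i$ makes $T=0$ automatic at any nonzero solution. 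The aim here is to show that every nonzero solution $\rho\in(\mathbb{C}^*)^4$ of $M\rho=0$ corresponds, via the construction in the proof of Proposition \ref{prop:moduli}, to a rank-one ``degenerate'' connection matrix (case (iii) there), and hence takes the form $\rho_k=c\,\phi(\tau x_k)$ with $c,\tau\in\mathbb{C}^*$ and $\phi$ as in \eqref{eq:parametrisationX}. Such $\rho$ lies in $X$ and is therefore removed in passing to $S^*$; consequently no singular point of $S^*$ is contributed from the generic chart.

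For the boundary charts, a finite case analysis organised by the number of coordinates equal to $0$ or $\infty$, combined with multilinearity of $T_{hom}$ and the non-resonance conditions \eqref{eq:non_res}, shows that the only possible singular points lying in $S^*$ are the six points of $\Theta$. Direct substitution into the homogenised equation reveals that a point $p_0\in\Theta$ with $\rho_k=\rho_l=\infty$ belongs to $S$ precisely when $T_{kl}=0$, and the parametrisation \eqref{eq:parametrisationX} together with the non-resonance conditions excludes $p_0$ from $X$; so $p_0\in S^*$ if and only if the non-splitting condition corresponding to $T_{kl}=0$ is violated, yielding the equivalence of items (i)--(iii). To verify that such $p_0$ is an ordinary double point, I would compute the Hessian in a local chart centred at $p_0$. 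For $p_0=(0,0,\infty,\infty)$, taking coordinates $(a,b,c,d)=(\rho_1^x,\rho_2^x,\rho_3^y,\rho_4^y)$ with the other homogeneous coordinates set to $1$, the Hessian of $T_{hom}$ at the origin is the block-antidiagonal matrix
\begin{equation*}
H=\begin{pmatrix}0 & B\\ B^{T} & 0\end{pmatrix},\quad B=\begin{pmatrix}T_{14} & T_{13}\\ T_{24} & T_{23}\end{pmatrix},\quad \det H=(T_{14}T_{23}-T_{13}T_{24})^{2},
\end{equation*}
with the other five points of $\Theta$ treated analogously by relabelling. Nondegeneracy of $H$ reduces to the inequality $T_{14}T_{23}\neq T_{13}T_{24}$ under the non-resonance conditions and the constraint $T_{34}=0$, which is a direct computation with $q$-theta function products using that $\theta_q(z)$ vanishes only on $q^{\mathbb{Z}}$.

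The main technical obstacle is the first step: identifying singular points of $S$ in the open chart with points of $X$. One route is direct, solving $M\rho=0$ explicitly and matching with the parametrisation of $X$ via theta-function identities. A possibly cleaner route exploits Remark \ref{remark:curveX}, which characterises $X$ as $\bigcap_{\lambda_0\in\mathbb{C}^*}S(\lambda_0,\kappa_t,\kappa_1,\kappa_\infty,t_0)$: one shows that the singularity condition at a fixed $\rho\in(\mathbb{C}^*)^4$, together with the specific $\kappa_0$-dependence of the coefficients $T_{ij}$, already forces $T(\rho;\lambda_0)=0$ for all $\lambda_0\in\mathbb{C}^*$, placing $\rho$ in $X$. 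Granted this, the boundary-chart case analysis and the Hessian computation at the $\Theta$-points become relatively routine.
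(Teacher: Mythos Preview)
Your overall structure—chart-by-chart analysis, then Hessian computation at the $\Theta$-points—matches the paper's. The difference lies in how you handle the affine chart, and here you are working much harder than necessary.

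In the paper, the affine step is dispatched in one stroke: one computes the determinant of the Hessian $H$ of $T$ (your matrix $M$) explicitly as a product of $q$-theta functions,
\begin{equation*}
|H|=\kappa_0^{-2}\kappa_t^{2}\kappa_1^{2}\kappa_\infty^{2}\,\theta_q\!\left(\kappa_0^2,\kappa_t^2,\kappa_1^2,\kappa_\infty^2\right)^2\theta_q\!\left(\kappa_t\kappa_1 t_0,\kappa_t\kappa_1^{-1}t_0,\kappa_t^{-1}\kappa_1 t_0,\kappa_t^{-1}\kappa_1^{-1}t_0\right)^2,
\end{equation*}
which is nonzero by the non-resonance conditions \eqref{eq:non_res}. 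Hence $M\rho=0$ has \emph{only} the solution $\rho=0$, which lies in $X$ and is therefore already excised from $S^*$. There are simply no nonzero affine critical points to worry about, so your ``main technical obstacle''—showing that any such critical point in $(\mathbb{C}^*)^4$ must lie on $X$—is vacuous. Neither of your two proposed routes (matching with rank-one connection matrices, or arguing via $\kappa_0$-independence) is needed, and in fact the first route is not obviously sound: there is no direct link between the gradient of $T$ vanishing at $\rho$ and the rank of the linear systems \eqref{eq:rho_linear_1}--\eqref{eq:rho_linear_2} dropping further.

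A second payoff of the paper's route is that the same determinant $|H|$ controls the boundary Hessian. When $T_{34}=0$ one checks directly that $(T_{14}T_{23}-T_{13}T_{24})^2=|H|$, so nondegeneracy of your block-antidiagonal Hessian at $(0,0,\infty,\infty)$ follows immediately from the already-established $|H|\neq 0$, with no separate theta-function computation. The analogous identities hold at the other five points of $\Theta$ by relabelling. So compute $|H|$ first; everything else in your outline then goes through essentially as you wrote it.
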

\begin{proof}
Recall that the space $S^*(\kappa,t_0)$ is defined as $S(\kappa,t_0)\setminus X(\kappa,t_0)$, where $S(\kappa,t_0)$ is the zero locus of the polynomial $T(\rho;\kappa,t_0)$ in $(\mathbb{P}^1)^4$ and $X(\kappa,t_0)$ denotes a subspace of $S(\kappa,t_0)$, defined in equation \eqref{eq:Xdefi}.
From here on, we will often suppress the explicit parameter dependence on $(\kappa,t_0)$ of $T(\rho),S,X$ and $S^*=S\setminus X$.

Firstly, as $X$ is, by definition, the zero locus of two polynomials, it is closed in $S$. Hence, $S^*$ is open in $S$. To prove the first part of the proposition, we study whether the gradient of $T(\rho)$ vanishes anywhere on the open subset $S^*$ of $S$.

We start by considering whether $S^*$ has any singularities in its affine part $S^*\cap \mathbb{C}^4$. The zero locus of the gradient of $T(\rho_1,\rho_2,\rho_3,\rho_4)$ is characterised by the linear equation
\begin{equation}\label{eq:gradientlocus}
    H\cdot (\rho_1,\rho_2,\rho_3,\rho_4)^T=0,
\end{equation}
where $H$ is the Hessian matrix of $T$, i.e.
\begin{equation}\label{eq:hessian}
    H=\begin{pmatrix}
    0 & T_{12} & T_{13} & T_{14}\\
    T_{12} &0  & T_{23} & T_{24}\\
    T_{13} & T_{23} & 0 & T_{34}\\
    T_{14} & T_{24} & T_{34} & 0\\
    \end{pmatrix}.
\end{equation}

We proceed to show that the determinant of $H$ is nonzero. This implies that equation \eqref{eq:gradientlocus} has only one solution $\underline{0}:=(0,0,0,0)\in X$, which does not lie in
$S^*$. In particular, $S^*$ has no singularities in its affine part. 

In fact, we will prove that the determinant of $H$ is given explicitly by
\begin{equation}\label{eq:Hdet}
    |H|=\kappa_0^{-2}\kappa_t^{2}\kappa_1^{2}\kappa_\infty^{2}\theta_q\left(\kappa_0^2,\kappa_t^2,\kappa_1^2,\kappa_\infty^2\right)^2\theta_q\left(\kappa_t\kappa_1 t_0,
    \kappa_t\kappa_1^{-1} t_0,
    \kappa_t^{-1}\kappa_1 t_0,
    \kappa_t^{-1}\kappa_1^{-1} t_0\right)^2,
\end{equation}
so that $|H|\neq 0$, due to the non-resonance conditions \eqref{eq:non_res}.

To this end, we first note that $|H|$ depends analytically on each of the parameters $\kappa_j\in \mathbb{C}^*$, $j=0,t,1,\infty$, and $t_0\in\mathbb{C}^*$. We begin by studying the dependence of the determinant on $\kappa_0$ and denote
\begin{equation*}
    h=h(\kappa_0):=|H|.
\end{equation*}
Since each of the entries of $H$ satisfies the $q$-difference equation
\begin{equation*}
	T_{ij}(q\,\kappa_0)=q^{-1}\kappa_0^{-2}T_{ij}(\kappa_0),
\end{equation*}
$1\leq i<j\leq 4$, we have
\begin{equation}\label{eq:difeq}
	h(q\,\kappa_0)=q^{-4}\kappa_0^{-8}h(\kappa_0).
\end{equation} 
It follows from Lemma \ref{lem:classificationtheta} that $h$ has precisely eight zeros, counting multiplicity, in $\{\kappa_0\in \mathbb{C}^*\}$, modulo $q^\mathbb{Z}$.
We further note the following helpful symmetries,
\begin{equation}\label{eq:hsym}
    h(\kappa_0^{-1})=h(\kappa_0),\qquad h'(\kappa_0^{-1})=-\kappa_0^{2}\, h'(\kappa_0).
\end{equation}

A direct calculation yields that $h$, evaluated at $\kappa_0=1$, formally factorises as
\begin{align*}
h(1)=
	\prod_{\epsilon_1,\epsilon_2\in\{\pm 1\}}\big[&
	+\kappa_\infty \theta_q(\kappa_t^2,\kappa_1^2,\kappa_\infty^{-1}t_0,\kappa_\infty t_0)\\
	&+
	\epsilon_1  \kappa_\infty \theta_q(\kappa_t\kappa_1\kappa_\infty^{-1},\kappa_t\kappa_1\kappa_\infty,\kappa_t \kappa_1^{-1}t_0,\kappa_1 \kappa_t^{-1}t_0)\\
	&+
	\epsilon_2  \kappa_t\kappa_1 \theta_q(\kappa_t^{-1}\kappa_1\kappa_\infty,\kappa_t\kappa_1^{-1}\kappa_\infty,\kappa_t \kappa_1 t_0,\kappa_1^{-1} \kappa_t^{-1}t_0)\big].
\end{align*}
The factor with $\epsilon_1=\epsilon_2=-1$ vanishes identically by the addition law for theta functions, hence $h(1)=0$. it furthermore follows from symmetries \eqref{eq:hsym} that $h'(1)=0$, so that $\kappa_0=1$ is at least a double zero of $h$.

An analogous computation shows that $\kappa_0=-1$ is at least a double zero of $h$.

Similarly, it follows that $\kappa_0=q^{\frac{1}{2}}$ is a zero of $h$. To show that it is at least a double zero, we take the derivative of equation \eqref{eq:difeq},
\begin{equation*}
	qh'(q\,\kappa_0)=q^{-4}\kappa_0^{-8}h'(\kappa_0)-8\,q^{-4}\kappa_0^{-9}h(\kappa_0).
\end{equation*} 
By evaluating this identity, and the second equation in \eqref{eq:hsym}, at $\kappa_0=q^{-\frac{1}{2}}$, it follows that 
$h'(q^{\frac{1}{2}})=0$ so that $\kappa_0=q^{\frac{1}{2}}$ is at least a double zero of $h$. The same statement follows analogously for $\kappa_0=-q^{\frac{1}{2}}$.

In conclusion, we have found four zeros of $h$, $\kappa_0=\pm 1,\pm q^{\frac{1}{2}}$, each at least of degree two. But $h$ is a degree $8$ theta function. It follows from this, and equation \eqref{eq:difeq}, that
\begin{equation*}
h=\kappa_0^{-2}\theta_q\left(\kappa_0^2\right)^2 \widetilde{h},
\end{equation*}
where $\widetilde{h}$ is a function independent of $\kappa_0$.

By following the same procedure with respect to the variables $\kappa_t,\kappa_1,\kappa_\infty$, we obtain
\begin{equation*}
	h=c\,\kappa_0^{-2}\kappa_t^{2}\kappa_1^{2}\kappa_\infty^{2}\theta_q\left(\kappa_0^2,\kappa_t^2,\kappa_1^2,\kappa_\infty^2\right)^2\theta_q\left(\kappa_t\kappa_1 t_0,
	\kappa_t\kappa_1^{-1} t_0,
	\kappa_t^{-1}\kappa_1 t_0,
	\kappa_t^{-1}\kappa_1^{-1} t_0\right)^2,
\end{equation*}
for some constant $c$ which may only depend on $t_0$ and $q$.

At this point, one simply evaluates both sides at $\kappa_0=\kappa_t=\kappa_1=\kappa_\infty=i$, to obtain $c=1$, which yields equation \eqref{eq:Hdet}.

We now return to the proof of the proposition. We have already established that $\mathcal{S}^*$ has no singularities in its affine part.
It remains to study whether $S^*$ has singularities with
one or more of their coordinates equal to $\infty$. Note that we only have to check the cases where one or two of their coordinates are equal to $\infty$, as points, with more than two coordinates equal to $\infty$, lie in $X$ and thus not in $S^*$.

Let us start by considering whether there are any singularities in
\begin{equation}\label{eq:surface_intersection}
    S^*\cap\{(\rho_1,\rho_2,\rho_3,\infty):\rho_{k}\in\mathbb{C}\text{ for }1\leq k\leq 3\}.
\end{equation}
To this end, we evaluate the gradient of
\begin{equation*}
F=\rho_4^yT\left(\rho_1^x,\rho_2^x,\rho_3^x,\frac{1}{\rho_4^y}\right)
  \end{equation*}
at $\rho_4^y=0$, yielding
\begin{equation*}
\nabla F|_{\rho_4^y=0}=(T_{14},T_{24},T_{34},T_{14}\rho_1^x+T_{24}\rho_2^x+T_{34}\rho_3^x)^T.
\end{equation*}
For this gradient to vanish, it is required that $T_{14}=T_{24}=T_{34}=0$, which cannot be realised without violating one of the non-resonance conditions \eqref{eq:non_res}. Therefore,  $S^*$ has no singularities with $\rho_4=\infty$ and the remaining coordinates finite. Applying the same argument in the three other cases, it follows that the manifold $S^*$ has no singularities with precisely one of their coordinates equal to $\infty$.

Next, we consider the existence of singularities on $S^*$ with two of their coordinates infinite.
Let us for example consider $\rho_3=\rho_4=\infty$ with $\rho_1$ and $\rho_2$ finite. Setting $\rho_3^y=\rho_4^y=0$ in
\begin{equation*}
\rho_1^y \rho_2^y \rho_3^y \rho_4^yT\left(\frac{\rho_1^x}{\rho_1^y},\frac{\rho_2^x}{\rho_2^y},\frac{\rho_3^x}{\rho_3^y},\frac{\rho_4^x}{\rho_4^y}\right)=0,
\end{equation*}
reduces it to
\begin{equation*}
    T_{34}\rho_1^y \rho_2^y\rho_3^x\rho_4^x=0.
\end{equation*}
Therefore,
\begin{equation}\label{eq:T34}
    \left\{\rho\in S^*:\rho_3=\rho_4=\infty\right\}=\begin{cases}
    \{(\rho_1,\rho_2,\infty,\infty):\rho_{1},\rho_2\in\mathbb{C}\} & \text{ if }T_{34}=0,\\
    \emptyset & \text{ otherwise.}
    \end{cases}
\end{equation}
In turn, $T_{34}=0$ if and only if $\kappa_0^{+1}\kappa_\infty t_0\in q^\mathbb{Z}$ or $\kappa_0^{-1}\kappa_\infty t_0\in q^\mathbb{Z}$. Thus $T_{34}\neq 0$ when the non-splitting conditions \eqref{eq:intro_irreducibleparameter} hold true.

More generally, if the non-splitting conditions \eqref{eq:intro_irreducibleparameter} hold true, then all of the coefficients $T_{ij}$, $1\leq i<j\leq 4$, are nonzero and consequently there are no points on $S^*$ with two coordinates equal to $\infty$. Thus we can conclude that $S^*$ is smooth when conditions the non-splitting conditions hold true.

Returning to the example above, i.e. $\rho_3=\rho_4=\infty$, under the assumption that $T_{34}=0$, evaluation of the gradient
of
\begin{equation*}
F=\rho_3^y\rho_4^yT\left(\rho_1^x,\rho_2^x,\frac{1}{\rho_3^y},\frac{1}{\rho_4^y}\right)
  \end{equation*}
at $\rho_3^y=\rho_4^y=0$, yields
\begin{equation*}
\nabla F|_{\rho_3^y,\rho_4^y=0}=(0,0,T_{14}\rho_1^x+T_{24}\rho_2^x,T_{13}\rho_1^x+T_{23}\rho_2^x)^T,
\end{equation*}
which vanishes at $\rho_1^x=\rho_2^x=0$, and only at this point, as
\begin{equation*}
    \begin{vmatrix}
    T_{14} & T_{24}\\
    T_{13} & T_{23}
    \end{vmatrix}^2=|H|\neq 0,
\end{equation*}
where $H$ the Hessian of $T$ defined in equation \eqref{eq:hessian}.

The determinant of the Hessian of $F$ at the point $(\rho_1^x,\rho_2^x,\rho_3^y,\rho_4^y)=\underline{0}$ equals $|H|$, which is nonzero, and thus this point is a non-degenerate saddle point of $F$. In particular, $\{F=0\}$ has an ordinary double point singularity at $\underline{0}$, by the complex Morse lemma. Therefore, the manifold $S^*$ has an ordinary double point singularity at $\rho=(0,0,\infty,\infty)$, when $\kappa_0^{+1}\kappa_\infty t_0\in q^\mathbb{Z}$ or $\kappa_0^{-1}\kappa_\infty t_0\in q^\mathbb{Z}$.

More generally, if some of the non-splitting conditions \eqref{eq:intro_irreducibleparameter} are violated, then the intersection $S_{sing}^*$  of $\Theta$ and
$S^*$ is non-empty, and at each point in $S_{sing}^*$, $S^*$ has an ordinary double point singularity and $S^*$ is smooth elsewhere. Otherwise, $S_{sing}^*$ is empty and in that case we have already shown that $S^*$ has no singularities.
This completes the proof of the proposition.
\end{proof}

We now proceed to prove Theorem \ref{thm:main_smooth} by using Proposition \ref{prop:smooth}.

\begin{proof}[Proof of Theorem \ref{thm:main_smooth}]
The first part of the proof is to show that the smoothness properties of the $3$-manifold $S^*(\kappa,t_0)$, established in Proposition \ref{prop:smooth}, are preserved by the quotient map to $\mathcal{S}^*(\kappa,t_0)$. The second step will be to translate these results to the monodromy manifold $\mathcal{M}(\kappa,t_0)$.

Recall that $\mathcal{S}^*(\kappa,t_0)$ is the zero set of the polynomial $T(\rho;\kappa,t_0)$, given in  Definition \ref{def:modulispace}. Due to Proposition \ref{prop:smooth}, it can be singular only at points in the finite set $\Theta$, given in equation \eqref{eq:theta_set}. Recall also that $S^*_{sing}$ refers to the subset of singular points lying on the 3-manifold $S^*(\kappa,t_0)$. Consider the smooth complex $3$-manifold
\begin{equation*}
    \widetilde{S}^*(\kappa,t_0)=S^*(\kappa,t_0)\setminus S_{sing}.
\end{equation*}
We denote the image of $\Theta$ under the quotient map $\iota_{S^*}$ by $\widehat{\Theta}$, so that the image of $\widetilde{S}^*(\kappa,t_0)$ under $\iota_{S^*}$ is given by
\begin{equation*}
    \widetilde{\mathcal{S}}^*(\kappa,t_0)=\mathcal{S}^*(\kappa,t_0)\setminus \mathcal{S}_{sing}^*,\quad  \mathcal{S}_{sing}^*:=\mathcal{S}^*(\kappa,t_0)\cap \widehat{\Theta}.
\end{equation*}
As (non-zero) scalar multiplication acts smoothly on $\widetilde{S}^*(\kappa,t_0)$, and no element of $\widetilde{S}^*(\kappa,t_0)$ is invariant under this operation, it follows that $\widetilde{\mathcal{S}}^*(\kappa,t_0)$ is a smooth complex surface.

Now, consider a point $\rho_0\in S_{sing}$. Since this point is invariant under the smooth action $\rho \mapsto c\,\rho$, $c\in\mathbb C^*$, it is easy to see that the quotient space $\mathcal{S}^*$ is not Hausdorff near its image $[\rho_0]$. In fact, near points in $S_{sing}$, the space $\mathcal{S}^*$ even fails to locally be a $T_1$ space. In particular, the smooth structure on $\widetilde{\mathcal{S}}^*(\kappa,t_0)$ cannot be extended to include points in $S_{sing}$.

To complete the proof of the theorem, we translate the results on $\mathcal{S}^*(\kappa,t_0)$ to $\mathcal{M}(\kappa,t_0)$ via the mapping $\mathcal{P}$. To this end, recall that $\mathcal{P}$ maps the finite set $\mathcal{M}_{sing}$ onto $\mathcal{S}_{sing}$. 

We have shown that $\mathcal{S}^*(\kappa,t_0)\setminus \mathcal{S}_{sing}$ is a smooth complex surface. Hence $\mathcal{M}(\kappa,t_0)\setminus \mathcal{M}_{sing}$ is a smooth complex surface.  Furthermore, elements of $\mathcal{M}_{sing}$ form singularities on the monodromy manifold, as points in $\mathcal{S}_{sing}$ are singularities on $\mathcal{S}^*(\kappa,t_0)$.

Finally, we note that $\mathcal{M}_{sing}$ is non-empty if and only if $\mathcal{S}_{sing}$ is non-empty, and the latter holds true if and only if some of the non-splitting conditions are violated, by the equivalence in Proposition \ref{prop:smooth}. The theorem follows.
\end{proof}

\subsection{The monodromy manifold as an algebraic surface}\label{subsection:embeddings}
In this section, we prove Theorem \ref{thm:main_affine}, which allows us to identify the monodromy manifold with an affine algebraic surface embedded in $\mathbb{C}^6$. Furthermore, we describe how the monodromy manifold can also be embedded in 
 $(\mathbb{P}^1)^3$.

\begin{proof}[Proof of Theorem \ref{thm:main_affine}]
The mapping $\Phi_{\mathcal M}$ is composed of two parts: $\mathcal P: \mathcal M\to \mathcal S^*$ and $\Phi:\mathcal S^*\to \mathcal F$. The mapping $\mathcal{P}$ is a (topological) isomorphism   due to theorem \ref{thm:main_moduli}. Hence, it only remains to show that the mapping $\Phi$ is an isomorphism. 
To prove this, we construct a continuous inverse, which we denote by $\Psi$, of $\Phi$.

We start by recalling that $\mathcal S^*$, defined in equation \eqref{eq:defsstar}, is locally described by coordinates $[\rho]$ in the ambient space $(\mathbb{P}^1)^4/\mathbb{C}^*$. Similarly,  $\mathcal F$ is described by the coordinates $\eta_{ij}$, $1\le i<j \le 4$, in $\mathbb{C}^6$. 

The mapping $\Phi$ is a continuous mapping from $\mathcal S^*$ to $\mathcal F$, described by equation \eqref{eq:eta_defi} with respect to the above coordinates.
In particular, note that, due to equation \eqref{eq:eta_defi}, for any labeling $\{i,j,k,l\}=\{1,2,3,4\}$, we have
\begin{equation}\label{eq:crucial_chi}
\eta_{ij}=0 \iff \rho_i=0\text{ or }\rho_j=0\text{ or }\rho_k=\infty\text{ or }\rho_l=\infty.
\end{equation}

This means that $\Phi$ maps the open subdomain $\mathcal{S}_0\subseteq\mathcal S^*$, given by
\begin{equation*}
    \mathcal{S}_0:=\{[\rho]\in\mathcal{S}^*:\rho_k\neq 0,\infty\text{ for $1\leq k\leq 4$}\},
\end{equation*}
into the subspace
\begin{equation*}
    \mathcal{F}_0:=\mathcal{F}\cap (\mathbb{C}^*)^6,
\end{equation*}
of the co-domain.

We proceed by defining an inverse of $\Phi$ on this subdomain and co-domain, and subsequently extending this inverse to one on the full domain.

The relevant mapping on $\mathcal{F}_0$ is the following,
\begin{equation*}
    \Psi|_{\mathcal{F}_0}:\mathcal{F}_0\rightarrow (\mathbb{P}^1)^4/\mathbb{C}^*, \eta\rightarrow \left[\left(\frac{T_{34}\eta_{13}}{T_{13}\eta_{34}},\frac{T_{34}\eta_{23}}{T_{23}\eta_{34}},\frac{T_{24}\eta_{23}}{T_{23}\eta_{24}},1\right)\right].
\end{equation*}
which we now show to be an inverse of $\Phi|_{\mathcal{S}_0}$. By equations \eqref{eq:eta_equationsa}, \eqref{eq:eta_equationsc} and \eqref{eq:eta_equationsd}, the image of $\Psi|_{\mathcal{F}_0}$ is contained in $\mathcal{S}$. Furthermore, due to \eqref{eq:eta_equationsb}, any point in the image cannot lie in $\mathcal{X}$. It thus follows that the image of $\Psi|_{\mathcal{F}_0}$ is contained in $\mathcal{S}^*$. Furthermore, as $\mathcal{F}_0$ by definition excludes any of the $\eta$-coordinates to equal zero, $\Psi|_{\mathcal{F}_0}$ maps $\mathcal{F}_0$ into  $\mathcal{S}_0$. Finally, note that, for any point $\rho\in \mathcal{S}_0$,
\begin{align*}
    \Psi|_{\mathcal{F}_0}\circ \Phi|_{\mathcal{S}_0}([\rho])&=\left[\left(\frac{T_{34}\eta_{13}}{T_{13}\eta_{34}},\frac{T_{34}\eta_{23}}{T_{23}\eta_{34}},\frac{T_{24}\eta_{23}}{T_{23}\eta_{24}},1\right)\right]\\
    &=[(\rho_1/\rho_4,\rho_2/\rho_4,\rho_3/\rho_4,1)]\\
    &=[(\rho_1,\rho_2,\rho_3,\rho_4)],
\end{align*}
where, in the second equality, we used equation \eqref{eq:eta_defi}.

Similarly, it can be seen that $\Phi|_{\mathcal{S}_0}\circ \Psi|_{\mathcal{F}_0}$ is the identity map on $\mathcal{F}_0$. It follows that $\Psi|_{\mathcal{F}_0}$ is a (continuous) inverse of $\Phi|_{\mathcal{S}_0}$.

The set $\mathcal{S}_0$ is an open dense subset of the domain $\mathcal{S}$ and, similarly, $\mathcal{F}_0$ is an open dense subset of the co-domain. It remains to deal with the special cases where one or more of the $\rho_k$, $1\leq k\leq 4$, is zero or infinite, and equivalently one or more of the $\eta_{ij}$, $1\leq i<j\leq 4$ is zero.

We handle each of these cases separately. The cases are described by 
\begin{equation}\label{eq:boundary_components}
\begin{cases}
    &\mathcal{S}_i^0:=\{[\rho]\in\mathcal{S}^*:\rho_i=0,\rho_k\notin \{0,\infty\}\text{ for }k\neq i\},\\
    &\mathcal{S}_j^\infty:=\{[\rho]\in\mathcal{S}^*:\rho_j=\infty,\rho_k\notin \{0,\infty\}\text{ for }k\neq j\},\\
    &\mathcal{S}_{i,j}^{0,\infty}:=\{[\rho]\in\mathcal{S}^*:\rho_i=0,\rho_j=\infty,\rho_k\notin \{0,\infty\}\text{ for }k\neq i,j\},
\end{cases}
\end{equation}
for $1\leq i,j\leq 4$ with $i\neq j$. Note that $\mathcal{S}_{i,j}^{0,\infty}$ provides the boundaries of $\mathcal{S}_i^0$ and $\mathcal{S}_j^\infty$. Since no point on $\mathcal S_0$ can have two or more components all zero or all infinite, the sets defined in equation \eqref{eq:boundary_components} glue together to provide all the boundaries or limit sets of $\mathcal S_0$ within $\mathcal{S}^*$.

We now express the surface $\mathcal{S}^*$ as a disjoint union of all of these cases with $\mathcal S_0$, that is,
\begin{align*}
    \mathcal{S}^*=&\mathcal{S}_0\sqcup \mathcal{S}_1^0\sqcup \mathcal{S}_2^0\sqcup \mathcal{S}_3^0\sqcup \mathcal{S}_4^0\\
    &\sqcup \mathcal{S}_1^\infty\sqcup \mathcal{S}_2^\infty\sqcup \mathcal{S}_3^\infty\sqcup \mathcal{S}_4^\infty\\
    &\sqcup \mathcal{S}_{1,2}^{0,\infty}\sqcup \mathcal{S}_{1,3}^{0,\infty}\sqcup \mathcal{S}_{1,4}^{0,\infty}\sqcup\mathcal{S}_{2,1}^{0,\infty}\sqcup\ldots \sqcup \mathcal{S}_{4,2}^{0,\infty}\sqcup \mathcal{S}_{4,3}^{0,\infty},
\end{align*}
where the last line indicates disjoint union of all $\mathcal{S}_{i,j}^{0,\infty}$, $1\leq i,j\leq 4$, with $i\neq j$.

We correspondingly decompose the codomain $\mathcal{F}$ into disjoint components. 
Motivated by equation \eqref{eq:crucial_chi}, we define these components by
\begin{equation}\label{eq:boundary_componentsF}
\begin{cases}
    &\mathcal{F}_i^0 :=\{\eta\in\mathcal{F}: i\in\{k,l\}\iff \eta_{kl}=0 ,\text{ for }1\leq k<l\leq 4\},\\
    &\mathcal{F}_j^\infty:=\{\eta\in\mathcal{F}: j\notin\{k,l\}\iff \eta_{kl}=0,\text{ for }1\leq k<l\leq 4\},\\
    &\mathcal{F}_{i,j}^{0,\infty}:=\{\eta\in\mathcal{F}: i\in\{k,l\}\text{ and }j\notin\{k,l\} \iff \eta_{kl}=0,\\
    &\qquad \qquad\quad  \text{ for }1\leq k<l\leq 4\}.
\end{cases}
\end{equation}
Equations \eqref{eq:eta_equations} imply that any element $\eta$ of $\mathcal{F}$ has either zero, three or four components equal to zero and the components in \eqref{eq:boundary_componentsF} indeed cover all of $\mathcal{F}\setminus \mathcal{F}_0$.

Then, inspired by \eqref{eq:crucial_chi}, we correspondingly decompose
$\mathcal{F}$ as a disjoint union,
\begin{align*}
    \mathcal{F}=&\mathcal{F}_0\sqcup \mathcal{F}_1^0\sqcup \mathcal{F}_2^0\sqcup \mathcal{F}_3^0\sqcup \mathcal{F}_4^0\\
    &\sqcup \mathcal{F}_1^\infty\sqcup \mathcal{F}_2^\infty\sqcup \mathcal{F}_3^\infty\sqcup \mathcal{F}_4^\infty\\
    &\sqcup \mathcal{F}_{1,2}^{0,\infty}\sqcup \mathcal{F}_{1,3}^{0,\infty}\sqcup \mathcal{F}_{1,4}^{0,\infty}\sqcup\mathcal{F}_{2,1}^{0,\infty}\sqcup\ldots \sqcup \mathcal{F}_{4,2}^{0,\infty}\sqcup \mathcal{F}_{4,3}^{0,\infty},
\end{align*}
Due to \eqref{eq:crucial_chi}, $\Phi$ maps each component in the decomposition of $\mathcal{S}^*$ into the corresponding component in the decomposition of $\mathcal{F}$.
We extend $\Psi$ to a global inverse of $\Phi$ on $\mathcal{F}$, by locally defining it on each of the components in the decomposition of $\mathcal{F}$. The arguments for each of the three types of components are similar, and so we give the details for one of each type below to illustrate the details.

For example, for
\begin{equation*}
    \mathcal{F}_1^0=\{\eta\in \mathcal{F}: \eta_{12}=\eta_{13}=\eta_{14}=0\text{ and }\eta_{23},\eta_{24},\eta_{34}\neq 0\},
\end{equation*}
we set
\begin{equation*}
\Psi|_{\mathcal{F}_1^0}:\mathcal{F}_1^0\rightarrow \mathcal{S}_1^0,\eta\mapsto \left[\left(0,\frac{T_{34}\eta_{23}}{T_{23}\eta_{34}},\frac{T_{24}\eta_{23}}{T_{23}\eta_{24}},1\right)\right],
\end{equation*}
which defines an inverse of $\Phi|_{\mathcal{S}_1^0}$. 
Similarly, for
\begin{equation*}
    \mathcal{F}_1^\infty=\{\eta\in \mathcal{F}:\eta_{23}=\eta_{24}=\eta_{34}=0\text{ and } \eta_{12},\eta_{13},\eta_{14}\neq 0\},
\end{equation*}
we define
\begin{equation*}
\Psi|_{\mathcal{F}_1^\infty}:\mathcal{F}_1^\infty\rightarrow \mathcal{S}_1^\infty,\eta\mapsto \left[\left(\infty,\frac{\eta_{12}}{T_{12}},\frac{\eta_{13}}{T_{13}},\frac{\eta_{14}}{T_{14}}\right)\right],
\end{equation*}
which is an inverse of $\Phi|_{\mathcal{S}_1^\infty}$.
For the third and final example
\begin{equation*}
    \mathcal{F}_{1,2}^{0,\infty}=\{\eta\in \mathcal{F}:\eta_{12}=\eta_{13}=\eta_{14}=\eta_{34}=0\text{ and } \eta_{23},\eta_{24}\neq 0\},
\end{equation*}
we take
\begin{equation*}
\Psi|_{\mathcal{F}_{1,2}^{0,\infty}}:\mathcal{F}_{1,2}^{0,\infty}\rightarrow \mathcal{S}_{1,2}^{0,\infty},\eta\mapsto \left[\left(0,\infty,\frac{\eta_{23}}{T_{23}},\frac{\eta_{24}}{T_{24}}\right)\right],
\end{equation*}
which is an inverse of $\Phi|_{\mathcal{S}_{1,2}^{0,\infty}}$. 

This extends $\Psi$ to a global inverse of $\Phi$ on $\mathcal{F}$. $\Psi$ is continuous on each of the separate components and it is straightforward to check that its continuations to common boundary points of different components agree with each other.
\end{proof}

We finish this section by describing an embedding of the monodromy manifold into $(\mathbb{P}^1)^3$. We assume that the non-splitting conditions \eqref{eq:intro_irreducibleparameter} hold true. In particular, all the coefficients of the polynomial $T(p:\kappa,t_0)$ are nonzero and, therefore, there are no points $\rho\in S^*(\kappa,t_0)$ with two or more components all zero or all infinite. Thus,
\begin{equation*}
    \rho_{ij}=\frac{\rho_i}{\rho_j}\in\mathbb{P}^1,\quad (1\leq i<j\leq 4),
\end{equation*}
form six well-defined coordinates on the surface $\mathcal{S}^*(\kappa,t_0)$ and thus also on the monodromy manifold $\mathcal{M}(\kappa,t_0)$.

 Ohyama et al. \cite{ohyamaramissualoy} study the $q\Psix$ monodromy manifold using these coordinates\footnote{To be precise, in \cite{ohyamaramissualoy}[\textsection 5.1.1] the `dual' coordinates $\Pi_{ij}:=\frac{\rho_i'}{\rho_j'}\in\mathbb{P}^1$ are used, where $\rho_k'=\pi(C(x_k)^T)$ for $1\leq k\leq 4$. These coordinates are bi-rationally equivalent to the $\rho_{ij}$ coordinates and we note that $(\rho_1',\rho_2',\rho_3',\rho_4')$ lies on the threefold $S^*(\kappa_\infty^{-1},\kappa_t,\kappa_1,\kappa_0^{-1},t_0)$.}.
 Theorem \ref{thm:main_moduli} yields explicit algebraic relations among them. For example, $\rho_{12},\rho_{23}$ and $\rho_{34}$ are related by
 \begin{equation}\label{eq:coordinatesrhoratios}
T_{12}\rho_{12}\rho_{23}^2
+T_{13}\rho_{12}\rho_{23}
+T_{14}\rho_{12}\rho_{23}\rho_{34}^{-1}
+T_{23}\rho_{23}
+T_{24}\rho_{23}\rho_{34}^{-1}
+T_{34}\rho_{34}^{-1}=0.
 \end{equation}
Analogously to the proof of Theorem \ref{thm:main_affine}, we can show that these three coordinates yield an embedding of the monodromy manifold into $(\mathbb{P}^1)^3$,
\begin{equation*}
\mathcal{M}(\kappa,t_0)\rightarrow (\mathbb{P}^1)^3,[C(z)]\mapsto (\rho_{12},\rho_{23},\rho_{34}),
\end{equation*}
 with range given by the surface \eqref{eq:coordinatesrhoratios} minus a curve. This curve is defined by the intersection of \eqref{eq:coordinatesrhoratios} as $\kappa_0$ varies over $\mathbb{C}^*$.
 
 \begin{remark}\label{remark:compatible}
Assuming the non-splitting conditions \eqref{eq:intro_irreducibleparameter}, the six coordinates $\rho_{ij}$, $1\leq i,j\leq 4$, are analytic rational functions from $\mathcal{F}(\kappa,t_0)$ to $\mathbb{CP}^1$, which together embed the surface into $(\mathbb{CP}^1)^6$. The same statements holds true for these coordinates, as functions on the monodromy manifold $\mathcal{M}(\kappa,t_0)$, with respect to the structure of a complex algebraic variety defined in Ohyama et al. \cite{ohyamaramissualoy}. It follows that this structure is compatible with the one induced by Theorem \ref{thm:main_affine}.
 \end{remark}

\section{Conclusion}\label{sec:conc}
In this paper, we studied the $q\Psix$ equation through its associated linear problem. Assuming non-resonant parameter conditions, we defined the corresponding Riemann-Hilbert problem, which captures the general solution of $q\Psix$. This problem was shown to be solvable for irreducible monodromy, leading to a one-to-one correspondence between solutions of $q\Psix$ and points on the corresponding monodromy manifold, when the non-splitting conditions are satisfied.

In turn, we constructed an explicit embedding of the monodromy manifold into $(\mathbb{CP}^1)^4/\mathbb{C}^*$, whose image is described by the zero locus of a single quadratic polynomial, minus a curve. This allowed us to show that the monodromy manifold is a smooth complex surface, when the non-splitting conditions hold true. We further proved that it can be identified with an affine algebraic surface, under the same assumptions. This surface can be described as the intersection of two quadrics in $\mathbb{C}^4$ and its projective completion is thus a Segre surface.

The results of this paper suggests a possible framework for tackling several open questions. These include, for example, the classification of algebraic or symmetric solutions of $q\Psix$, the construction of (classes of) special transcendental solutions via the geometry of the monodromy manifold, and the derivation of solutions with distinctive (e.g. bounded) global asymptotic behaviours.


\begin{bibdiv}
 \begin{biblist}
 

\bib{asschebook}{book}{
    AUTHOR = {Van Assche, W.},
     TITLE = {Orthogonal polynomials and {P}ainlev\'{e} equations},
    SERIES = {Australian Mathematical Society Lecture Series},
    VOLUME = {27},
 PUBLISHER = {Cambridge University Press, Cambridge},
      YEAR = {2018},
}

\bib{birkhoffgeneralized1913}{article}{
  author={Birkhoff, G.D.},
  title={The generalized {R}iemann problem for linear differential
    equations and the allied problems for linear difference and
    $q$-difference equations},
  journal= {Proceedings of the American Academy of Arts and Sciences},
  volume={49},
  pages={521--568},
  year={1913}
  }

\bib{carmichael1912}{article}{
  author= {Carmichael,R.D.},
  title={The general theory of linear $q$-difference equations},
  journal={ American Journal of Mathematics},
  volume={34},
  pages={147--168},
  year={1912}
  }

\bib{mazzoccoconjectures}{article}{
  author={Chekhov, L.},
  author={Mazzocco, M.},
  author={V. Rubtsov},
   title={Quantised Painlev\'{e} monodromy manifolds, Sklyanin and Calabi-Yau
   algebras},
   journal={Adv. Math.},
   volume={376},
   date={2021},
   pages={Paper No. 107442, 52}
}

\bib{deiftorthogonal}{book}{
  author={Deift, P.~A.},
  title={ Orthogonal polynomials and random matrices: a
    {R}iemann-{H}ilbert approach},
  volume={3},
  series={Courant Lecture Notes in Mathematics},
  publisher={New York University, Courant Institute of Mathematical Sciences, New York and American Mathematical Society, Providence, RI},
  year={1999}
  }

\bib{dreyfus2020degeneration}{article}{
      title={Degeneration from difference to differential Okamoto spaces for the sixth Painlev\'e equation}, 
      author={Dreyfus, T.},
      author={Heu, V.},
      year={2020},
      eprint={arXiv:2005.12805v1 [math.CA]},
      note={preprint}
}

\bib{dub}{book}{
AUTHOR={Dubrovin, B.},
TITLE={Geometry of 2D Topological Field Theories}, SERIES={Springer Lect. Notes Math.}, 
VOLUME={1620}, 
YEAR={1995}, 
PAGES={120-–348}
}

\bib{dubrovinkapaev}{article}{
    AUTHOR = {Dubrovin, B.},
    AUTHOR = {Kapaev, A.},
     TITLE = {A {R}iemann-{H}ilbert approach to the {H}eun equation},
   JOURNAL = {SIGMA},
    VOLUME = {14},
      YEAR = {2018},
     PAGES = {Paper No. 093, 24}
}

\bib{fokasitskapaev}{book}{
    AUTHOR = {Fokas, A.S.},
    AUTHOR = {Its, A.R.},
    AUTHOR = {Kapaev, A.A.},
    AUTHOR = {Novokshenov, V.Y.},
     TITLE = {Painlev\'{e} transcendents},
    SERIES = {Mathematical Surveys and Monographs},
    VOLUME = {128},
      NOTE = {The Riemann-Hilbert approach},
 PUBLISHER = {American Mathematical Society, Providence, RI},
      YEAR = {2006},
     PAGES = {xii+553}
}

\bib{fokasitskitaev1}{article}{
	author = {Fokas, A.S.},
	author = {Its, A.R.},
	author = {Kitaev, A.V.},
     TITLE = {Discrete {P}ainlev\'{e} equations and their appearance in quantum
              gravity},
   JOURNAL = {Comm. Math. Phys.},
    VOLUME = {142},
      YEAR = {1991},
    NUMBER = {2},
     PAGES = {313--344}
}

\bib{fokasitskitaev2}{article}{
	author = {Fokas, A.S.},
	author = {Its, A.R.},
	author = {Kitaev, A.V.},
     TITLE = {The isomonodromy approach to matrix models in {$2$}{D} quantum
              gravity},
   JOURNAL = {Comm. Math. Phys.},
    VOLUME = {147},
      YEAR = {1992},
    NUMBER = {2},
     PAGES = {395--430}
}

\bib{forresterqorthogonal}{article}{
	year = {2011},
	volume = {24},
	number = {9},
	pages = {2405--2434},
	author = {Ormerod, C.M.},
	author = {Witte, N.S.},
	author = {Forrester, P.J.},
	title = {Connection preserving deformations and $q$-semi-classical orthogonal polynomials},
	journal = {Nonlinearity}
}

\bib{frickeklein}{book}{
    AUTHOR = {Fricke, R.},
    AUTHOR = {Klein, F.},
     TITLE = {Vorlesungen \"{u}ber die {T}heorie der automorphen {F}unktionen. I},
 PUBLISHER = {Druck und Verlag
von B.G.Teubner, Leipzig},
      YEAR = {1897}
}

\bib{fuchs}{article}{
author={Fuchs, R.},
title={Sur Quelques Équations Différentielles Linéaires Du Second Ordre},
journal={Comptes Rendus de l’Acad\'emie des Sciences Paris},
year={1905},
volume={141},
pages={555–558}
}

\bib{griffithsharris}{book}{
   author={Griffiths, P.},
   author={Harris, J.},
   title={Principles of algebraic geometry},
   series={Pure and Applied Mathematics},
   publisher={Wiley-Interscience [John Wiley \& Sons], New York},
   date={1978},
   pages={xii+813}
}

\bib{inaba2006}{article}{
    AUTHOR = {Inaba, M.A.},
    AUTHOR = {Iwasaki, K.},
    AUTHOR = {Saito, M.H.},
     TITLE = {Dynamics of the sixth {P}ainlev\'{e} equation},
 BOOKTITLE = {Th\'{e}ories asymptotiques et \'{e}quations de {P}ainlev\'{e}},
    SERIES = {S\'{e}min. Congr.},
    VOLUME = {14},
     PAGES = {103--167},
 PUBLISHER = {Soc. Math. France, Paris},
      YEAR = {2006}
}

\bib{iwasaki}{article}{
    AUTHOR = {Iwasaki, K.},
     TITLE = {A modular group action on cubic surfaces and the monodromy of
              the {P}ainlev\'{e} {VI} equation},
   JOURNAL = {Proc. Japan Acad. Ser. A Math. Sci.},
    VOLUME = {78},
      YEAR = {2002},
    NUMBER = {7},
     PAGES = {131--135}
}

\bib{jimbo1982}{article}{
    AUTHOR = {Jimbo, M.},
     TITLE = {Monodromy problem and the boundary condition for some
              {P}ainlev\'{e} equations},
   JOURNAL = {Publ. Res. Inst. Math. Sci.},
    VOLUME = {18},
      YEAR = {1982},
    NUMBER = {3},
     PAGES = {1137--1161}
}

\bib{jimbomiwauenoI}{article}{
    AUTHOR = {Jimbo, M.},
    AUTHOR = {Miwa, T.},
    AUTHOR ={Ueno, K.},
     TITLE = {Monodromy preserving deformation of linear ordinary differential equations with rational coefficients. {I}.
              {G}eneral theory and {$\tau $}-function},
   JOURNAL = {Phys. D},
    VOLUME = {2},
      YEAR = {1981},
    NUMBER = {2},
     PAGES = {306--352}
}

\bib{jimbonagoyasakai}{article}{
    AUTHOR = {Jimbo, M.},
    AUTHOR = {Nagoya, H.},
    AUTHOR = {Sakai, H.},
     TITLE = {C{FT} approach to the {$q$}-{P}ainlev\'{e} {VI} equation},
   JOURNAL = {J. Integrable Syst.},
    VOLUME = {2},
      YEAR = {2017},
    NUMBER = {1},
     PAGES = {27}
} 

\bib{jimbosakai}{article}{
  author={Jimbo, M.},
  author={Sakai, H.},
  title={A $q$-analogue of the sixth {P}ainlev{\'e} equation},
  journal={ Lett. Math. Phys.},
  volume={38},
  pages={145--154},
  year={1996}
  }

\bib{joshiroffelseniv}{article}{
    AUTHOR = {Joshi, N.},
    AUTHOR = {Roffelsen, P.},
     TITLE = {On the {R}iemann-{H}ilbert problem for a {$q$}-difference
              {P}ainlev\'{e} equation},
   JOURNAL = {Comm. Math. Phys.},
    VOLUME = {384},
      YEAR = {2021},
    NUMBER = {1},
     PAGES = {549--585}
}
\bib{gramram}{article}{
   author={Grammaticos, B.},
   author={Ramani, A.},
   title={Discrete Painlev\'{e} equations: a review},
   conference={
      title={Discrete integrable systems},
   },
   book={
      series={Lecture Notes in Phys.},
      volume={644},
      publisher={Springer, Berlin},
   },
   date={2004},
   pages={245--321},
}

\bib{guzzettireview}{article}{
    AUTHOR = {Guzzetti, D.},
     TITLE = {A review of the sixth {P}ainlev\'{e} equation},
   JOURNAL = {Constr. Approx.},
    VOLUME = {41},
      YEAR = {2015},
    NUMBER = {3},
     PAGES = {495--527}}
     
\bib{lisovyy2012}{article}{
    AUTHOR = {Gamayun, O.},
    AUTHOR = {Iorgov, N.},
    AUTHOR ={ Lisovyy, O.},
     TITLE = {Conformal field theory of {P}ainlev\'{e} {VI}},
   JOURNAL = {J. High Energy Phys.},
      YEAR = {2012},
    NUMBER = {10},
     PAGES = {038, front matter + 24},
}

\bib{manin}{article}{
AUTHOR={Manin, Yu. I.},
TITLE={Sixth Painlev\'e Equation, Universal Elliptic Curve and Mirror of P2}, 
Journal={Amer. Math. Soc. Transl.},
VOLUME={186},
YEAR={1998},
PAGES={131–-151}
}

\bib{manoqpvi}{article}{
    AUTHOR = {Mano, T.},
     TITLE = {Asymptotic behaviour around a boundary point of the {$q$}-{P}ainlev\'{e} {VI} equation and its connection problem},
   JOURNAL = {Nonlinearity},
    VOLUME = {23},
      YEAR = {2010},
    NUMBER = {7},
     PAGES = {1585--1608},
}

\bib{mazzoccorational}{article}{
    AUTHOR = {Mazzocco, M.},
     TITLE = {Rational solutions of the {P}ainlev\'{e} {VI} equation},
      NOTE = {Kowalevski Workshop on Mathematical Methods of Regular
              Dynamics (Leeds, 2000)},
   JOURNAL = {J. Phys. A},
    VOLUME = {34},
      YEAR = {2001},
    NUMBER = {11},
     PAGES = {2281--2294}
}

\bib{ohyamaramissualoy}{article}{
    AUTHOR = {Ohyama, Y.},
    AUTHOR = {Ramis, J.P.},
    AUTHOR = {Sauloy, J.},
     TITLE = {The space of monodromy data for the {J}imbo-{S}akai family of
              {$q$}-difference equations},
   JOURNAL = {Ann. Fac. Sci. Toulouse Math. (6)},
    VOLUME = {29},
      YEAR = {2020},
    NUMBER = {5},
     PAGES = {1119--1250},
}

       
\bib{phdroffelsen}{thesis}{
  author={Roffelsen, P.},
  title={On the global asymptotic analysis of a $q$-discrete
        {P}ainlev{\'e} equation},
    type={PhD thesis},
    organization={The University of Sydney},
    note = {Available at \url{https://ses.library.usyd.edu.au/handle/2123/16601}},
    year={2017}
    }

\bib{putsaito2009}{article}{
  author={van~der Put, M.},
  author={Saito, M.},
  title={Moduli spaces for linear
    differential equations and the {P}ainlev\'e
  equations},
journal={Annales de l'Institut Fourier},
volume={59},
pages={2611--2667},
year={2009}
}

\bib{sauloy2002}{collection.article}{
  author={Sauloy, J.},
  title={Galois theory of {$q$}-difference equations: the ``analytical''
  approach},
  booktitle= {Differential equations and the {S}tokes phenomenon},
  pages={277--292},
  publisher={World Sci. Publ., River Edge, NJ},
  year={2002}
}
\bib{sakaicasorati}{article}{
    AUTHOR = {Sakai, H.},
     TITLE = {Casorati determinant solutions for the {$q$}-difference sixth
              {P}ainlev\'{e} equation},
   JOURNAL = {Nonlinearity},
    VOLUME = {11},
      YEAR = {1998},
    NUMBER = {4},
     PAGES = {823--833}
}

\bib{s:01}{article}{
  author={Sakai, H.},
  title={Rational surfaces associated with affine root systems
      and geometry of the {P}ainlev\'e equations},
  journal={Communications in Mathematical Physics},
  volume={220},
  pages={165--229},
  date={2001}
}

\bib{tod}{article}{
AUTHOR={Tod, K.P.}, 
TITLE={Self-dual Einstein Metrics from the Painlev\'e  {VI} Equation}, 
JOURNAL={Phys. Lett.}, 
VOLUME={190a},
YEAR={1994},
PAGES={221–-224}
}


\end{biblist}
\end{bibdiv}

\end{document}